\else \usepackage{latexsym}\fi
\newcounter{sarrow}
\newcounter{sarrow1}
\newcommand\xnrsquigarrow[1]{%
\stepcounter{sarrow1}%
\mathrel{\begin{tikzpicture}[baseline= {( $ (current bounding box.south) + (0,-0.5ex) $ )}]
\node[inner sep=.5ex] (\thesarrow) {$\scriptstyle #1$};
\path[draw,<-,decorate,
  decoration={zigzag,amplitude=0.7pt,segment length=1.2mm,pre=lineto,pre length=4pt}]
    (\thesarrow1.south east) -- (\thesarrow1.south west);
    $\slashedarrowfill@\relbar\relbar/$
    \end{tikzpicture}}%
}
\def\slashedarrowfill@#1#2#3#4#5{%
  $\m@th\thickmuskip0mu\medmuskip\thickmuskip\thinmuskip\thickmuskip
   \relax#5#1\mkern-7mu%
   \cleaders\hbox{$#5\mkern-2mu#2\mkern-2mu$}\hfill
   \mathclap{#3}\mathclap{#2}%
   \cleaders\hbox{$#5\mkern-2mu#2\mkern-2mu$}\hfill
   \mkern-7mu#4$%
}
\def\rightslashedarrowfillb@{%
  \slashedarrowfill@\relbar\relbar/\rightarrow}
\newcommand\xnrightarrow[2][]{%
  \ext@arrow 0055{\rightslashedarrowfillb@}{#1}{#2}}
\def\rightslashedarrowfille@{%
  \slashedarrowfill@\relbar\relbar/\twoheadrightarrow}
\newcommand\xntworightarrow[2][]{%
  \ext@arrow 0055{\rightslashedarrowfille@}{#1}{#2}}
\def\rightslashedarrowfillg@{%
  \slashedarrowfill@\relbar\relbar{\raisebox{.12em}{}}\twoheadrightarrow}
\newcommand\xtworightarrow[2][]{%
  \ext@arrow 0055{\rightslashedarrowfillg@}{#1}{#2}}
\def\rightslashedarrowfillx@{%
  \slashedarrowfill@\Relbar\Relbar/\rightrightarrows}
\newcommand\xnTworightarrow[2][]{%
  \ext@arrow 0055{\rightslashedarrowfillx@}{#1}{#2}}
\def\rightslashedarrowfilly@{%
  \slashedarrowfill@\Relbar\Relbar{\raisebox{.12em}{}}\rightrightarrows}
\newcommand\xTworightarrow[2][]{%
  \ext@arrow 0055{\rightslashedarrowfilly@}{#1}{#2}}
\tikzset{nomorepostaction/.code=\let\tikz@postactions\pgfutil@empty}
\newcommand\black{\ensuremath{\blacktriangleright}}
\newcommand\white{\ensuremath{\vartriangleright}}
  \newcommand\whbl{\white\kern-.1em--\kern-.1em\black}
  \newcommand\blwh{\black\kern-.1em--\kern-.1em\white}
  \newcommand\blbl{\black\kern-.1em--\kern-.1em\black}
  \newcommand\whwh{\white\kern-.1em--\kern-.1em\white}
\newtheorem{theorem}{Theorem}[section]
\newtheorem{definition}[theorem]{Definition}
\newtheorem{proposition}[theorem]{Proposition}
\title[Draft of Reversible Truly Concurrent Process Algebra]
      {Reversible Truly Concurrent Process Algebra}
\author[Yong Wang]
    {Yong Wang\\
     College of Computer Science and Technology,\\
     Faculty of Information Technology,\\
     Beijing University of Technology, Beijing, China\\
     }
\begin{document}
\label{firstpage}

\makecorrespond

\maketitle

\begin{abstract}
We design a reversible version of truly concurrent process algebra CTC which is called RCTC. It has good properties modulo several kinds of strongly forward-reverse truly concurrent bisimulations and weakly forward-reverse truly concurrent bisimulations. These properties include monoid laws, static laws, new expansion law for strongly forward-reverse truly concurrent bisimulations, $\tau$ laws for weakly forward-reverse truly concurrent bisimulations, and congruences for strongly and weakly forward-reverse truly concurrent bisimulations.
\end{abstract}

\begin{keywords}
Reversible Computation; True Concurrency; Behaviorial Equivalence; Bisimilarity
\end{keywords}

\section{Introduction}{\label{int}}

Process algebras are well-known formal theory based on the so-called interleaving bisimilarity, such as CCS \cite{CCS} \cite{CC} and ACP \cite{ALNC} \cite{ACP}. We did some works on truly concurrent process algebra, which is called CTC \cite{CTC}.

Reversible computation is another interesting topic, there are researches \cite{CR} \cite{RCCS2} \cite{TCSR} on reversible computation by use of communication key, based on the so-called forward-reverse bisimilarity.

In this paper, we introduce reversible computation in CTC, which is called RCTC. This paper is organized as follows. In section \ref{bg}, we introduce CTC and truly concurrent bisimilarities. In section \ref{ftc}, we give the so-called forward-reverse truly concurrent bisimilarities on which RCTC is based. We give the syntax and operational semantics of RCTC in section \ref{sos}. We discuss the properties of RCTC based on strongly forward-reverse truly concurrent bisimilarities in section \ref{sftcb}, and the properties of RCTC based on weakly forward-reverse truly concurrent bisimilarities in section \ref{wftcb}. Finally, we conclude this paper in section \ref{con}.

\section{Backgrounds}\label{bg}

In this subsection, we introduce the preliminaries on truly concurrent process algebra CTC \cite{CTC}, which is based on the truly concurrent bisimulation semantics.

\subsection{CTC}\label{CTC}

CTC\cite{CTC} is a calculus of truly concurrent systems. It includes syntax and semantics:

\begin{enumerate}
  \item Its syntax includes actions, process constant, and operators acting between actions, like Prefix, Summation, Composition, Restriction, Relabelling.
  \item Its semantics is based on labeled transition systems, Prefix, Summation, Composition, Restriction, Relabelling have their transition rules. CTC has good semantic properties based on the truly concurrent bisimulations. These properties include monoid laws, static laws, new expansion law for strongly truly concurrent bisimulations, $\tau$ laws for weakly truly concurrent bisimulations, and full congruences for strongly and weakly truly concurrent bisimulations, and also unique solution for recursion.
\end{enumerate}

CTC can be used widely in verification of computer systems with a truly concurrent flavor.

\subsection{Operational Semantics}\label{OS}

The semantics of CTC is based on truly concurrent bisimulation/rooted branching truly concurrent bisimulation equivalences, for the conveniences, we introduce some concepts and conclusions on them.

\begin{definition}[Prime event structure with silent event]\label{PES}
Let $\Lambda$ be a fixed set of labels, ranged over $a,b,c,\cdots$ and $\tau$. A ($\Lambda$-labelled) prime event structure with silent event $\tau$ is a tuple $\mathcal{E}=\langle \mathbb{E}, \leq, \sharp, \lambda\rangle$, where $\mathbb{E}$ is a denumerable set of events, including the silent event $\tau$. Let $\hat{\mathbb{E}}=\mathbb{E}\backslash\{\tau\}$, exactly excluding $\tau$, it is obvious that $\hat{\tau^*}=\epsilon$, where $\epsilon$ is the empty event. Let $\lambda:\mathbb{E}\rightarrow\Lambda$ be a labelling function and let $\lambda(\tau)=\tau$. And $\leq$, $\sharp$ are binary relations on $\mathbb{E}$, called causality and conflict respectively, such that:

\begin{enumerate}
  \item $\leq$ is a partial order and $\lceil e \rceil = \{e'\in \mathbb{E}|e'\leq e\}$ is finite for all $e\in \mathbb{E}$. It is easy to see that $e\leq\tau^*\leq e'=e\leq\tau\leq\cdots\leq\tau\leq e'$, then $e\leq e'$.
  \item $\sharp$ is irreflexive, symmetric and hereditary with respect to $\leq$, that is, for all $e,e',e''\in \mathbb{E}$, if $e\sharp e'\leq e''$, then $e\sharp e''$.
\end{enumerate}

Then, the concepts of consistency and concurrency can be drawn from the above definition:

\begin{enumerate}
  \item $e,e'\in \mathbb{E}$ are consistent, denoted as $e\frown e'$, if $\neg(e\sharp e')$. A subset $X\subseteq \mathbb{E}$ is called consistent, if $e\frown e'$ for all $e,e'\in X$.
  \item $e,e'\in \mathbb{E}$ are concurrent, denoted as $e\parallel e'$, if $\neg(e\leq e')$, $\neg(e'\leq e)$, and $\neg(e\sharp e')$.
\end{enumerate}
\end{definition}

\begin{definition}[Configuration]
Let $\mathcal{E}$ be a PES. A (finite) configuration in $\mathcal{E}$ is a (finite) consistent subset of events $C\subseteq \mathcal{E}$, closed with respect to causality (i.e. $\lceil C\rceil=C$). The set of finite configurations of $\mathcal{E}$ is denoted by $\mathcal{C}(\mathcal{E})$. We let $\hat{C}=C\backslash\{\tau\}$.
\end{definition}

A consistent subset of $X\subseteq \mathbb{E}$ of events can be seen as a pomset. Given $X, Y\subseteq \mathbb{E}$, $\hat{X}\sim \hat{Y}$ if $\hat{X}$ and $\hat{Y}$ are isomorphic as pomsets. In the following of the paper, we say $C_1\sim C_2$, we mean $\hat{C_1}\sim\hat{C_2}$.

\begin{definition}[Pomset transitions and step]
Let $\mathcal{E}$ be a PES and let $C\in\mathcal{C}(\mathcal{E})$, and $\emptyset\neq X\subseteq \mathbb{E}$, if $C\cap X=\emptyset$ and $C'=C\cup X\in\mathcal{C}(\mathcal{E})$, then $C\xrightarrow{X} C'$ is called a pomset transition from $C$ to $C'$. When the events in $X$ are pairwise concurrent, we say that $C\xrightarrow{X}C'$ is a step.
\end{definition}

\begin{definition}[Weak pomset transitions and weak step]
Let $\mathcal{E}$ be a PES and let $C\in\mathcal{C}(\mathcal{E})$, and $\emptyset\neq X\subseteq \hat{\mathbb{E}}$, if $C\cap X=\emptyset$ and $\hat{C'}=\hat{C}\cup X\in\mathcal{C}(\mathcal{E})$, then $C\xRightarrow{X} C'$ is called a weak pomset transition from $C$ to $C'$, where we define $\xRightarrow{e}\triangleq\xrightarrow{\tau^*}\xrightarrow{e}\xrightarrow{\tau^*}$. And $\xRightarrow{X}\triangleq\xrightarrow{\tau^*}\xrightarrow{e}\xrightarrow{\tau^*}$, for every $e\in X$. When the events in $X$ are pairwise concurrent, we say that $C\xRightarrow{X}C'$ is a weak step.
\end{definition}

We will also suppose that all the PESs in this paper are image finite, that is, for any PES $\mathcal{E}$ and $C\in \mathcal{C}(\mathcal{E})$ and $a\in \Lambda$, $\{e\in \mathbb{E}|C\xrightarrow{e} C'\wedge \lambda(e)=a\}$ and $\{e\in\hat{\mathbb{E}}|C\xRightarrow{e} C'\wedge \lambda(e)=a\}$ is finite.

\begin{definition}[Pomset, step bisimulation]\label{PSB}
Let $\mathcal{E}_1$, $\mathcal{E}_2$ be PESs. A pomset bisimulation is a relation $R\subseteq\mathcal{C}(\mathcal{E}_1)\times\mathcal{C}(\mathcal{E}_2)$, such that if $(C_1,C_2)\in R$, and $C_1\xrightarrow{X_1}C_1'$ then $C_2\xrightarrow{X_2}C_2'$, with $X_1\subseteq \mathbb{E}_1$, $X_2\subseteq \mathbb{E}_2$, $X_1\sim X_2$ and $(C_1',C_2')\in R$, and vice-versa. We say that $\mathcal{E}_1$, $\mathcal{E}_2$ are pomset bisimilar, written $\mathcal{E}_1\sim_p\mathcal{E}_2$, if there exists a pomset bisimulation $R$, such that $(\emptyset,\emptyset)\in R$. By replacing pomset transitions with steps, we can get the definition of step bisimulation. When PESs $\mathcal{E}_1$ and $\mathcal{E}_2$ are step bisimilar, we write $\mathcal{E}_1\sim_s\mathcal{E}_2$.
\end{definition}

\begin{definition}[Weak pomset, step bisimulation]\label{WPSB}
Let $\mathcal{E}_1$, $\mathcal{E}_2$ be PESs. A weak pomset bisimulation is a relation $R\subseteq\mathcal{C}(\mathcal{E}_1)\times\mathcal{C}(\mathcal{E}_2)$, such that if $(C_1,C_2)\in R$, and $C_1\xRightarrow{X_1}C_1'$ then $C_2\xRightarrow{X_2}C_2'$, with $X_1\subseteq \hat{\mathbb{E}_1}$, $X_2\subseteq \hat{\mathbb{E}_2}$, $X_1\sim X_2$ and $(C_1',C_2')\in R$, and vice-versa. We say that $\mathcal{E}_1$, $\mathcal{E}_2$ are weak pomset bisimilar, written $\mathcal{E}_1\approx_p\mathcal{E}_2$, if there exists a weak pomset bisimulation $R$, such that $(\emptyset,\emptyset)\in R$. By replacing weak pomset transitions with weak steps, we can get the definition of weak step bisimulation. When PESs $\mathcal{E}_1$ and $\mathcal{E}_2$ are weak step bisimilar, we write $\mathcal{E}_1\approx_s\mathcal{E}_2$.
\end{definition}

\begin{definition}[Posetal product]
Given two PESs $\mathcal{E}_1$, $\mathcal{E}_2$, the posetal product of their configurations, denoted $\mathcal{C}(\mathcal{E}_1)\overline{\times}\mathcal{C}(\mathcal{E}_2)$, is defined as

$$\{(C_1,f,C_2)|C_1\in\mathcal{C}(\mathcal{E}_1),C_2\in\mathcal{C}(\mathcal{E}_2),f:C_1\rightarrow C_2 \textrm{ isomorphism}\}.$$

A subset $R\subseteq\mathcal{C}(\mathcal{E}_1)\overline{\times}\mathcal{C}(\mathcal{E}_2)$ is called a posetal relation. We say that $R$ is downward closed when for any $(C_1,f,C_2),(C_1',f',C_2')\in \mathcal{C}(\mathcal{E}_1)\overline{\times}\mathcal{C}(\mathcal{E}_2)$, if $(C_1,f,C_2)\subseteq (C_1',f',C_2')$ pointwise and $(C_1',f',C_2')\in R$, then $(C_1,f,C_2)\in R$.

For $f:X_1\rightarrow X_2$, we define $f[x_1\mapsto x_2]:X_1\cup\{x_1\}\rightarrow X_2\cup\{x_2\}$, $z\in X_1\cup\{x_1\}$,(1)$f[x_1\mapsto x_2](z)=
x_2$,if $z=x_1$;(2)$f[x_1\mapsto x_2](z)=f(z)$, otherwise. Where $X_1\subseteq \mathbb{E}_1$, $X_2\subseteq \mathbb{E}_2$, $x_1\in \mathbb{E}_1$, $x_2\in \mathbb{E}_2$.
\end{definition}

\begin{definition}[Weakly posetal product]
Given two PESs $\mathcal{E}_1$, $\mathcal{E}_2$, the weakly posetal product of their configurations, denoted $\mathcal{C}(\mathcal{E}_1)\overline{\times}\mathcal{C}(\mathcal{E}_2)$, is defined as

$$\{(C_1,f,C_2)|C_1\in\mathcal{C}(\mathcal{E}_1),C_2\in\mathcal{C}(\mathcal{E}_2),f:\hat{C_1}\rightarrow \hat{C_2} \textrm{ isomorphism}\}.$$

A subset $R\subseteq\mathcal{C}(\mathcal{E}_1)\overline{\times}\mathcal{C}(\mathcal{E}_2)$ is called a weakly posetal relation. We say that $R$ is downward closed when for any $(C_1,f,C_2),(C_1',f,C_2')\in \mathcal{C}(\mathcal{E}_1)\overline{\times}\mathcal{C}(\mathcal{E}_2)$, if $(C_1,f,C_2)\subseteq (C_1',f',C_2')$ pointwise and $(C_1',f',C_2')\in R$, then $(C_1,f,C_2)\in R$.

For $f:X_1\rightarrow X_2$, we define $f[x_1\mapsto x_2]:X_1\cup\{x_1\}\rightarrow X_2\cup\{x_2\}$, $z\in X_1\cup\{x_1\}$,(1)$f[x_1\mapsto x_2](z)=
x_2$,if $z=x_1$;(2)$f[x_1\mapsto x_2](z)=f(z)$, otherwise. Where $X_1\subseteq \hat{\mathbb{E}_1}$, $X_2\subseteq \hat{\mathbb{E}_2}$, $x_1\in \hat{\mathbb{E}}_1$, $x_2\in \hat{\mathbb{E}}_2$. Also, we define $f(\tau^*)=f(\tau^*)$.
\end{definition}

\begin{definition}[(Hereditary) history-preserving bisimulation]\label{HHPB}
A history-preserving (hp-) bisimulation is a posetal relation $R\subseteq\mathcal{C}(\mathcal{E}_1)\overline{\times}\mathcal{C}(\mathcal{E}_2)$ such that if $(C_1,f,C_2)\in R$, and $C_1\xrightarrow{e_1} C_1'$, then $C_2\xrightarrow{e_2} C_2'$, with $(C_1',f[e_1\mapsto e_2],C_2')\in R$, and vice-versa. $\mathcal{E}_1,\mathcal{E}_2$ are history-preserving (hp-)bisimilar and are written $\mathcal{E}_1\sim_{hp}\mathcal{E}_2$ if there exists a hp-bisimulation $R$ such that $(\emptyset,\emptyset,\emptyset)\in R$.

A hereditary history-preserving (hhp-)bisimulation is a downward closed hp-bisimulation. $\mathcal{E}_1,\mathcal{E}_2$ are hereditary history-preserving (hhp-)bisimilar and are written $\mathcal{E}_1\sim_{hhp}\mathcal{E}_2$.
\end{definition}

\begin{definition}[Weak (hereditary) history-preserving bisimulation]\label{WHHPB}
A weak history-preserving (hp-) bisimulation is a weakly posetal relation $R\subseteq\mathcal{C}(\mathcal{E}_1)\overline{\times}\mathcal{C}(\mathcal{E}_2)$ such that if $(C_1,f,C_2)\in R$, and $C_1\xRightarrow{e_1} C_1'$, then $C_2\xRightarrow{e_2} C_2'$, with $(C_1',f[e_1\mapsto e_2],C_2')\in R$, and vice-versa. $\mathcal{E}_1,\mathcal{E}_2$ are weak history-preserving (hp-)bisimilar and are written $\mathcal{E}_1\approx_{hp}\mathcal{E}_2$ if there exists a hp-bisimulation $R$ such that $(\emptyset,\emptyset,\emptyset)\in R$.

A weakly hereditary history-preserving (hhp-)bisimulation is a downward closed weak hp-bisimulation. $\mathcal{E}_1,\mathcal{E}_2$ are weakly hereditary history-preserving (hhp-)bisimilar and are written $\mathcal{E}_1\approx_{hhp}\mathcal{E}_2$.
\end{definition}

\begin{definition}[Congruence]
Let $\Sigma$ be a signature. An equivalence relation $R$ on $\mathcal{T}(\Sigma)$ is a congruence if for each $f\in\Sigma$, if $s_i R t_i$ for $i\in\{1,\cdots,ar(f)\}$, then $f(s_1,\cdots,s_{ar(f)}) R f(t_1,\cdots,t_{ar(f)})$.
\end{definition}

\section{Forward-reverse Truly Concurrent Bisimulations}{\label{ftc}}

\begin{definition}[Forward-reverse (FR) pomset transitions and forward-reverse (FR) step]
Let $\mathcal{E}$ be a PES and let $C\in\mathcal{C}(\mathcal{E})$, $\emptyset\neq X\subseteq \mathbb{E}$, $\mathcal{K}\subseteq \mathbb{N}$, and $X[\mathcal{K}]$ denotes that for each $e\in X$, there is $e[m]\in X[\mathcal{K}]$ where $(m\in\mathcal{K})$, which is called the past of $e$. If $C\cap X[\mathcal{K}]=\emptyset$ and $C'=C\cup X[\mathcal{K}], X\in\mathcal{C}(\mathcal{E})$, then $C\xrightarrow{X} C'$ is called a forward pomset transition from $C$ to $C'$, and $C'\xtworightarrow{X[\mathcal{K}]} C$ is called a reverse pomset transition from $C'$ to $C$. When the events in $X$ are pairwise concurrent, we say that $C\xrightarrow{X}C'$ is a forward step and $C'\xtworightarrow{X[\mathcal{K}]} C$ is a reverse step.
\end{definition}

\begin{definition}[Weak forward-reverse (FR) pomset transitions and weak forward-reverse (FR) step]
Let $\mathcal{E}$ be a PES and let $C\in\mathcal{C}(\mathcal{E})$, and $\emptyset\neq X\subseteq \hat{\mathbb{E}}$, $\mathcal{K}\subseteq \mathbb{N}$, and $X[\mathcal{K}]$ denotes that for each $e\in X$, there is $e[m]\in X[\mathcal{K}]$ where $(m\in\mathcal{K})$, which is called the past of $e$. If $C\cap X[\mathcal{K}]=\emptyset$ and $\hat{C'}=\hat{C}\cup X[\mathcal{K}], X\in\mathcal{C}(\mathcal{E})$, then $C\xRightarrow{X} C'$ is called a weak forward pomset transition from $C$ to $C'$, where we define $\xRightarrow{e}\triangleq\xrightarrow{\tau^*}\xrightarrow{e}\xrightarrow{\tau^*}$ and $\xRightarrow{X}\triangleq\xrightarrow{\tau^*}\xrightarrow{e}\xrightarrow{\tau^*}$, for every $e\in X$. And $C'\xTworightarrow{X[\mathcal{K}]} C$ is called a weak reverse pomset transition from $C'$ to $C$, where we define $\xTworightarrow{e[m]}\triangleq\xtworightarrow{\tau^*}\xtworightarrow{e[m]}\xtworightarrow{\tau^*}$, $\xTworightarrow{X[\mathcal{K}]}\triangleq\xtworightarrow{\tau^*}\xtworightarrow{e[m]} \xtworightarrow{\tau^*}$, for every $e\in X$ and $m\in\mathcal{K}$. When the events in $X$ are pairwise concurrent, we say that $C\xRightarrow{X}C'$ is a weak forward step and $C'\xTworightarrow{X[\mathcal{K}]} C$ is a weak reverse step.
\end{definition}

We will also suppose that all the PESs in this paper are image finite, that is, for any PES $\mathcal{E}$ and $C\in \mathcal{C}(\mathcal{E})$, and $a\in \Lambda$, $\{e\in \mathbb{E}|C\xrightarrow{e} C'\wedge \lambda(e)=a\}$ and $\{e\in\hat{\mathbb{E}}|C\xRightarrow{e} C'\wedge \lambda(e)=a\}$, and $a\in \Lambda$, $\{e\in \mathbb{E}|C'\xtworightarrow{e[m]} C\wedge \lambda(e)=a\}$ and $\{e\in\hat{\mathbb{E}}|C'\xTworightarrow{e[m]} C\wedge \lambda(e)=a\}$ are finite.

\begin{definition}[Forward-reverse (FR) pomset, step bisimulation]\label{FRPSB}
Let $\mathcal{E}_1$, $\mathcal{E}_2$ be PESs. An FR pomset bisimulation is a relation $R\subseteq\mathcal{C}(\mathcal{E}_1)\times\mathcal{C}(\mathcal{E}_2)$, such that (1) if $(C_1,C_2)\in R$, and $C_1\xrightarrow{X_1}C_1'$ then $C_2\xrightarrow{X_2}C_2'$, with $X_1\subseteq \mathbb{E}_1$, $X_2\subseteq \mathbb{E}_2$, $X_1\sim X_2$ and $(C_1',C_2')\in R$, and vice-versa; (2) if $(C_1',C_2')\in R$, and $C_1'\xtworightarrow{X_1[\mathcal{K}_1]}C_1$ then $C_2'\xtworightarrow{X_2[\mathcal{K}_2]}C_2$, with $X_1\subseteq \mathbb{E}_1$, $X_2\subseteq \mathbb{E}_2$, $\mathcal{K}_1,\mathcal{K}_2\subseteq\mathbb{N}$, $X_1\sim X_2$ and $(C_1,C_2)\in R$, and vice-versa. We say that $\mathcal{E}_1$, $\mathcal{E}_2$ are FR pomset bisimilar, written $\mathcal{E}_1\sim_p^{fr}\mathcal{E}_2$, if there exists an FR pomset bisimulation $R$, such that $(\emptyset,\emptyset)\in R$. By replacing FR pomset transitions with FR steps, we can get the definition of FR step bisimulation. When PESs $\mathcal{E}_1$ and $\mathcal{E}_2$ are FR step bisimilar, we write $\mathcal{E}_1\sim_s^{fr}\mathcal{E}_2$.
\end{definition}

\begin{definition}[Weak forward-reverse (FR) pomset, step bisimulation]\label{FRWPSB}
Let $\mathcal{E}_1$, $\mathcal{E}_2$ be PESs. A weak FR pomset bisimulation is a relation $R\subseteq\mathcal{C}(\mathcal{E}_1)\times\mathcal{C}(\mathcal{E}_2)$, such that (1) if $(C_1,C_2)\in R$, and $C_1\xRightarrow{X_1}C_1'$ then $C_2\xRightarrow{X_2}C_2'$, with $X_1\subseteq \hat{\mathbb{E}_1}$, $X_2\subseteq \hat{\mathbb{E}_2}$, $X_1\sim X_2$ and $(C_1',C_2')\in R$, and vice-versa; (2) if $(C_1',C_2')\in R$, and $C_1'\xTworightarrow{X_1[\mathcal{K}_1]}C_1$ then $C_2'\xTworightarrow{X_2[\mathcal{K}_2]}C_2$, with $X_1\subseteq \hat{\mathbb{E}_1}$, $X_2\subseteq \hat{\mathbb{E}_2}$, $\mathcal{K}_1,\mathcal{K}_2\subseteq\mathbb{N}$, $X_1\sim X_2$ and $(C_1,C_2)\in R$, and vice-versa. We say that $\mathcal{E}_1$, $\mathcal{E}_2$ are weak FR pomset bisimilar, written $\mathcal{E}_1\approx_p^{fr}\mathcal{E}_2$, if there exists a weak FR pomset bisimulation $R$, such that $(\emptyset,\emptyset)\in R$. By replacing weak FR pomset transitions with weak FR steps, we can get the definition of weak FR step bisimulation. When PESs $\mathcal{E}_1$ and $\mathcal{E}_2$ are weak FR step bisimilar, we write $\mathcal{E}_1\approx_s^{fr}\mathcal{E}_2$.
\end{definition}

\begin{definition}[Forward-reverse (FR) (hereditary) history-preserving bisimulation]\label{FRHHPB}
An FR history-preserving (hp-) bisimulation is a posetal relation $R\subseteq\mathcal{C}(\mathcal{E}_1)\overline{\times}\mathcal{C}(\mathcal{E}_2)$ such that (1) if $(C_1,f,C_2)\in R$, and $C_1\xrightarrow{e_1} C_1'$, then $C_2\xrightarrow{e_2} C_2'$, with $(C_1',f[e_1\mapsto e_2],C_2')\in R$, and vice-versa, (2) if $(C_1',f',C_2')\in R$, and $C_1'\xtworightarrow{e_1[m]} C_1$, then $C_2'\xtworightarrow{e_2[n]} C_2$, with $(C_1,f'[e_1[m]\mapsto e_2[n]],C_2)\in R$, and vice-versa. $\mathcal{E}_1,\mathcal{E}_2$ are FR history-preserving (hp-) bisimilar and are written $\mathcal{E}_1\sim_{hp}^{fr}\mathcal{E}_2$ if there exists an FR hp-bisimulation $R$ such that $(\emptyset,\emptyset,\emptyset)\in R$.

An FR hereditary history-preserving (hhp-)bisimulation is a downward closed FR hp-bisimulation. $\mathcal{E}_1,\mathcal{E}_2$ are FR hereditary history-preserving (hhp-)bisimilar and are written $\mathcal{E}_1\sim_{hhp}^{fr}\mathcal{E}_2$.
\end{definition}

\begin{definition}[Weak forward-reverse (FR) (hereditary) history-preserving bisimulation]\label{FRWHHPB}
A weak FR history-preserving (hp-) bisimulation is a weakly posetal relation $R\subseteq\mathcal{C}(\mathcal{E}_1)\overline{\times}\mathcal{C}(\mathcal{E}_2)$ such that (1) if $(C_1,f,C_2)\in R$, and $C_1\xRightarrow{e_1} C_1'$, then $C_2\xRightarrow{e_2} C_2'$, with $(C_1',f[e_1\mapsto e_2],C_2')\in R$, and vice-versa, (2) if $(C_1',f',C_2')\in R$, and $C_1'\xTworightarrow{e_1[m]} C_1$, then $C_2'\xTworightarrow{e_2[n]} C_2$, with $(C_1,f'[e_1[m]\mapsto e_2[n]],C_2)\in R$, and vice-versa. $\mathcal{E}_1,\mathcal{E}_2$ are weak FR history-preserving (hp-) bisimilar and are written $\mathcal{E}_1\approx_{hp}^{fr}\mathcal{E}_2$ if there exists a weak FR hp-bisimulation $R$ such that $(\emptyset,\emptyset,\emptyset)\in R$.

A weak FR hereditary history-preserving (hhp-) bisimulation is a downward closed weak FR hp-bisimulation. $\mathcal{E}_1,\mathcal{E}_2$ are weak FR hereditary history-preserving (hhp-) bisimilar and are written $\mathcal{E}_1\approx_{hhp}^{fr}\mathcal{E}_2$.
\end{definition}

\section{Syntax and Operational Semantics}\label{sos}

We assume an infinite set $\mathcal{N}$ of (action or event) names, and use $a,b,c,\cdots$ to range over $\mathcal{N}$. We denote by $\overline{\mathcal{N}}$ the set of co-names and let $\overline{a},\overline{b},\overline{c},\cdots$ range over $\overline{\mathcal{N}}$. Then we set $\mathcal{L}=\mathcal{N}\cup\overline{\mathcal{N}}$ as the set of labels, and use $l,\overline{l}$ to range over $\mathcal{L}$. We extend complementation to $\mathcal{L}$ such that $\overline{\overline{a}}=a$. Let $\tau$ denote the silent step (internal action or event) and define $Act=\mathcal{L}\cup\{\tau\}\cup\mathcal{L}[\mathcal{K}]$ to be the set of actions, $\alpha,\beta$ range over $Act$. And $K,L$ are used to stand for subsets of $\mathcal{L}$ and $\overline{L}$ is used for the set of complements of labels in $L$. A relabelling function $f$ is a function from $\mathcal{L}$ to $\mathcal{L}$ such that $f(\overline{l})=\overline{f(l)}$. By defining $f(\tau)=\tau$, we extend $f$ to $Act$. We write $\mathcal{P}$ for the set of processes. Sometimes, we use $I,J$ to stand for an indexing set, and we write $E_i:i\in I$ for a family of expressions indexed by $I$. $Id_D$ is the identity function or relation over set $D$. 

For each process constant schema $A$, a defining equation of the form

$$A\overset{\text{def}}{=}P$$

is assumed, where $P$ is a process.

\subsection{Syntax}

We use the Prefix $.$ to model the causality relation $\leq$ in true concurrency, the Summation $+$ to model the conflict relation $\sharp$ in true concurrency, and the Composition $\parallel$ to explicitly model concurrent relation in true concurrency. And we follow the conventions of process algebra.

\begin{definition}[Syntax]\label{syntax}
Reversible truly concurrent processes RCTC are defined inductively by the following formation rules:

\begin{enumerate}
  \item $A\in\mathcal{P}$;
  \item $\textbf{nil}\in\mathcal{P}$;
  \item if $P\in\mathcal{P}$, then the Prefix $\alpha.P\in\mathcal{P}$ and $P.\alpha[m]\in\mathcal{P}$, for $\alpha\in Act$ and $m\in\mathcal{K}$;
  \item if $P,Q\in\mathcal{P}$, then the Summation $P+Q\in\mathcal{P}$;
  \item if $P,Q\in\mathcal{P}$, then the Composition $P\parallel Q\in\mathcal{P}$;
  \item if $P\in\mathcal{P}$, then the Prefix $(\alpha_1\parallel\cdots\parallel\alpha_n).P\in\mathcal{P}\quad(n\in I)$ and $P.(\alpha_1[m]\parallel\cdots\parallel\alpha_n[m])\in\mathcal{P}\quad(n\in I)$, for $\alpha_,\cdots,\alpha_n\in Act$ and $m\in\mathcal{K}$;
  \item if $P\in\mathcal{P}$, then the Restriction $P\setminus L\in\mathcal{P}$ with $L\in\mathcal{L}$;
  \item if $P\in\mathcal{P}$, then the Relabelling $P[f]\in\mathcal{P}$.
\end{enumerate}

The standard BNF grammar of syntax of RCTC can be summarized as follows:

$P::=A\quad|\quad\textbf{nil}\quad|\quad\alpha.P\quad|\quad P.\alpha[m]\quad|\quad P+P\quad |\quad P\parallel P\quad |\quad (\alpha_1\parallel\cdots\parallel\alpha_n).P| \quad P.(\alpha_1[m]\parallel\cdots\parallel\alpha_n[m]) \quad |\quad P\setminus L\quad |\quad P[f].$
\end{definition}

\subsection{Operational Semantics}

The operational semantics is defined by LTSs (labelled transition systems), and it is detailed by the following definition.

\begin{definition}[Semantics]\label{semantics}
The operational semantics of CTC corresponding to the syntax in Definition \ref{syntax} is defined by a series of transition rules, they are shown in Table \ref{FTRForPS}, \ref{RTRForPS}, \ref{FPTRForPS}, \ref{RPTRForPS}, \ref{FTRForCom}, \ref{RTRForCom}, \ref{FTRForRRC} and \ref{RTRForRRC}. And the predicate $\xrightarrow{\alpha}\alpha[m]$ represents successful forward termination after execution of the action $\alpha$, the predicate $\xtworightarrow{\alpha[m]}\alpha$ represents successful reverse termination after execution of the event $\alpha[m]$, the the predicate $\textrm{Std(P)}$ represents that $p$ is a standard process containing no past events, the the predicate $\textrm{NStd(P)}$ represents that $P$ is a process full of past events.
\end{definition}

The forward transition rules for Prefix and Summation are shown in Table \ref{FTRForPS}.

\begin{center}
    \begin{table}
        $$\frac{}{\alpha\xrightarrow{\alpha}\alpha[m]}$$

        $$\frac{P\xrightarrow{\alpha}\alpha[m] \quad \alpha\notin Q}{P+Q\xrightarrow{\alpha}\alpha[m]+Q}
        \quad\frac{P\xrightarrow{\alpha}P' \quad \alpha\notin Q}{P+Q\xrightarrow{\alpha}P'+Q}$$
        $$\frac{Q\xrightarrow{\alpha}\alpha[m] \quad \alpha\notin P}{P+Q\xrightarrow{\alpha}P+\alpha[m]}
        \quad\frac{Q\xrightarrow{\alpha}Q'\quad \alpha\notin P}{P+Q\xrightarrow{\alpha}P+Q'}$$

        $$\frac{P\xrightarrow{\alpha}\alpha[m]\quad Q\xrightarrow{\alpha}\alpha[m]}{P+Q\xrightarrow{\alpha}\alpha[m]+\alpha[m]}
        \quad\frac{P\xrightarrow{\alpha}P'\quad Q\xrightarrow{\alpha}\alpha[m]}{P+Q\xrightarrow{\alpha}P'+\alpha[m]}$$
        $$\frac{P\xrightarrow{\alpha}\alpha[m]\quad Q\xrightarrow{\alpha}Q'}{P+Q\xrightarrow{\alpha}\alpha[m]+Q'}
        \quad\frac{P\xrightarrow{\alpha}P'\quad Q\xrightarrow{\alpha}Q'}{P+Q\xrightarrow{\alpha}P'+Q'}$$

        $$\frac{P\xrightarrow{\alpha}\alpha[m]\quad\textrm{Std}(Q)}{P. Q\xrightarrow{\alpha} \alpha[m]. Q} \quad\frac{P\xrightarrow{\alpha}P' \quad \textrm{Std}(Q)}{P. Q\xrightarrow{\alpha}P'. Q}$$
        $$\frac{Q\xrightarrow{\beta}\beta[n]\quad \textrm{NStd}(P)}{P. Q\xrightarrow{\beta}P. \beta[n]} \quad\frac{Q\xrightarrow{\beta}Q'\quad \textrm{NStd}(P)}{P. Q\xrightarrow{\beta}P. Q'}$$
        \caption{Forward transition rules of Prefix and Summation}
        \label{FTRForPS}
    \end{table}
\end{center}

The reverse transition rules for Prefix and Summation are shown in Table \ref{RTRForPS}.

\begin{center}
    \begin{table}
        $$\frac{}{\alpha[m]\xtworightarrow{\alpha[m]}\alpha}$$

        $$\frac{P\xtworightarrow{\alpha[m]}\alpha\quad \alpha\notin Q}{P+Q\xtworightarrow{\alpha[m]}\alpha+Q}
        \quad\frac{P\xtworightarrow{\alpha[m]}P' \quad \alpha\notin Q}{P+Q\xtworightarrow{\alpha[m]}P'+Q}$$
        $$\frac{Q\xtworightarrow{\alpha[m]}\alpha \quad \alpha\notin P}{P+Q\xtworightarrow{\alpha[m]}P+\alpha}
        \quad\frac{Q\xtworightarrow{\alpha[m]}Q' \quad \alpha\notin P}{P+Q\xtworightarrow{\alpha[m]}P+Q'}$$

        $$\frac{P\xtworightarrow{\alpha[m]}\alpha\quad Q\xtworightarrow{\alpha[m]}\alpha}{P+Q\xtworightarrow{\alpha[m]}\alpha+\alpha}
        \quad\frac{P\xtworightarrow{\alpha[m]}P'\quad Q\xtworightarrow{\alpha[m]}\alpha}{P+Q\xtworightarrow{\alpha[m]}P'+\alpha}$$
        $$\frac{P\xtworightarrow{\alpha[m]}\alpha\quad Q\xtworightarrow{\alpha[m]}Q'}{P+Q\xtworightarrow{\alpha[m]}\alpha+Q'}
        \quad\frac{P\xtworightarrow{\alpha[m]}P'\quad Q\xtworightarrow{\alpha[m]}Q'}{P+Q\xtworightarrow{\alpha[m]}P'+Q'}$$

        $$\frac{P\xtworightarrow{\alpha[m]}\alpha \quad \textrm{Std}(Q)}{P. Q\xtworightarrow{\alpha[m]} \alpha. Q} \quad\frac{P\xtworightarrow{\alpha[m]}P'\quad \textrm{Std}(Q)}{P. Q\xtworightarrow{\alpha[m]}P'. Q}$$
        $$\frac{Q\xtworightarrow{\beta[n]}\beta \quad \textrm{NStd}(P)}{P. Q\xtworightarrow{\beta[n]}P. \beta}\quad \frac{Q\xtworightarrow{\beta[n]}Q' \quad \textrm{NStd}(P)}{P. Q\xtworightarrow{\beta[n]}P. Q'}$$
        \caption{Reverse transition rules of Prefix and Summation}
        \label{RTRForPS}
    \end{table}
\end{center}

The forward and reverse pomset transition rules of Prefix and Summation are shown in Table \ref{FPTRForPS} and Table \ref{RPTRForPS}, different to single event transition rules in Table \ref{FTRForPS} and Table \ref{RTRForPS}, the forward and reverse pomset transition rules are labeled by pomsets, which are defined by causality $.$ and conflict $+$.

\begin{center}
    \begin{table}
        $$\frac{}{p\xrightarrow{p}p[\mathcal{K}]}$$

        $$\frac{P\xrightarrow{p}p[\mathcal{K}] \quad p\nsubseteq Q}{P+Q\xrightarrow{p}p[\mathcal{K}]+Q}
        \quad\frac{P\xrightarrow{p}P' \quad p\nsubseteq Q}{P+Q\xrightarrow{p}P'+Q}$$
        $$\frac{Q\xrightarrow{q}q[\mathcal{J}] \quad q\nsubseteq P}{P+Q\xrightarrow{q}P+q[\mathcal{J}]}
        \quad\frac{Q\xrightarrow{q}Q'\quad q\nsubseteq P}{P+Q\xrightarrow{q}P+Q'}$$

        $$\frac{P\xrightarrow{p}p[\mathcal{K}]\quad Q\xrightarrow{p}p[\mathcal{K}]}{P+Q\xrightarrow{p}p[\mathcal{K}]+p[\mathcal{K}]}
        \quad\frac{P\xrightarrow{p}P'\quad Q\xrightarrow{p}p[\mathcal{K}]}{P+Q\xrightarrow{p}P'+p[\mathcal{K}]}$$
        $$\frac{P\xrightarrow{p}p[\mathcal{K}]\quad Q\xrightarrow{p}Q'}{P+Q\xrightarrow{p}p[\mathcal{K}]+Q'}
        \quad\frac{P\xrightarrow{p}P'\quad Q\xrightarrow{p}Q'}{P+Q\xrightarrow{p}P'+Q'}$$

        $$\frac{P\xrightarrow{p}p[\mathcal{K}]\quad\textrm{Std}(Q)}{P. Q\xrightarrow{p} p[\mathcal{K}]. Q}(p\subseteq P) \quad\frac{P\xrightarrow{p}P' \quad \textrm{Std}(Q)}{P. Q\xrightarrow{p}P'. Q}(p\subseteq P)$$
        $$\frac{Q\xrightarrow{q}q[\mathcal{J}]\quad \textrm{NStd}(P)}{P. Q\xrightarrow{q}P. q[\mathcal{J}]}(q\subseteq Q) \quad\frac{Q\xrightarrow{q}Q'\quad \textrm{NStd}(P)}{P. Q\xrightarrow{q}P. Q'}(q\subseteq Q)$$
        \caption{Forward pomset transition rules of Prefix and Summation}
        \label{FPTRForPS}
    \end{table}
\end{center}

\begin{center}
    \begin{table}
        $$\frac{}{p[\mathcal{K}]\xtworightarrow{p[\mathcal{K}]}p}$$

        $$\frac{P\xtworightarrow{p[\mathcal{K}]}p\quad p\nsubseteq Q}{P+Q\xtworightarrow{p[\mathcal{K}]}p+Q}
        \quad\frac{P\xtworightarrow{p[\mathcal{K}]}P' \quad p\nsubseteq Q}{P+Q\xtworightarrow{p[\mathcal{K}]}P'+Q}$$
        $$\frac{Q\xtworightarrow{q[\mathcal{J}]}q \quad q\nsubseteq P}{P+Q\xtworightarrow{q[\mathcal{J}]}P+q}
        \quad\frac{Q\xtworightarrow{q[\mathcal{J}]}Q' \quad q\nsubseteq P}{P+Q\xtworightarrow{q[\mathcal{J}]}P+Q'}$$

        $$\frac{P\xtworightarrow{p[\mathcal{K}]}p\quad Q\xtworightarrow{p[\mathcal{K}]}p}{P+Q\xtworightarrow{p[\mathcal{K}]}p+p}
        \quad\frac{P\xtworightarrow{p[\mathcal{K}]}P'\quad Q\xtworightarrow{p[\mathcal{K}]}p}{P+Q\xtworightarrow{p[\mathcal{K}]}P'+p}$$
        $$\frac{P\xtworightarrow{p[\mathcal{K}]}p\quad Q\xtworightarrow{p[\mathcal{K}]}Q'}{P+Q\xtworightarrow{p[\mathcal{K}]}p+Q'}
        \quad\frac{P\xtworightarrow{p[\mathcal{K}]}P'\quad Q\xtworightarrow{p[\mathcal{K}]}Q'}{P+Q\xtworightarrow{p[\mathcal{K}]}P'+Q'}$$

        $$\frac{P\xtworightarrow{p[\mathcal{K}]}p \quad \textrm{Std}(Q)}{P. Q\xtworightarrow{p[\mathcal{K}]} p. Q}(p\subseteq P) \quad\frac{P\xtworightarrow{p[\mathcal{K}]}P'\quad \textrm{Std}(Q)}{P. Q\xtworightarrow{p[\mathcal{K}]}P'. Q}(p\subseteq P)$$
        $$\frac{Q\xtworightarrow{q[\mathcal{J}]}q \quad \textrm{NStd}(P)}{P. Q\xtworightarrow{q[\mathcal{J}]}P. q}(q\subseteq Q)\quad \frac{Q\xtworightarrow{q[\mathcal{J}]}Q' \quad \textrm{NStd}(P)}{P\cdot Q\xtworightarrow{q[\mathcal{J}]}P. Q'}(q\subseteq Q)$$
        \caption{Reverse pomset transition rules of Prefix and Summation}
        \label{RPTRForPS}
    \end{table}
\end{center}

The forward transition rules for Composition are shown in Table \ref{FTRForCom}.

\begin{center}
    \begin{table}
        $$\frac{P\xrightarrow{\alpha}P'\quad Q\nrightarrow}{P\parallel Q\xrightarrow{\alpha}P'\parallel Q}$$
        $$\frac{Q\xrightarrow{\alpha}Q'\quad P\nrightarrow}{P\parallel Q\xrightarrow{\alpha}P\parallel Q'}$$
        $$\frac{P\xrightarrow{\alpha}P'\quad Q\xrightarrow{\beta}Q'}{P\parallel Q\xrightarrow{\{\alpha,\beta\}}P'\parallel Q'}\quad (\beta\neq\overline{\alpha})$$
        $$\frac{P\xrightarrow{l}P'\quad Q\xrightarrow{\overline{l}}Q'}{P\parallel Q\xrightarrow{\tau}P'\parallel Q'}$$
        \caption{Forward transition rules of Composition}
        \label{FTRForCom}
    \end{table}
\end{center}

The reverse transition rules for Composition are shown in Table \ref{RTRForCom}.

\begin{center}
    \begin{table}
        $$\frac{P\xtworightarrow{\alpha[m]}P'\quad Q\xntworightarrow{}}{P\parallel Q\xtworightarrow{\alpha[m]}P'\parallel Q}$$
        $$\frac{Q\xtworightarrow{\alpha[m]}Q'\quad P\xntworightarrow{}}{P\parallel Q\xtworightarrow{\alpha[m]}P\parallel Q'}$$
        $$\frac{P\xtworightarrow{\alpha[m]}P'\quad Q\xtworightarrow{\beta[m]}Q'}{P\parallel Q\xtworightarrow{\{\alpha[m],\beta[m]\}}P'\parallel Q'}\quad (\beta\neq\overline{\alpha})$$
        $$\frac{P\xtworightarrow{l[m]}P'\quad Q\xtworightarrow{\overline{l}[m]}Q'}{P\parallel Q\xtworightarrow{\tau}P'\parallel Q'}$$
        \caption{Reverse transition rules of Composition}
        \label{RTRForCom}
    \end{table}
\end{center}

The forward transition rules for Restriction, Relabelling and Constants are shown in Table \ref{FTRForRRC}.

\begin{center}
    \begin{table}
        $$\frac{P\xrightarrow{\alpha}P'}{P\setminus L\xrightarrow{\alpha}P'\setminus L}\quad (\alpha,\overline{\alpha}\notin L)$$
        $$\frac{P\xrightarrow{\{\alpha_1,\cdots,\alpha_n\}}P'}{P\setminus L\xrightarrow{\{\alpha_1,\cdots,\alpha_n\}}P'\setminus L}\quad (\alpha_1,\overline{\alpha_1},\cdots,\alpha_n,\overline{\alpha_n}\notin L)$$
        $$\frac{P\xrightarrow{\alpha}P'}{P[f]\xrightarrow{f(\alpha)}P'[f]}$$
        $$\frac{P\xrightarrow{\{\alpha_1,\cdots,\alpha_n\}}P'}{P[f]\xrightarrow{\{f(\alpha_1),\cdots,f(\alpha_n)\}}P'[f]}$$
        $$\frac{P\xrightarrow{\alpha}P'}{A\xrightarrow{\alpha}P'}\quad (A\overset{\text{def}}{=}P)$$
        $$\frac{P\xrightarrow{\{\alpha_1,\cdots,\alpha_n\}}P'}{A\xrightarrow{\{\alpha_1,\cdots,\alpha_n\}}P'}\quad (A\overset{\text{def}}{=}P)$$
        \caption{Forward transition rules of Restriction, Relabelling and Constants}
        \label{FTRForRRC}
    \end{table}
\end{center}

The reverse transition rules for Restriction, Relabelling and Constants are shown in Table \ref{RTRForRRC}.

\begin{center}
    \begin{table}
        $$\frac{P\xtworightarrow{\alpha[m]}P'}{P\setminus L\xtworightarrow{\alpha[m]}P'\setminus L}\quad (\alpha,\overline{\alpha}\notin L)$$
        $$\frac{P\xtworightarrow{\{\alpha_1[m],\cdots,\alpha_n[m]\}}P'}{P\setminus L\xtworightarrow{\{\alpha_1[m],\cdots,\alpha_n[m]\}}P'\setminus L}\quad (\alpha_1,\overline{\alpha_1},\cdots,\alpha_n,\overline{\alpha_n}\notin L)$$
        $$\frac{P\xtworightarrow{\alpha[m]}P'}{P[f]\xtworightarrow{f(\alpha[m])}P'[f]}$$
        $$\frac{P\xtworightarrow{\{\alpha_1[m],\cdots,\alpha_n[m]\}}P'}{P[f]\xtworightarrow{\{f(\alpha_1)[m],\cdots,f(\alpha_n)[m]\}}P'[f]}$$
        $$\frac{P\xtworightarrow{\alpha[m]}P'}{A\xtworightarrow{\alpha[m]}P'}\quad (A\overset{\text{def}}{=}P)$$
        $$\frac{P\xtworightarrow{\{\alpha_1[m],\cdots,\alpha_n[m]\}}P'}{A\xtworightarrow{\{\alpha_1[m],\cdots,\alpha_n[m]\}}P'}\quad (A\overset{\text{def}}{=}P)$$
        \caption{Reverse transition rules of Restriction, Relabelling and Constants}
        \label{RTRForRRC}
    \end{table}
\end{center}

\subsection{Properties of Transitions}

\begin{definition}[Sorts]\label{sorts}
Given the sorts $\mathcal{L}(A)$ and $\mathcal{L}(X)$ of constants and variables, we define $\mathcal{L}(P)$ inductively as follows.

\begin{enumerate}
  \item $\mathcal{L}(l.P)=\{l\}\cup\mathcal{L}(P)$;
  \item $\mathcal{L}(P.l[m])=\{l\}\cup\mathcal{L}(P)$;
  \item $\mathcal{L}((l_1\parallel \cdots\parallel l_n).P)=\{l_1,\cdots,l_n\}\cup\mathcal{L}(P)$;
  \item $\mathcal{L}(P.(l_1[m]\parallel \cdots\parallel l_n[m]))=\{l_1,\cdots,l_n\}\cup\mathcal{L}(P)$;
  \item $\mathcal{L}(\tau.P)=\mathcal{L}(P)$;
  \item $\mathcal{L}(P+Q)=\mathcal{L}(P)\cup\mathcal{L}(Q)$;
  \item $\mathcal{L}(P\parallel Q)=\mathcal{L}(P)\cup\mathcal{L}(Q)$;
  \item $\mathcal{L}(P\setminus L)=\mathcal{L}(P)-(L\cup\overline{L})$;
  \item $\mathcal{L}(P[f])=\{f(l):l\in\mathcal{L}(P)\}$;
  \item for $A\overset{\text{def}}{=}P$, $\mathcal{L}(P)\subseteq\mathcal{L}(A)$.
\end{enumerate}
\end{definition}

Now, we present some properties of the transition rules defined in Definition \ref{semantics}.

\begin{proposition}
If $P\xrightarrow{\alpha}P'$, then
\begin{enumerate}
  \item $\alpha\in\mathcal{L}(P)\cup\{\tau\}$;
  \item $\mathcal{L}(P')\subseteq\mathcal{L}(P)$.
\end{enumerate}

If $P\xrightarrow{\{\alpha_1,\cdots,\alpha_n\}}P'$, then
\begin{enumerate}
  \item $\alpha_1,\cdots,\alpha_n\in\mathcal{L}(P)\cup\{\tau\}$;
  \item $\mathcal{L}(P')\subseteq\mathcal{L}(P)$.
\end{enumerate}
\end{proposition}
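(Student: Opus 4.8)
The plan is to prove the two clauses \emph{simultaneously} by induction on the derivation of the transition $P\xrightarrow{\alpha}P'$ (respectively $P\xrightarrow{\{\alpha_1,\ldots,\alpha_n\}}P'$) using the forward transition rules of Tables \ref{FTRForPS}, \ref{FPTRForPS}, \ref{FTRForCom} and \ref{FTRForRRC}. A simultaneous induction over event-labelled and set/pomset-labelled transitions is forced on us: the communication rule of Composition turns two single-event premises into a two-event conclusion, and the Restriction and Relabelling rules for $\{\alpha_1,\ldots,\alpha_n\}$ have set-labelled premises, so neither clause can be closed on its own. Before starting I would fix two reading conventions, implicit in Definition \ref{sorts}: annotating a prefix $\alpha$ with a key $[m]$ does not change the sort (so $\mathcal{L}(\alpha[m])=\mathcal{L}(\alpha)$, which is $\{\alpha\}$ if $\alpha\in\mathcal{L}$ and $\emptyset$ if $\alpha=\tau$, and likewise $\mathcal{L}(\alpha[m].Q)=\{\alpha\}\cup\mathcal{L}(Q)$), and $\mathcal{L}(P.Q)=\mathcal{L}(P)\cup\mathcal{L}(Q)$ for a general sequential composition $P.Q$.

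The base cases are the axioms $\alpha\xrightarrow{\alpha}\alpha[m]$ and $p\xrightarrow{p}p[\mathcal{K}]$, where both claims are immediate from $\mathcal{L}(\alpha[m])=\mathcal{L}(\alpha)$. In the inductive step I would go rule by rule. The Summation rules use $\mathcal{L}(P+Q)=\mathcal{L}(P)\cup\mathcal{L}(Q)$ together with the induction hypothesis; the only point worth noting is that when a premise has the form $P\xrightarrow{\alpha}\alpha[m]$, clause 2 of the hypothesis gives $\mathcal{L}(\alpha[m])=\{\alpha\}\subseteq\mathcal{L}(P)$, so summands $\alpha[m]$ contribute no new labels. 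The Prefix-sequencing rules ($P.Q\xrightarrow{\alpha}\alpha[m].Q$, $P.Q\xrightarrow{\beta}P.\beta[n]$, and their pomset analogues) are handled by the same bookkeeping plus $\mathcal{L}(P.Q)=\mathcal{L}(P)\cup\mathcal{L}(Q)$. For Restriction, if $P\setminus L\xrightarrow{\alpha}P'\setminus L$ with $\alpha,\overline{\alpha}\notin L$, the hypothesis gives $\alpha\in\mathcal{L}(P)\cup\{\tau\}$; when $\alpha\neq\tau$ the side condition forces $\alpha\notin L\cup\overline{L}$, so $\alpha\in\mathcal{L}(P)-(L\cup\overline{L})=\mathcal{L}(P\setminus L)$, and the inclusion $\mathcal{L}(P'\setminus L)\subseteq\mathcal{L}(P\setminus L)$ follows from monotonicity of $X\mapsto X-(L\cup\overline{L})$. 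For Relabelling, $f(\tau)=\tau$ settles the silent case and $\mathcal{L}(P[f])=\{f(l):l\in\mathcal{L}(P)\}$ with monotonicity of the $f$-image settles the rest. For Constants, the clause $\mathcal{L}(P)\subseteq\mathcal{L}(A)$ for $A\overset{\text{def}}{=}P$ in Definition \ref{sorts} closes the case.

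The Composition rules are where the simultaneous setup is used: the two interleaving rules and the synchronization rule producing $\{\alpha,\beta\}$ reduce directly to $\mathcal{L}(P\parallel Q)=\mathcal{L}(P)\cup\mathcal{L}(Q)$ plus the hypothesis applied to the one or two premises, while the $\tau$-communication rule is trivial for the label claim since the emitted label is $\tau$. The pomset rules of Table \ref{FPTRForPS} are treated exactly like their single-event counterparts, since a pomset $p$ occurring as a label is built from the same actions by $.$ and $+$, so every action in $p$ lies in $\mathcal{L}(P)\cup\{\tau\}$ by the hypothesis.

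I do not expect a deep obstacle: the statement is purely structural and each case is a short computation with the sort function. The step that requires the most care is organizing the induction so that event-labelled and set/pomset-labelled transitions are discharged in a single pass, and pinning down once and for all the conventions for $\mathcal{L}(\alpha[m])$, $\mathcal{L}(\alpha[m].Q)$ and $\mathcal{L}(P.Q)$, since it is exactly the Prefix and Summation cases involving already-executed prefixes $\alpha[m]$ that would otherwise look like they introduce labels not present in $\mathcal{L}(P)$.
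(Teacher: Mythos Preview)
Your proposal is correct and follows exactly the approach the paper indicates: induction on the inference of the transition, with a case analysis over the forward transition rules. The paper's own proof simply states ``by induction on the inference \ldots\ there are several cases corresponding to the forward transition rules in Definition \ref{semantics}, we omit them,'' so your plan is a faithful (and considerably more detailed) elaboration of that omitted argument, including the useful observation that the single-event and set-labelled transitions must be handled in a single simultaneous induction.
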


\begin{proof}
By induction on the inference of $P\xrightarrow{\alpha}P'$ and $P\xrightarrow{\{\alpha_1,\cdots,\alpha_n\}}P'$, there are several cases corresponding to the forward transition rules in Definition \ref{semantics}, we omit them.
\end{proof}

\begin{proposition}
If $P\xtworightarrow{\alpha[m]}P'$, then
\begin{enumerate}
  \item $\alpha\in\mathcal{L}(P)\cup\{\tau\}$;
  \item $\mathcal{L}(P')\subseteq\mathcal{L}(P)$.
\end{enumerate}

If $P\xtworightarrow{\{\alpha_1[m],\cdots,\alpha_n[m]\}}P'$, then
\begin{enumerate}
  \item $\alpha_1,\cdots,\alpha_n\in\mathcal{L}(P)\cup\{\tau\}$;
  \item $\mathcal{L}(P')\subseteq\mathcal{L}(P)$.
\end{enumerate}
\end{proposition}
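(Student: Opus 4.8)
The plan is to argue by induction on the depth of the inference tree witnessing the reverse transition, in exact parallel with the (omitted) proof of the forward proposition; claims~(1) and~(2) are established simultaneously, since the inductive step for~(2) invokes~(1) on the sub-derivations and conversely. The base case is the reverse prefix axiom $\alpha[m]\xtworightarrow{\alpha[m]}\alpha$: here $P=\alpha[m]$, which we read as the residual $\textbf{nil}.\alpha[m]$, so by Definition~\ref{sorts}(2) one has $\mathcal{L}(\alpha[m])=\{\alpha\}$ when $\alpha\in\mathcal{L}$ and $\mathcal{L}(\alpha[m])=\emptyset$ when $\alpha=\tau$; in either case $\alpha\in\mathcal{L}(P)\cup\{\tau\}$ and $\mathcal{L}(P')=\mathcal{L}(\alpha)\subseteq\mathcal{L}(\alpha[m])=\mathcal{L}(P)$.

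For the inductive step I would go through the reverse rules table by table. In the reverse Summation rules (Table~\ref{RTRForPS}) and the reverse pomset Summation rules (Table~\ref{RPTRForPS}) the conclusion's label(s) coincide with those of the premise(s), which by the induction hypothesis lie in $\mathcal{L}(P)\cup\{\tau\}$ or $\mathcal{L}(Q)\cup\{\tau\}$ and hence in $\mathcal{L}(P+Q)\cup\{\tau\}=\mathcal{L}(P)\cup\mathcal{L}(Q)\cup\{\tau\}$ by Definition~\ref{sorts}(6); the residual is assembled from residuals of the premises together with untouched summands, so its sort is contained in $\mathcal{L}(P)\cup\mathcal{L}(Q)$ by the induction hypothesis. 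The reverse sequencing (Prefix) rules $P.Q\xtworightarrow{\cdots}\cdots$ use Definition~\ref{sorts}(1)--(4) and the induction hypothesis applied to whichever of $P$, $Q$ fires. For Composition (Table~\ref{RTRForCom}) the two interleaving rules are immediate from the induction hypothesis and Definition~\ref{sorts}(7); the synchronised step rule produces the label set $\{\alpha[m],\beta[m]\}$ with $\alpha\in\mathcal{L}(P)\cup\{\tau\}$ and $\beta\in\mathcal{L}(Q)\cup\{\tau\}$, so both labels lie in $\mathcal{L}(P\parallel Q)\cup\{\tau\}$; and the communication rule, with premises $P\xtworightarrow{l[m]}P'$ and $Q\xtworightarrow{\overline{l}[m]}Q'$, produces the label $\tau$, which is trivially in $\mathcal{L}(P\parallel Q)\cup\{\tau\}$, while $\mathcal{L}(P'\parallel Q')\subseteq\mathcal{L}(P)\cup\mathcal{L}(Q)=\mathcal{L}(P\parallel Q)$. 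Restriction uses Definition~\ref{sorts}(8): the side condition $\alpha,\overline{\alpha}\notin L$ with the induction hypothesis gives $\alpha\in(\mathcal{L}(P)-(L\cup\overline{L}))\cup\{\tau\}$, and $\mathcal{L}(P'\setminus L)=\mathcal{L}(P')-(L\cup\overline{L})\subseteq\mathcal{L}(P)-(L\cup\overline{L})$. Constants use Definition~\ref{sorts}(10).

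The one case worth spelling out is Relabelling, where $\mathcal{L}(P[f])=\{f(l):l\in\mathcal{L}(P)\}$. From $P\xtworightarrow{\alpha[m]}P'$ the rule gives $P[f]\xtworightarrow{f(\alpha)[m]}P'[f]$ (reading $f(\alpha[m])$ as $f(\alpha)[m]$ and $f(\tau)=\tau$); since $\alpha\in\mathcal{L}(P)\cup\{\tau\}$ by the induction hypothesis, $f(\alpha)\in\{f(l):l\in\mathcal{L}(P)\}\cup\{\tau\}=\mathcal{L}(P[f])\cup\{\tau\}$, and $\mathcal{L}(P'[f])=\{f(l):l\in\mathcal{L}(P')\}\subseteq\{f(l):l\in\mathcal{L}(P)\}=\mathcal{L}(P[f])$ using $\mathcal{L}(P')\subseteq\mathcal{L}(P)$ and monotonicity of the image under $f$. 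Every pomset/step variant is handled componentwise over the events of the label set, identically to its single-event counterpart. Thus the induction closes with no genuine obstacle; as in the forward case, the write-up can be compressed to ``by induction on the inference, with a case analysis over the reverse transition rules'', and I expect only the Relabelling clause and the reading of the bare $\alpha[m]$-residual in the base case to merit an explicit remark.
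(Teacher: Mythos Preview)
Your proposal is correct and follows exactly the approach the paper takes: induction on the inference of the reverse transition, with a case analysis over the rules in Definition~\ref{semantics}. The paper's own proof is in fact nothing more than the one-line remark ``by induction on the inference \ldots\ there are several cases \ldots\ we omit them'', so your write-up is strictly more detailed than what appears there.
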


\begin{proof}
By induction on the inference of $P\xtworightarrow{\alpha}P'$ and $P\xtworightarrow{\{\alpha_1,\cdots,\alpha_n\}}P'$, there are several cases corresponding to the forward transition rules in Definition \ref{semantics}, we omit them.
\end{proof}

\section{Strongly Forward-reverse Truly Concurrent Bisimulations}\label{sftcb}

Based on the concepts of strongly FR truly concurrent bisimulation equivalences, we get the following laws.

\begin{proposition}[Monoid laws for strongly FR pomset bisimulation] The monoid laws for strongly FR pomset bisimulation are as follows.

\begin{enumerate}
  \item $P+Q\sim_p^{fr} Q+P$;
  \item $P+(Q+R)\sim_p^{fr} (P+Q)+R$;
  \item $P+P\sim_p^{fr} P$;
  \item $P+\textbf{nil}\sim_p^{fr} P$.
\end{enumerate}

\end{proposition}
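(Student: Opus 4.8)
The plan is to prove all four identities by the standard method: for each one, exhibit an explicit strongly FR pomset bisimulation $R$ in the sense of Definition~\ref{FRPSB} relating the two processes (working at the level of processes, equivalently configurations of the associated prime event structures). In every case $R$ will have the form $R_0\cup\mathrm{Id}$, where $\mathrm{Id}=\{(P,P):P\in\mathcal{P}\}$ and $R_0$ is a single schematic family:
\[
R_0^{(1)}=\{(P+Q,\;Q+P)\},\qquad R_0^{(2)}=\{(P+(Q+R),\;(P+Q)+R)\},
\]
\[
R_0^{(3)}=\{(P+P,\;P)\},\qquad R_0^{(4)}=\{(P+\textbf{nil},\;P)\},
\]
with $P,Q,R$ ranging over $\mathcal{P}$. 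Since $\mathrm{Id}$ trivially satisfies both clauses (forward and reverse) of Definition~\ref{FRPSB}, and since the relevant initial pair $(P+Q,Q+P)$ etc.\ lies in $R_0$, it suffices to check that $R_0\cup\mathrm{Id}$ satisfies the two clauses of Definition~\ref{FRPSB}, i.e.\ that it is closed under matching forward and reverse pomset transitions.

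The observation driving every verification is that the forward rules for $+$ in Table~\ref{FPTRForPS} and the reverse rules for $+$ in Table~\ref{RPTRForPS} all \emph{preserve the summation structure}: a forward move of $P+Q$ is always of the form $P+Q\xrightarrow{p}P''+Q''$, where each of $P'',Q''$ is either the corresponding summand unchanged, a residual $P'$ with $P\xrightarrow{p}P'$, or a terminated $p[\mathcal{K}]$ with $P\xrightarrow{p}p[\mathcal{K}]$; reverse moves behave dually. Hence for commutativity, $P+Q\xrightarrow{p}P''+Q''$ is answered by the mirror-image rule $Q+P\xrightarrow{p}Q''+P''$, and $(P''+Q'',Q''+P'')\in R_0^{(1)}$; the reverse clause and the ``vice versa'' direction are literally the same argument with the two components swapped. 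Associativity is identical with the three summands rearranged inside their fixed bracketings. For the unit law, $\textbf{nil}$ has neither forward nor reverse transitions, so every move of $P+\textbf{nil}$ is forced to come from $P$ (the side conditions $p\nsubseteq\textbf{nil}$ hold vacuously) and is matched by the same move of $P$, landing in $R_0^{(4)}$; conversely each move of $P$ lifts to one of $P+\textbf{nil}$.

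The case needing the most care is idempotence~(3): in $P+P$ the two summands coincide, so whenever a pomset $p$ can fire it occurs in both summands, and the ``both sides move'' rules of Tables~\ref{FPTRForPS} and~\ref{RPTRForPS} are the ones that apply, producing states such as $p[\mathcal{K}]+p[\mathcal{K}]$ or $P'+p[\mathcal{K}]$. I must confirm that all such states, together with those arising in the converse direction and under the reverse clause, remain inside $R_0^{(3)}\cup\mathrm{Id}$, enlarging $R$ slightly if some overlapping rule demands it. I expect this bookkeeping to be the only non-mechanical part of the argument; once it is settled, the rest is a routine exhaustive check against the rules of Tables~\ref{FPTRForPS} and~\ref{RPTRForPS}, after which $(\emptyset,\emptyset)$ (the pair of initial processes) lies in $R$ and the four bisimilarities follow.
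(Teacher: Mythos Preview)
Your approach is essentially the paper's: for each law the paper also enumerates the forward and reverse Summation rules from Tables~\ref{FPTRForPS} and~\ref{RPTRForPS} on both sides and then closes up coinductively via ``assumptions'' such as $P'+Q\sim_p^{fr}Q+P'$ or $P'+P'\sim_p^{fr}P'$, which is exactly your device of taking $R=R_0^{(i)}\cup\mathrm{Id}$ with $P,Q,R$ ranging over all processes. In particular, for idempotence the paper, like you, only treats the diagonal derivatives $P+P\xrightarrow{p}P'+P'$ and $P+P\xrightarrow{p}p[\mathcal{K}]+p[\mathcal{K}]$, so your flagged need to possibly enlarge $R$ for mixed targets is left equally implicit there.
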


\begin{proof}
\begin{enumerate}
  \item $P+Q\sim_p^{fr} Q+P$. There are several cases, we will not enumerate all. By the forward transition rules of Summation in Table \ref{FPTRForPS}, we get

      $$\frac{P\xrightarrow{p}P'\quad p\nsubseteq Q}{P+ Q\xrightarrow{p}P'+Q} (p\subseteq P) \quad \frac{P\xrightarrow{p}P'\quad p\nsubseteq Q}{Q+ P\xrightarrow{p}Q+ P'}(p\subseteq P)$$

      $$\frac{Q\xrightarrow{q}Q'\quad q\nsupseteq P}{P+ Q\xrightarrow{q}P+Q'}(q\subseteq Q) \quad \frac{Q\xrightarrow{q}Q'\quad q\nsubseteq P}{Q+ P\xrightarrow{q}Q'+P }(q\subseteq Q)$$

      By the reverse transition rules of Summation in Table \ref{RPTRForPS}, we get

      $$\frac{P\xtworightarrow{p[\mathcal{K}]}P'\quad p\nsubseteq Q}{P+ Q\xtworightarrow{p[\mathcal{K}]}P'+Q} (p\subseteq P) \quad \frac{P\xtworightarrow{p[\mathcal{K}]}P'\quad p\nsubseteq Q}{Q+ P\xtworightarrow{p[\mathcal{K}]}Q+ P'}(p\subseteq P)$$

      $$\frac{Q\xtworightarrow{q[\mathcal{J}]}Q'\quad q\nsubseteq P}{P+ Q\xtworightarrow{q[\mathcal{J}]}P+Q'}(q\subseteq Q) \quad \frac{Q\xtworightarrow{q[\mathcal{J}]}Q'\quad q\nsubseteq P}{Q+ P\xtworightarrow{q[\mathcal{J}]}Q'+P }(q\subseteq Q)$$

      With the assumptions $P'+Q\sim_p^{fr} Q+P'$ and $P+Q'\sim_p^{fr} Q'+P$, so $P+ Q\sim_p^{fr} Q+ P$, as desired.
  \item $P+(Q+R)\sim_p^{fr} (P+Q)+R$. There are several cases, we will not enumerate all. By the forward transition rules of Summation in Table \ref{FPTRForPS}, we get

      $$\frac{P\xrightarrow{p}P'\quad p\nsubseteq Q\quad p\nsubseteq R}{P+(Q+R)\xrightarrow{p}P'+(Q+R)}(p\subseteq P) \quad \frac{P\xrightarrow{p}P'\quad p\nsubseteq Q\quad p\nsubseteq R}{(P+Q)+R\xrightarrow{p}(P'+Q)+R}(p\subseteq P)$$

      $$\frac{Q\xrightarrow{q}Q'\quad q\nsubseteq P\quad q\nsubseteq R}{P+(Q+R)\xrightarrow{q}P+(Q'+R)}(q\subseteq Q) \quad \frac{Q\xrightarrow{q}Q'\quad q\nsubseteq P\quad q\nsubseteq R}{(P+Q)+R\xrightarrow{q}(P+Q')+R}(q\subseteq Q)$$

      $$\frac{R\xrightarrow{r}R'\quad r\nsubseteq P\quad r\nsubseteq Q}{P+(Q+R)\xrightarrow{r}P+(Q+R')}(r\subseteq R) \quad \frac{R\xrightarrow{r}R'\quad r\nsubseteq P\quad r\nsubseteq Q}{(P+Q)+R\xrightarrow{r}(P+Q)+R'}(r\subseteq R)$$

      By the reverse transition rules of Summation in Table \ref{RPTRForPS}, we get

      $$\frac{P\xtworightarrow{p[\mathcal{K}]}P'\quad p\nsubseteq Q\quad p\nsubseteq R}{P+(Q+R)\xtworightarrow{p[\mathcal{K}]}P'+(Q+R)}(p\subseteq P) \quad \frac{P\xtworightarrow{p[\mathcal{K}]}P'\quad p\nsubseteq Q\quad p\nsubseteq R}{(P+Q)+R\xtworightarrow{p[\mathcal{K}]}(P'+Q)+R}(p\subseteq P)$$

      $$\frac{Q\xtworightarrow{q[\mathcal{J}]}Q'\quad q\nsubseteq P\quad q\nsubseteq R}{P+(Q+R)\xtworightarrow{q[\mathcal{J}]}P+(Q'+R)}(q\subseteq Q) \quad \frac{Q\xtworightarrow{q[\mathcal{J}]}Q'\quad q\nsubseteq P\quad q\nsubseteq R}{(P+Q)+R\xtworightarrow{q[\mathcal{J}]}(P+Q')+R}(q\subseteq Q)$$

      $$\frac{R\xtworightarrow{r[\mathcal{I}]}R'\quad r\nsubseteq P\quad r\nsubseteq Q}{P+(Q+R)\xtworightarrow{r[\mathcal{I}]}P+(Q+R')}(r\subseteq R) \quad \frac{R\xtworightarrow{r[\mathcal{I}]}R'\quad r\nsubseteq P\quad r\nsubseteq Q}{(P+Q)+R\xtworightarrow{r[\mathcal{I}]}(P+Q)+R'}(r\subseteq R)$$

      With the assumptions $P'+(Q+R)\sim_p^{fr} (P'+Q)+R$, $P+(Q'+R)\sim_p^{fr} (P+Q')+R$ and $P+(Q+R')\sim_p^{fr} (P+Q)+R'$, so $P+(Q+R)\sim_p^{fr} (P+Q)+R$, as desired.
  \item $P+P\sim_p^{fr} P$. By the forward transition rules of Summation, we get

      $$\frac{P\xrightarrow{p}p[\mathcal{K}]}{P+ P\xrightarrow{p}p[\mathcal{K}]+p[\mathcal{K}]}(p\subseteq P) \quad \frac{P\xrightarrow{p}p[\mathcal{K}]}{P\xrightarrow{p}p[\mathcal{K}]}(p\subseteq P)$$

      $$\frac{P\xrightarrow{p}P'}{P+ P\xrightarrow{p}P'+P'}(p\subseteq P) \quad \frac{P\xrightarrow{p}P'}{P\xrightarrow{p}P'}(p\subseteq P)$$

      By the reverse transition rules of Summation, we get

      $$\frac{P\xtworightarrow{p[\mathcal{K}]}p}{P+ P\xtworightarrow{p[\mathcal{K}]}p+p}(p\subseteq P) \quad \frac{P\xtworightarrow{p[\mathcal{K}]}p}{P\xtworightarrow{p[\mathcal{K}]}p}(p\subseteq P)$$

      $$\frac{P\xtworightarrow{p[\mathcal{K}]}P'}{P+ P\xtworightarrow{p[\mathcal{K}]}P'+P'}(p\subseteq P) \quad \frac{P\xtworightarrow{p[\mathcal{K}]}P'}{P\xtworightarrow{p[\mathcal{K}]}P'}(p\subseteq P)$$

      With the assumptions $p[\mathcal{K}]+p[\mathcal{K}]\sim_p^{fr} p[\mathcal{K}]$, $p+p\sim_p^{fr} p$ and $P'+P'\sim_p^{fr} P'$,so $P+ P\sim_p^{fr} P$, as desired.
  \item $P+\textbf{nil}\sim_p^{fr} P$. There are several cases, we will not enumerate all. By the forward transition rules of Summation in Table \ref{FPTRForPS}, we get

      $$\frac{P\xrightarrow{p}P'}{P+ \textbf{nil}\xrightarrow{p}P'}(p\subseteq P) \quad \frac{P\xrightarrow{p}P'}{P\xrightarrow{p}P'}(p\subseteq P)$$

      By the reverse transition rules of Summation in Table \ref{RPTRForPS}, we get

      $$\frac{P\xtworightarrow{p[\mathcal{K}]}P'}{P+ \textbf{nil}\xtworightarrow{p[\mathcal{K}]}P'}(p\subseteq P) \quad \frac{P\xtworightarrow{p[\mathcal{K}]}P'}{P\xtworightarrow{p[\mathcal{K}]}P'}(p\subseteq P)$$

      Since $P'\sim_p^{fr} P'$, $P+ \textbf{nil}\sim_p^{fr} P$, as desired.
\end{enumerate}
\end{proof}

\begin{proposition}[Monoid laws for strongly FR step bisimulation] The monoid laws for strongly FR step bisimulation are as follows.
\begin{enumerate}
  \item $P+Q\sim_s^{fr} Q+P$;
  \item $P+(Q+R)\sim_s^{fr} (P+Q)+R$;
  \item $P+P\sim_s^{fr} P$;
  \item $P+\textbf{nil}\sim_s^{fr} P$.
\end{enumerate}
\end{proposition}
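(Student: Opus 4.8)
The plan is to mimic, almost verbatim, the proof of the preceding Proposition for strongly FR \emph{pomset} bisimulation. A step transition is exactly a pomset transition whose label consists of pairwise concurrent events, and the Summation operator never creates concurrency between events drawn from different summands; hence the forward and reverse transition rules of Table \ref{FPTRForPS} and Table \ref{RPTRForPS} apply unchanged, and a label witnesses a step on the left of each claimed equation if and only if it does so on the right. So for each of the four laws it suffices to exhibit the obvious witnessing relation and check the two clauses of Definition \ref{FRPSB}, with ``FR pomset'' everywhere replaced by ``FR step''.

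Concretely, for (1) I would take $R$ to be the least relation relating $P+Q$ with $Q+P$ for all $P,Q$ together with the identity, closed under the reachable forms. By the forward step rules of Summation, any step $P+Q\xrightarrow{X}T$ is inferred either from a step $P\xrightarrow{X}P'$ with $X\nsubseteq Q$, matched by $Q+P\xrightarrow{X}Q+P'$ via the symmetric rule, or from a step $Q\xrightarrow{X}Q'$ with $X\nsubseteq P$, matched symmetrically; in each case the resulting pair lies in $R$, and the induction hypothesis (e.g. $P'+Q\sim_s^{fr}Q+P'$) discharges the obligation. The reverse clause is identical using Table \ref{RPTRForPS}, tracking the history sets $\mathcal{K}$, which are simply copied across the symmetric rules. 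Laws (2), (3), (4) go the same way: for associativity one does a three-way split on which of $P,Q,R$ performs the step; for idempotence a step of $P+P$ comes from a step of $P$ and is matched by the corresponding ``both copies fire'' or ``one copy fires'' step of $P+P$ (and conversely), yielding pairs such as $(P'+P',P')$ and $(p[\mathcal{K}]+p[\mathcal{K}],p[\mathcal{K}])$; for the unit law one uses that $\textbf{nil}$ has no forward or reverse transitions, so every step of $P+\textbf{nil}$ must come from $P$ alone.

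The only genuine content, and the step I would be most careful about, is the bookkeeping that turns these local matchings into an actual FR step bisimulation: one must close the candidate relation under \emph{all} reachable configurations — both the ``one summand fired'' forms like $P'+Q$ and the ``both summands fired'' forms like $p[\mathcal{K}]+p[\mathcal{K}]$ — and verify forward/downward consistency so that the ``assumptions'' invoked at each inductive step are legitimate members of the relation rather than circular appeals. Since Summation introduces no new causality or concurrency, and the reverse rules merely undo the forward ones while preserving the keys $m\in\mathcal{K}$, this closure is routine, the step side-condition is automatically respected, and the verification proceeds exactly as in the pomset case; we therefore omit the repetitive case analysis.
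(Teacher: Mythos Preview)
Your proposal is correct and follows essentially the same approach as the paper: both proofs replay the pomset case, invoke the forward and reverse Summation rules of Tables~\ref{FPTRForPS} and~\ref{RPTRForPS} with the added ``pairwise concurrent'' side condition, and discharge each law by matching transitions and appealing to the corresponding assumption on the derivatives (e.g.\ $P'+Q\sim_s^{fr}Q+P'$). If anything, you are slightly more explicit than the paper about how those ``assumptions'' are justified---by closing the witnessing relation under reachable forms---whereas the paper simply states them without comment.
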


\begin{proof}
\begin{enumerate}
  \item $P+Q\sim_s^{fr} Q+P$. There are several cases, we will not enumerate all. By the forward transition rules of Summation in Table \ref{FPTRForPS}, we get

      $$\frac{P\xrightarrow{p}P'\quad p\nsubseteq Q}{P+ Q\xrightarrow{p}P'+Q} (p\subseteq P,\forall\alpha,\beta \in p,\textrm{ are pairwise concurrent})$$
      $$\frac{P\xrightarrow{p}P'\quad p\nsubseteq Q}{Q+ P\xrightarrow{p}Q+ P'}(p\subseteq P,\forall\alpha,\beta \in p,\textrm{ are pairwise concurrent})$$

      $$\frac{Q\xrightarrow{q}Q'\quad q\nsupseteq P}{P+ Q\xrightarrow{q}P+Q'}(q\subseteq Q,\forall\alpha,\beta \in q,\textrm{ are pairwise concurrent})$$
      $$\frac{Q\xrightarrow{q}Q'\quad q\nsubseteq P}{Q+ P\xrightarrow{q}Q'+P }(q\subseteq Q,\forall\alpha,\beta \in q,\textrm{ are pairwise concurrent})$$

      By the reverse transition rules of Summation in Table \ref{RPTRForPS}, we get

      $$\frac{P\xtworightarrow{p[\mathcal{K}]}P'\quad p\nsubseteq Q}{P+ Q\xtworightarrow{p[\mathcal{K}]}P'+Q} (p\subseteq P,\forall\alpha,\beta \in p,\textrm{ are pairwise concurrent})$$
      $$\frac{P\xtworightarrow{p[\mathcal{K}]}P'\quad p\nsubseteq Q}{Q+ P\xtworightarrow{p[\mathcal{K}]}Q+ P'}(p\subseteq P,\forall\alpha,\beta \in p,\textrm{ are pairwise concurrent})$$

      $$\frac{Q\xtworightarrow{q[\mathcal{J}]}Q'\quad q\nsubseteq P}{P+ Q\xtworightarrow{q[\mathcal{J}]}P+Q'}(q\subseteq Q,\forall\alpha,\beta \in q,\textrm{ are pairwise concurrent})$$
      $$\frac{Q\xtworightarrow{q[\mathcal{J}]}Q'\quad q\nsubseteq P}{Q+ P\xtworightarrow{q[\mathcal{J}]}Q'+P }(q\subseteq Q,\forall\alpha,\beta \in q,\textrm{ are pairwise concurrent})$$

      With the assumptions $P'+Q\sim_s^{fr} Q+P'$ and $P+Q'\sim_s^{fr} Q'+P$, so $P+ Q\sim_s^{fr} Q+ P$, as desired.
  \item $P+(Q+R)\sim_s^{fr} (P+Q)+R$. There are several cases, we will not enumerate all. By the forward transition rules of Summation in Table \ref{FPTRForPS}, we get

      $$\frac{P\xrightarrow{p}P'\quad p\nsubseteq Q\quad p\nsubseteq R}{P+(Q+R)\xrightarrow{p}P'+(Q+R)}(p\subseteq P,\forall\alpha,\beta \in p,\textrm{ are pairwise concurrent})$$
      $$\frac{P\xrightarrow{p}P'\quad p\nsubseteq Q\quad p\nsubseteq R}{(P+Q)+R\xrightarrow{p}(P'+Q)+R}(p\subseteq P,\forall\alpha,\beta \in p,\textrm{ are pairwise concurrent})$$

      $$\frac{Q\xrightarrow{q}Q'\quad q\nsubseteq P\quad q\nsubseteq R}{P+(Q+R)\xrightarrow{q}P+(Q'+R)}(q\subseteq Q,\forall\alpha,\beta \in q,\textrm{ are pairwise concurrent})$$
      $$\frac{Q\xrightarrow{q}Q'\quad q\nsubseteq P\quad q\nsubseteq R}{(P+Q)+R\xrightarrow{q}(P+Q')+R}(q\subseteq Q,\forall\alpha,\beta \in q,\textrm{ are pairwise concurrent})$$

      $$\frac{R\xrightarrow{r}R'\quad r\nsubseteq P\quad r\nsubseteq Q}{P+(Q+R)\xrightarrow{r}P+(Q+R')}(r\subseteq R,\forall\alpha,\beta \in r,\textrm{ are pairwise concurrent})$$
      $$\frac{R\xrightarrow{r}R'\quad r\nsubseteq P\quad r\nsubseteq Q}{(P+Q)+R\xrightarrow{r}(P+Q)+R'}(r\subseteq R,\forall\alpha,\beta \in r,\textrm{ are pairwise concurrent})$$

      By the reverse transition rules of Summation in Table \ref{RPTRForPS}, we get

      $$\frac{P\xtworightarrow{p[\mathcal{K}]}P'\quad p\nsubseteq Q\quad p\nsubseteq R}{P+(Q+R)\xtworightarrow{p[\mathcal{K}]}P'+(Q+R)}(p\subseteq P,\forall\alpha,\beta \in p,\textrm{ are pairwise concurrent})$$
      $$\frac{P\xtworightarrow{p[\mathcal{K}]}P'\quad p\nsubseteq Q\quad p\nsubseteq R}{(P+Q)+R\xtworightarrow{p[\mathcal{K}]}(P'+Q)+R}(p\subseteq P,\forall\alpha,\beta \in p,\textrm{ are pairwise concurrent})$$

      $$\frac{Q\xtworightarrow{q[\mathcal{J}]}Q'\quad q\nsubseteq P\quad q\nsubseteq R}{P+(Q+R)\xtworightarrow{q[\mathcal{J}]}P+(Q'+R)}(q\subseteq Q,\forall\alpha,\beta \in q,\textrm{ are pairwise concurrent})$$
      $$\frac{Q\xtworightarrow{q[\mathcal{J}]}Q'\quad q\nsubseteq P\quad q\nsubseteq R}{(P+Q)+R\xtworightarrow{q[\mathcal{J}]}(P+Q')+R}(q\subseteq Q,\forall\alpha,\beta \in q,\textrm{ are pairwise concurrent})$$

      $$\frac{R\xtworightarrow{r[\mathcal{I}]}R'\quad r\nsubseteq P\quad r\nsubseteq Q}{P+(Q+R)\xtworightarrow{r[\mathcal{I}]}P+(Q+R')}(r\subseteq R,\forall\alpha,\beta \in r,\textrm{ are pairwise concurrent})$$
      $$\frac{R\xtworightarrow{r[\mathcal{I}]}R'\quad r\nsubseteq P\quad r\nsubseteq Q}{(P+Q)+R\xtworightarrow{r[\mathcal{I}]}(P+Q)+R'}(r\subseteq R,\forall\alpha,\beta \in r,\textrm{ are pairwise concurrent})$$

      With the assumptions $P'+(Q+R)\sim_s^{fr} (P'+Q)+R$, $P+(Q'+R)\sim_s^{fr} (P+Q')+R$ and $P+(Q+R')\sim_s^{fr} (P+Q)+R'$, so $P+(Q+R)\sim_s^{fr} (P+Q)+R$, as desired.
  \item $P+P\sim_s^{fr} P$. By the forward transition rules of Summation, we get

      $$\frac{P\xrightarrow{p}p[\mathcal{K}]}{P+ P\xrightarrow{p}p[\mathcal{K}]+p[\mathcal{K}]}(p\subseteq P,\forall\alpha,\beta \in p,\textrm{ are pairwise concurrent})$$ $$\frac{P\xrightarrow{p}p[\mathcal{K}]}{P\xrightarrow{p}p[\mathcal{K}]}(p\subseteq P,\forall\alpha,\beta \in p,\textrm{ are pairwise concurrent})$$

      $$\frac{P\xrightarrow{p}P'}{P+ P\xrightarrow{p}P'+P'}(p\subseteq P,\forall\alpha,\beta \in p,\textrm{ are pairwise concurrent})$$
      $$\frac{P\xrightarrow{p}P'}{P\xrightarrow{p}P'}(p\subseteq P,\forall\alpha,\beta \in p,\textrm{ are pairwise concurrent})$$

      By the reverse transition rules of Summation, we get

      $$\frac{P\xtworightarrow{p[\mathcal{K}]}p}{P+ P\xtworightarrow{p[\mathcal{K}]}p+p}(p\subseteq P,\forall\alpha,\beta \in p,\textrm{ are pairwise concurrent})$$ $$\frac{P\xtworightarrow{p[\mathcal{K}]}p}{P\xtworightarrow{p[\mathcal{K}]}p}(p\subseteq P,\forall\alpha,\beta \in p,\textrm{ are pairwise concurrent})$$

      $$\frac{P\xtworightarrow{p[\mathcal{K}]}P'}{P+ P\xtworightarrow{p[\mathcal{K}]}P'+P'}(p\subseteq P,\forall\alpha,\beta \in p,\textrm{ are pairwise concurrent})$$ $$\frac{P\xtworightarrow{p[\mathcal{K}]}P'}{P\xtworightarrow{p[\mathcal{K}]}P'}(p\subseteq P,\forall\alpha,\beta \in p,\textrm{ are pairwise concurrent})$$

      With the assumptions $p[\mathcal{K}]+p[\mathcal{K}]\sim_s^{fr} p[\mathcal{K}]$, $p+p\sim_s^{fr} p$ and $P'+P'\sim_s^{fr} P'$,so $P+ P\sim_s^{fr} P$, as desired.
  \item $P+\textbf{nil}\sim_s^{fr} P$. There are several cases, we will not enumerate all. By the forward transition rules of Summation in Table \ref{FPTRForPS}, we get

      $$\frac{P\xrightarrow{p}P'}{P+ \textbf{nil}\xrightarrow{p}P'}(p\subseteq P,\forall\alpha,\beta \in p,\textrm{ are pairwise concurrent})$$
      $$\frac{P\xrightarrow{p}P'}{P\xrightarrow{p}P'}(p\subseteq P,\forall\alpha,\beta \in p,\textrm{ are pairwise concurrent})$$

      By the reverse transition rules of Summation in Table \ref{RPTRForPS}, we get

      $$\frac{P\xtworightarrow{p[\mathcal{K}]}P'}{P+ \textbf{nil}\xtworightarrow{p[\mathcal{K}]}P'}(p\subseteq P,\forall\alpha,\beta \in p,\textrm{ are pairwise concurrent})$$ $$\frac{P\xtworightarrow{p[\mathcal{K}]}P'}{P\xtworightarrow{p[\mathcal{K}]}P'}(p\subseteq P,\forall\alpha,\beta \in p,\textrm{ are pairwise concurrent})$$

      Since $P'\sim_s^{fr} P'$, $P+ \textbf{nil}\sim_s^{fr} P$, as desired.
\end{enumerate}
\end{proof}

\begin{proposition}[Monoid laws for strongly FR hp-bisimulation] The monoid laws for strongly FR hp-bisimulation are as follows.
\begin{enumerate}
  \item $P+Q\sim_{hp}^{fr} Q+P$;
  \item $P+(Q+R)\sim_{hp}^{fr} (P+Q)+R$;
  \item $P+P\sim_{hp}^{fr} P$;
  \item $P+\textbf{nil}\sim_{hp}^{fr} P$.
\end{enumerate}
\end{proposition}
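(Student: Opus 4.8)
The plan is to follow exactly the pattern of the preceding two propositions (the monoid laws for strongly FR pomset and step bisimulation), the only new ingredient being that an FR hp-bisimulation in the sense of Definition \ref{FRHHPB} is a \emph{posetal} relation: every pair of related configurations must carry an order-isomorphism that is preserved under both forward single-event transitions and reverse single-event transitions. So for each of the four equations I would exhibit an explicit FR hp-bisimulation $R\subseteq\mathcal{C}(\mathcal{E}_1)\overline{\times}\mathcal{C}(\mathcal{E}_2)$ containing $(\emptyset,\emptyset,\emptyset)$, and check closure under the forward clause ($C_1\xrightarrow{e_1}C_1'$ matched by $C_2\xrightarrow{e_2}C_2'$ with $(C_1',f[e_1\mapsto e_2],C_2')\in R$) and the reverse clause ($C_1'\xtworightarrow{e_1[m]}C_1$ matched by $C_2'\xtworightarrow{e_2[n]}C_2$ with $(C_1,f'[e_1[m]\mapsto e_2[n]],C_2)\in R$). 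Since hp-bisimulation is phrased with single events, I would use the single-event rules for Summation in Table \ref{FTRForPS} and Table \ref{RTRForPS}, not the pomset rules.

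For (1) $P+Q\sim_{hp}^{fr}Q+P$ and (2) $P+(Q+R)\sim_{hp}^{fr}(P+Q)+R$, the key observation is that performing a first action collapses the Summation onto a single summand (the remaining summands persist only as transition-free residuals), so after that step the two sides have literally the same reachable derivatives up to the trivial rearrangement of the dead summands, and the generated PESs are canonically isomorphic (commuting or reassociating $+$ merely permutes the symmetric conflict relation). I would therefore let $R$ consist of $(\emptyset,\emptyset,\emptyset)$ together with all triples $(C_1,f,C_2)$ where $C_1,C_2$ are corresponding configurations of the two sides and $f$ is the induced identity-type bijection on the common event set. Given $C_1\xrightarrow{e_1}C_1'$, the Summation rules of Table \ref{FTRForPS} yield a matching $C_2\xrightarrow{e_1}C_2'$, so I may take $e_2=e_1$; then $f[e_1\mapsto e_1]$ is still an order-isomorphism and $(C_1',f[e_1\mapsto e_1],C_2')\in R$. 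The reverse clause is symmetric via Table \ref{RTRForPS}: undoing $e_1[m]$ on the left is matched by undoing $e_1[m]$ on the right, so $n=m$ and $f'$ is updated by the identity. For (3) $P+P\sim_{hp}^{fr}P$ I would take $R$ to contain $(\emptyset,\emptyset,\emptyset)$ together with all triples whose left component is a configuration of a derivative $t'+t'$ (or $\alpha[m]+\alpha[m]$) of $P+P$ and whose right component is the corresponding configuration of the matching derivative $t'$ (or $\alpha[m]$) of $P$, with $f$ the obvious isomorphism; the rule $\frac{P\xrightarrow{e}P'}{P+P\xrightarrow{e}P'+P'}$ and its reverse analogue close $R$, again with $e_2=e_1$ and $n=m$. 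For (4) $P+\textbf{nil}\sim_{hp}^{fr}P$, the process $\textbf{nil}$ admits no forward or reverse transitions, so $P+\textbf{nil}$ and $P$ have exactly the same transitions, and $R=\{(\emptyset,\emptyset,\emptyset)\}\cup\{(C,Id_C,C):C\textrm{ a configuration of }P\}$ (the left copy read inside $P+\textbf{nil}$) is at once an FR hp-bisimulation.

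The only genuinely delicate point is the bookkeeping of communication keys in the reverse clauses: when $C_1'\xtworightarrow{e_1[m]}C_1$ one must supply $C_2'\xtworightarrow{e_2[n]}C_2$ with $(C_1,f'[e_1[m]\mapsto e_2[n]],C_2)\in R$ and verify that the extension of $f'$ is compatible with the isomorphism already recorded on $C_1$ and $C_2$. Because none of the Summation rules rename events or reassign keys, the natural choice $e_2=e_1$, $n=m$ always works and the extension is merely the identity update, so this reduces to a routine check; I would state it once at the start of the proof and invoke it in all four cases. Finally, since every $R$ constructed above is in fact downward closed, the same relations also witness the corresponding laws for strongly FR hhp-bisimulation, so those variants follow at no extra cost.
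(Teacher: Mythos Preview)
Your proposal is correct and follows the same approach as the paper: use the single-event Summation rules of Tables \ref{FTRForPS} and \ref{RTRForPS}, match each forward or reverse transition on one side by the identically-labelled transition on the other, and update the posetal isomorphism by the identity $f[\alpha\mapsto\alpha]$ (respectively $f'[\alpha[m]\mapsto\alpha[m]]$); the paper's proof does exactly this, only more tersely. One small correction to your motivating remark: in this reversible calculus the unchosen summand is \emph{not} a transition-free residual---the rule $\frac{P\xrightarrow{\alpha}P'\quad\alpha\notin Q}{P+Q\xrightarrow{\alpha}P'+Q}$ leaves $Q$ intact and still able to transition---but this does not affect your construction of $R$, since all you actually use is that every transition of one side is mirrored by the same-labelled transition of the other, which still holds.
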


\begin{proof}
\begin{enumerate}
  \item $P+Q\sim_{hp}^{fr} Q+P$. There are several cases, we will not enumerate all. By the forward transition rules of Summation in Table \ref{FTRForPS}, we get

      $$\frac{P\xrightarrow{\alpha}P'\quad \alpha\notin Q}{P+ Q\xrightarrow{\alpha}P'+Q} (\alpha\in P) \quad \frac{P\xrightarrow{\alpha}P'\quad \alpha\notin Q}{Q+ P\xrightarrow{\alpha}Q+ P'}(\alpha\in P)$$

      $$\frac{Q\xrightarrow{\beta}Q'\quad\beta\notin P}{P+ Q\xrightarrow{\beta}P+Q'}(\beta\in Q) \quad \frac{Q\xrightarrow{\beta}Q'\quad\beta\notin P}{Q+ P\xrightarrow{\beta}Q'+P }(\beta\in Q)$$

      By the reverse transition rules of Summation in Table \ref{RTRForPS}, we get

      $$\frac{P\xtworightarrow{\alpha[m]}P'\quad\alpha\notin Q}{P+ Q\xtworightarrow{\alpha[m]}P'+Q} (\alpha\in P) \quad \frac{P\xtworightarrow{\alpha[m]}P'\quad\alpha\notin Q}{Q+ P\xtworightarrow{\alpha[m]}Q+ P'}(\alpha P)$$

      $$\frac{Q\xtworightarrow{\beta[n]}Q'\quad\beta\notin P}{P+ Q\xtworightarrow{\beta[n]}P+Q'}(\beta\in Q) \quad \frac{Q\xtworightarrow{\beta[n]}Q'\quad\beta\notin P}{Q+ P\xtworightarrow{\beta[n]}Q'+P }(\beta\in Q)$$

      Since $(C(P+ Q),f,C(Q+ P))\in\sim_{hp}^{fr}$, $(C((P+ Q)'),f[\alpha\mapsto \alpha],C((Q+ P)'))\in\sim_{hp}^{fr}$ and $(C((P+ Q)'),f[\beta\mapsto \beta],C((Q+ P)'))\in\sim_{hp}^{fr}$, $P+ Q\sim_{hp}^{fr} Q+ P$, as desired.
  \item $P+(Q+R)\sim_{hp}^{fr} (P+Q)+R$. There are several cases, we will not enumerate all. By the forward transition rules of Summation in Table \ref{FTRForPS}, we get

      $$\frac{P\xrightarrow{\alpha}P'\quad\alpha\notin Q\quad\alpha\notin R}{P+(Q+R)\xrightarrow{\alpha}P'+(Q+R)}(\alpha\in P) \quad \frac{P\xrightarrow{\alpha}P'\quad\alpha\notin Q\quad\alpha\notin R}{(P+Q)+R\xrightarrow{\alpha}(P'+Q)+R}(\alpha\in P)$$

      $$\frac{Q\xrightarrow{\beta}Q'\quad\beta\notin P\quad\beta\notin R}{P+(Q+R)\xrightarrow{\beta}P+(Q'+R)}(\beta\in Q) \quad \frac{Q\xrightarrow{\beta}Q'\quad\beta\notin P\quad\beta\notin R}{(P+Q)+R\xrightarrow{\beta}(P+Q')+R}(\beta\in Q)$$

      $$\frac{R\xrightarrow{\gamma}R'\quad\gamma\notin P\quad\gamma\notin Q}{P+(Q+R)\xrightarrow{\gamma}P+(Q+R')}(\gamma\in R) \quad \frac{R\xrightarrow{\gamma}R'\quad\gamma\notin P\quad\gamma\notin Q}{(P+Q)+R\xrightarrow{\gamma}(P+Q)+R'}(\gamma\in R)$$

      By the reverse transition rules of Summation in Table \ref{RTRForPS}, we get

      $$\frac{P\xtworightarrow{\alpha[m]}P'\quad\alpha\notin Q\quad\alpha\notin R}{P+(Q+R)\xtworightarrow{\alpha[m]}P'+(Q+R)}(\alpha\in P) \quad \frac{P\xtworightarrow{\alpha[m]}P'\quad\alpha\notin Q\quad\alpha\notin R}{(P+Q)+R\xtworightarrow{\alpha[m]}(P'+Q)+R}(\alpha\in P)$$

      $$\frac{Q\xtworightarrow{\beta[n]}Q'\quad\beta\notin P\quad\beta\notin R}{P+(Q+R)\xtworightarrow{\beta[n]}P+(Q'+R)}(\beta\in Q) \quad \frac{Q\xtworightarrow{\beta[n]}Q'\quad\beta\notin P\quad\beta\notin R}{(P+Q)+R\xtworightarrow{\beta[n]}(P+Q')+R}(\beta Q)$$

      $$\frac{R\xtworightarrow{\gamma[k]}R'\quad\gamma\notin P\quad\gamma\notin Q}{P+(Q+R)\xtworightarrow{\gamma[k]}P+(Q+R')}(\gamma\in R) \quad \frac{R\xtworightarrow{\gamma[k]}R'\quad\gamma\notin P\quad\gamma\notin Q}{(P+Q)+R\xtworightarrow{\gamma[k]}(P+Q)+R'}(\gamma\in R)$$

      Since $(C(P+ (Q+R)),f,C((P+Q)+R))\in\sim_{hp}^{fr}$, $(C((P+ (Q+R))'),f[\alpha\mapsto \alpha],C((P+Q)+R)'))\in\sim_{hp}^{fr}$, $(C((P+ (Q+R))'),f[\beta\mapsto \beta],C((P+Q)+R)'))\in\sim_{hp}^{fr}$ and $(C((P+ (Q+R))'),f[\gamma\mapsto \gamma],C((P+Q)+R)'))\in\sim_{hp}^{fr}$, $P+(Q+R)\sim_{hp}^{fr} (P+Q)+R$, as desired.
  \item $P+P\sim_{hp}^{fr} P$. By the forward transition rules of Summation, we get

      $$\frac{P\xrightarrow{\alpha}\alpha[m]}{P+ P\xrightarrow{\alpha}\alpha[m]+\alpha[m]}(\alpha\in P) \quad \frac{P\xrightarrow{\alpha}\alpha[m]}{P\xrightarrow{\alpha}\alpha[m]}(\alpha\in P)$$

      $$\frac{P\xrightarrow{\alpha}P'}{P+ P\xrightarrow{\alpha}P'+P'}(p\subseteq P) \quad \frac{P\xrightarrow{\alpha}P'}{P\xrightarrow{\alpha}P'}(\alpha\in P)$$

      By the reverse transition rules of Summation, we get

      $$\frac{P\xtworightarrow{\alpha[m]}\alpha}{P+ P\xtworightarrow{\alpha[m]}\alpha+\alpha}(\alpha\in P) \quad \frac{P\xtworightarrow{\alpha[m]}\alpha}{P\xtworightarrow{\alpha[m]}\alpha}(\alpha\in P)$$

      $$\frac{P\xtworightarrow{\alpha[m]}P'}{P+ P\xtworightarrow{\alpha[m]}P'+P'}(\alpha\in P) \quad \frac{P\xtworightarrow{\alpha[m]}P'}{P\xtworightarrow{\alpha[m}P'}(\alpha\in P)$$

      Since $(C(P+P),f,C(P))\in\sim_{hp}^{fr}$, $(C((P+ P)'),f[\alpha\mapsto \alpha],C((P)'))\in\sim_{hp}^{fr}$, $P+ P\sim_{hp}^{fr} P$, as desired.
  \item $P+\textbf{nil}\sim_{hp}^{fr} P$. There are several cases, we will not enumerate all. By the forward transition rules of Summation in Table \ref{FTRForPS}, we get

      $$\frac{P\xrightarrow{\alpha}P'}{P+ \textbf{nil}\xrightarrow{\alpha}P'}(\alpha P) \quad \frac{P\xrightarrow{\alpha}P'}{P\xrightarrow{\alpha}P'}(\alpha\in P)$$

      By the reverse transition rules of Summation in Table \ref{RTRForPS}, we get

      $$\frac{P\xtworightarrow{\alpha[m]}P'}{P+ \textbf{nil}\xtworightarrow{\alpha[m]}P'}(\alpha\in P) \quad \frac{P\xtworightarrow{\alpha[m]}P'}{P\xtworightarrow{\alpha[m]}P'}(\alpha\in P)$$

      Since $(C(P+\textbf{nil}),f,C(P))\in\sim_{hp}^{fr}$, $(C((P+ \textbf{nil})'),f[\alpha\mapsto \alpha],C((P)'))\in\sim_{hp}^{fr}$, $P+ \textbf{nil}\sim_{hp}^{fr} P$, as desired.
\end{enumerate}
\end{proof}

\begin{proposition}[Monoid laws for strongly FR hhp-bisimulation] The monoid laws for strongly FR hhp-bisimulation are as follows.
\begin{enumerate}
  \item $P+Q\sim_{hhp}^{fr} Q+P$;
  \item $P+(Q+R)\sim_{hhp}^{fr} (P+Q)+R$;
  \item $P+P\sim_{hhp}^{fr} P$;
  \item $P+\textbf{nil}\sim_{hhp}^{fr} P$.
\end{enumerate}
\end{proposition}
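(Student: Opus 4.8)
The plan is to reduce the statement to the preceding proposition. By Definition~\ref{FRHHPB}, an FR hhp-bisimulation is nothing but a \emph{downward closed} FR hp-bisimulation, so for each of the four equations it suffices to produce a posetal relation that (i) satisfies the forward and reverse transfer clauses of Definition~\ref{FRHHPB}, and (ii) is downward closed in the sense of the posetal-product definition. Clause (i) has essentially been handled already in the proof of the monoid laws for strongly FR hp-bisimulation, so the only genuine new obligation is (ii).

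Concretely, for $P+Q\sim_{hhp}^{fr}Q+P$ I would take $R$ to be the set of \emph{all} triples $(C_1,f,C_2)$ with $C_1\in\mathcal{C}(P+Q)$, $C_2\in\mathcal{C}(Q+P)$, and $f$ the canonical label- and causality-preserving bijection induced by the identification of events across the two summands; for associativity one takes the analogous canonical bijection between $\mathcal{C}(P+(Q+R))$ and $\mathcal{C}((P+Q)+R)$; for $P+P\sim_{hhp}^{fr}P$ one maps an event living in either copy of $P$ to the corresponding event of $P$; and for $P+\textbf{nil}\sim_{hhp}^{fr}P$ one uses $\mathit{Id}$ on $\mathcal{C}(P)$. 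That each such $R$ meets the forward clause (matching $C_1\xrightarrow{e_1}C_1'$ by the structurally identical $C_2\xrightarrow{e_2}C_2'$ with $f[e_1\mapsto e_2]$) and the reverse clause (matching $C_1'\xtworightarrow{e_1[m]}C_1$ by $C_2'\xtworightarrow{e_2[n]}C_2$ with $f'[e_1[m]\mapsto e_2[n]]$) follows by the same rule-by-rule inspection of Tables~\ref{FTRForPS} and \ref{RTRForPS} used in the hp-case, since the operational rules for Summation never synchronize or merge events — a (reverse) move on one side is answered by the same (reverse) move on the other, leaving the canonical correspondence intact.

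Downward closure is then immediate by the way $R$ was chosen: if $(C_1',f',C_2')\subseteq(C_1,f,C_2)$ pointwise with $(C_1,f,C_2)\in R$, then $C_1'$ is again a configuration of the first process, $C_2'$ a configuration of the second, and $f'$ is exactly the restriction of $f$ to $C_1'$, which is still the canonical bijection for $C_1'$; hence $(C_1',f',C_2')\in R$. The same argument applies verbatim to the three other laws. The one point that requires care — and which I expect to be the main (indeed only) obstacle — is precisely this choice of $R$: it must be taken as the set of \emph{all} triples of the canonical shape rather than merely the triples reachable along matching runs, because only then is downward closure automatic; one must then double-check that the transfer clauses still hold for this larger relation, but that is immediate since the rule analysis above was carried out for arbitrary configurations, not just reachable ones.
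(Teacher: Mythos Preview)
Your proposal is correct and follows the same underlying strategy as the paper: verify the forward and reverse transfer clauses by the rule-by-rule inspection already carried out for $\sim_{hp}^{fr}$, and then argue that the witnessing relation is an FR hhp-bisimulation. The paper's own proof, however, does not isolate downward closure as a separate obligation at all; it simply repeats the hp-proof verbatim (the same transition-rule derivations from Tables~\ref{FTRForPS} and~\ref{RTRForPS}) with the subscript changed from $hp$ to $hhp$, and concludes by asserting $(C((P+Q)'),f[\alpha\mapsto\alpha],C((Q+P)'))\in\sim_{hhp}^{fr}$ without further comment. Your version is therefore more explicit about the one genuinely new ingredient---taking $R$ to consist of \emph{all} triples of the canonical shape so that downward closure is automatic---whereas the paper leaves this implicit. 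In substance the two arguments coincide; in presentation yours is the more careful of the two.
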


\begin{proof}
\begin{enumerate}
  \item $P+Q\sim_{hhp}^{fr} Q+P$. There are several cases, we will not enumerate all. By the forward transition rules of Summation in Table \ref{FTRForPS}, we get

      $$\frac{P\xrightarrow{\alpha}P'\quad \alpha\notin Q}{P+ Q\xrightarrow{\alpha}P'+Q} (\alpha\in P) \quad \frac{P\xrightarrow{\alpha}P'\quad \alpha\notin Q}{Q+ P\xrightarrow{\alpha}Q+ P'}(\alpha\in P)$$

      $$\frac{Q\xrightarrow{\beta}Q'\quad\beta\notin P}{P+ Q\xrightarrow{\beta}P+Q'}(\beta\in Q) \quad \frac{Q\xrightarrow{\beta}Q'\quad\beta\notin P}{Q+ P\xrightarrow{\beta}Q'+P }(\beta\in Q)$$

      By the reverse transition rules of Summation in Table \ref{RTRForPS}, we get

      $$\frac{P\xtworightarrow{\alpha[m]}P'\quad\alpha\notin Q}{P+ Q\xtworightarrow{\alpha[m]}P'+Q} (\alpha\in P) \quad \frac{P\xtworightarrow{\alpha[m]}P'\quad\alpha\notin Q}{Q+ P\xtworightarrow{\alpha[m]}Q+ P'}(\alpha P)$$

      $$\frac{Q\xtworightarrow{\beta[n]}Q'\quad\beta\notin P}{P+ Q\xtworightarrow{\beta[n]}P+Q'}(\beta\in Q) \quad \frac{Q\xtworightarrow{\beta[n]}Q'\quad\beta\notin P}{Q+ P\xtworightarrow{\beta[n]}Q'+P }(\beta\in Q)$$

      Since $(C(P+ Q),f,C(Q+ P))\in\sim_{hhp}^{fr}$, $(C((P+ Q)'),f[\alpha\mapsto \alpha],C((Q+ P)'))\in\sim_{hhp}^{fr}$ and $(C((P+ Q)'),f[\beta\mapsto \beta],C((Q+ P)'))\in\sim_{hhp}^{fr}$, $P+ Q\sim_{hhp}^{fr} Q+ P$, as desired.
  \item $P+(Q+R)\sim_{hhp}^{fr} (P+Q)+R$. There are several cases, we will not enumerate all. By the forward transition rules of Summation in Table \ref{FTRForPS}, we get

      $$\frac{P\xrightarrow{\alpha}P'\quad\alpha\notin Q\quad\alpha\notin R}{P+(Q+R)\xrightarrow{\alpha}P'+(Q+R)}(\alpha\in P) \quad \frac{P\xrightarrow{\alpha}P'\quad\alpha\notin Q\quad\alpha\notin R}{(P+Q)+R\xrightarrow{\alpha}(P'+Q)+R}(\alpha\in P)$$

      $$\frac{Q\xrightarrow{\beta}Q'\quad\beta\notin P\quad\beta\notin R}{P+(Q+R)\xrightarrow{\beta}P+(Q'+R)}(\beta\in Q) \quad \frac{Q\xrightarrow{\beta}Q'\quad\beta\notin P\quad\beta\notin R}{(P+Q)+R\xrightarrow{\beta}(P+Q')+R}(\beta\in Q)$$

      $$\frac{R\xrightarrow{\gamma}R'\quad\gamma\notin P\quad\gamma\notin Q}{P+(Q+R)\xrightarrow{\gamma}P+(Q+R')}(\gamma\in R) \quad \frac{R\xrightarrow{\gamma}R'\quad\gamma\notin P\quad\gamma\notin Q}{(P+Q)+R\xrightarrow{\gamma}(P+Q)+R'}(\gamma\in R)$$

      By the reverse transition rules of Summation in Table \ref{RTRForPS}, we get

      $$\frac{P\xtworightarrow{\alpha[m]}P'\quad\alpha\notin Q\quad\alpha\notin R}{P+(Q+R)\xtworightarrow{\alpha[m]}P'+(Q+R)}(\alpha\in P) \quad \frac{P\xtworightarrow{\alpha[m]}P'\quad\alpha\notin Q\quad\alpha\notin R}{(P+Q)+R\xtworightarrow{\alpha[m]}(P'+Q)+R}(\alpha\in P)$$

      $$\frac{Q\xtworightarrow{\beta[n]}Q'\quad\beta\notin P\quad\beta\notin R}{P+(Q+R)\xtworightarrow{\beta[n]}P+(Q'+R)}(\beta\in Q) \quad \frac{Q\xtworightarrow{\beta[n]}Q'\quad\beta\notin P\quad\beta\notin R}{(P+Q)+R\xtworightarrow{\beta[n]}(P+Q')+R}(\beta Q)$$

      $$\frac{R\xtworightarrow{\gamma[k]}R'\quad\gamma\notin P\quad\gamma\notin Q}{P+(Q+R)\xtworightarrow{\gamma[k]}P+(Q+R')}(\gamma\in R) \quad \frac{R\xtworightarrow{\gamma[k]}R'\quad\gamma\notin P\quad\gamma\notin Q}{(P+Q)+R\xtworightarrow{\gamma[k]}(P+Q)+R'}(\gamma\in R)$$

      Since $(C(P+ (Q+R)),f,C((P+Q)+R))\in\sim_{hhp}^{fr}$, $(C((P+ (Q+R))'),f[\alpha\mapsto \alpha],C((P+Q)+R)'))\in\sim_{hhp}^{fr}$, $(C((P+ (Q+R))'),f[\beta\mapsto \beta],C((P+Q)+R)'))\in\sim_{hhp}^{fr}$ and $(C((P+ (Q+R))'),f[\gamma\mapsto \gamma],C((P+Q)+R)'))\in\sim_{hhp}^{fr}$, $P+(Q+R)\sim_{hhp}^{fr} (P+Q)+R$, as desired.
  \item $P+P\sim_{hhp}^{fr} P$. By the forward transition rules of Summation, we get

      $$\frac{P\xrightarrow{\alpha}\alpha[m]}{P+ P\xrightarrow{\alpha}\alpha[m]+\alpha[m]}(\alpha\in P) \quad \frac{P\xrightarrow{\alpha}\alpha[m]}{P\xrightarrow{\alpha}\alpha[m]}(\alpha\in P)$$

      $$\frac{P\xrightarrow{\alpha}P'}{P+ P\xrightarrow{\alpha}P'+P'}(p\subseteq P) \quad \frac{P\xrightarrow{\alpha}P'}{P\xrightarrow{\alpha}P'}(\alpha\in P)$$

      By the reverse transition rules of Summation, we get

      $$\frac{P\xtworightarrow{\alpha[m]}\alpha}{P+ P\xtworightarrow{\alpha[m]}\alpha+\alpha}(\alpha\in P) \quad \frac{P\xtworightarrow{\alpha[m]}\alpha}{P\xtworightarrow{\alpha[m]}\alpha}(\alpha\in P)$$

      $$\frac{P\xtworightarrow{\alpha[m]}P'}{P+ P\xtworightarrow{\alpha[m]}P'+P'}(\alpha\in P) \quad \frac{P\xtworightarrow{\alpha[m]}P'}{P\xtworightarrow{\alpha[m}P'}(\alpha\in P)$$

      Since $(C(P+P),f,C(P))\in\sim_{hhp}^{fr}$, $(C((P+ P)'),f[\alpha\mapsto \alpha],C((P)'))\in\sim_{hhp}^{fr}$, $P+ P\sim_{hhp}^{fr} P$, as desired.
  \item $P+\textbf{nil}\sim_{hhp}^{fr} P$. There are several cases, we will not enumerate all. By the forward transition rules of Summation in Table \ref{FTRForPS}, we get

      $$\frac{P\xrightarrow{\alpha}P'}{P+ \textbf{nil}\xrightarrow{\alpha}P'}(\alpha P) \quad \frac{P\xrightarrow{\alpha}P'}{P\xrightarrow{\alpha}P'}(\alpha\in P)$$

      By the reverse transition rules of Summation in Table \ref{RTRForPS}, we get

      $$\frac{P\xtworightarrow{\alpha[m]}P'}{P+ \textbf{nil}\xtworightarrow{\alpha[m]}P'}(\alpha\in P) \quad \frac{P\xtworightarrow{\alpha[m]}P'}{P\xtworightarrow{\alpha[m]}P'}(\alpha\in P)$$

      Since $(C(P+\textbf{nil}),f,C(P))\in\sim_{hhp}^{fr}$, $(C((P+ \textbf{nil})'),f[\alpha\mapsto \alpha],C((P)'))\in\sim_{hhp}^{fr}$, $P+ \textbf{nil}\sim_{hhp}^{fr} P$, as desired.
\end{enumerate}
\end{proof}

\begin{proposition}[Static laws for strongly FR step bisimulation] \label{SLSSB}
The static laws for strongly FR step bisimulation are as follows.
\begin{enumerate}
  \item $P\parallel Q\sim_s^{fr} Q\parallel P$;
  \item $P\parallel(Q\parallel R)\sim_s^{fr} (P\parallel Q)\parallel R$;
  \item $P\parallel \textbf{nil}\sim_s^{fr} P$;
  \item $P\setminus L\sim_s^{fr} P$, if $\mathcal{L}(P)\cap(L\cup\overline{L})=\emptyset$;
  \item $P\setminus K\setminus L\sim_s^{fr} P\setminus(K\cup L)$;
  \item $P[f]\setminus L\sim_s^{fr} P\setminus f^{-1}(L)[f]$;
  \item $(P\parallel Q)\setminus L\sim_s^{fr} P\setminus L\parallel Q\setminus L$, if $\mathcal{L}(P)\cap\overline{\mathcal{L}(Q)}\cap(L\cup\overline{L})=\emptyset$;
  \item $P[Id]\sim_s^{fr} P$;
  \item $P[f]\sim_s^{fr} P[f']$, if $f\upharpoonright\mathcal{L}(P)=f'\upharpoonright\mathcal{L}(P)$;
  \item $P[f][f']\sim_s^{fr} P[f'\circ f]$;
  \item $(P\parallel Q)[f]\sim_s^{fr} P[f]\parallel Q[f]$, if $f\upharpoonright(L\cup\overline{L})$ is one-to-one, where $L=\mathcal{L}(P)\cup\mathcal{L}(Q)$.
\end{enumerate}
\end{proposition}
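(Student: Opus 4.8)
The plan is to prove each of the eleven identities $P_L\sim_s^{fr}P_R$ by the same coinductive recipe: exhibit an FR step bisimulation $R$ with $(P_L,P_R)\in R$ and check the four transfer clauses of Definition~\ref{FRPSB} (forward and reverse, each in both directions) by induction on the depth of the derivation of a transition, consulting the composition rules of Tables~\ref{FTRForCom} and~\ref{RTRForCom} and the restriction/relabelling/constant rules of Tables~\ref{FTRForRRC} and~\ref{RTRForRRC}. In each case $R$ will be the evident structural relation, closed under the syntax so that it already contains every derivative pair, including those carrying past events $\alpha[m]$: for instance for (1) one takes $R=\{(P\parallel Q,\ Q\parallel P):P,Q\in\mathcal P\}\cup Id_{\mathcal P}$, and analogously for (2) and for the relabelling/restriction laws (5), (6), (8), (9), (10). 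The unit law (3), $P\parallel\textbf{nil}\sim_s^{fr}P$, additionally exploits that $\textbf{nil}$ has no forward and no reverse transitions, so that the negative premise $\textbf{nil}\nrightarrow$ of the first composition rule, and its reverse counterpart, are always met, and $P\parallel\textbf{nil}$ mimics $P$ move for move in both directions.

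For the laws with side conditions on labels, (4), (5), (6), (9), the case analysis stays routine but invokes the two propositions following Definition~\ref{sorts}: if $P\xrightarrow{\alpha}P'$ or $P\xtworightarrow{\alpha[m]}P'$ then $\alpha\in\mathcal L(P)\cup\{\tau\}$ and $\mathcal L(P')\subseteq\mathcal L(P)$. Thus in (4) the hypothesis $\mathcal L(P)\cap(L\cup\overline L)=\emptyset$ guarantees $\alpha,\overline\alpha\notin L$ for every observable $\alpha$ of $P$ so that $\setminus L$ blocks nothing; in (5) the two successive side conditions combine to $\alpha,\overline\alpha\notin K\cup L$; in (6) one matches $f(\alpha),\overline{f(\alpha)}\notin L$ with $\alpha,\overline\alpha\notin f^{-1}(L)$; and in (9) the agreement $f\upharpoonright\mathcal L(P)=f'\upharpoonright\mathcal L(P)$ suffices because only labels of $P$ are ever relabelled. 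Since $\mathcal L$ only shrinks along transitions, these side conditions persist at every derivative, so $R$ stays closed.

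I expect the real obstacle to be (7) and (11), where $\setminus L$, respectively $[f]$, must be pushed through $\parallel$ while the communication rule $\frac{P\xrightarrow{l}P'\quad Q\xrightarrow{\overline l}Q'}{P\parallel Q\xrightarrow{\tau}P'\parallel Q'}$ and its reverse analogue are in play (the negative premises of the independent-move rules in (1) and (2) also need bookkeeping, but that is tractable). For (7): a $\tau$-step of $(P\parallel Q)\setminus L$ comes from $P\xrightarrow{l}P'$, $Q\xrightarrow{\overline l}Q'$; by the sorts proposition $l\in\mathcal L(P)$ and $\overline l\in\mathcal L(Q)$, so $l\in\mathcal L(P)\cap\overline{\mathcal L(Q)}$, and the hypothesis forces $l,\overline l\notin L$, whence $P\setminus L\xrightarrow{l}P'\setminus L$ and $Q\setminus L\xrightarrow{\overline l}Q'\setminus L$ are legal and $P\setminus L\parallel Q\setminus L\xrightarrow{\tau}(P'\setminus L)\parallel(Q'\setminus L)$; the converse and the reverse versions (with past markings $l[m],\overline l[m]$) are symmetric. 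For (11): a $\tau$-step of $(P\parallel Q)[f]$ is $f(\tau)=\tau$ from $P\xrightarrow{l}P'$, $Q\xrightarrow{\overline l}Q'$, matched by $P[f]\parallel Q[f]\xrightarrow{\tau}P'[f]\parallel Q'[f]$; conversely, a $\tau$-step of $P[f]\parallel Q[f]$ must come from $P\xrightarrow{l_1}P'$, $Q\xrightarrow{l_2}Q'$ with $f(l_1)=m$, $f(l_2)=\overline m=\overline{f(l_1)}=f(\overline{l_1})$, and since $l_1,l_2\in\mathcal L(P)\cup\mathcal L(Q)=L$ and $f\upharpoonright(L\cup\overline L)$ is one-to-one, $l_2=\overline{l_1}$, so the left side really does synchronize. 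The same injectivity shows an independent step $\{f(\alpha),f(\beta)\}$ with $f(\beta)\neq\overline{f(\alpha)}$ corresponds exactly to $\{\alpha,\beta\}$ with $\beta\neq\overline\alpha$, so no synchronization is spuriously introduced or lost when commuting $[f]$.

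Finally, because the equivalence here is step rather than pomset bisimulation, each matched multiset of actions must itself be a step, i.e.\ pairwise concurrent; this is automatic since $\setminus L$ and $[f]$ alter neither causality nor conflict, the independent-composition rule contributes its two actions as concurrent events, and the candidate relations plainly realize the required $X_1\sim X_2$ (indeed $X_1=X_2$ up to the bijective $f$). Collecting the per-law verifications then yields the eleven FR step bisimulations and proves the proposition.
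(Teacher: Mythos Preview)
Your proposal is correct and follows essentially the same coinductive strategy as the paper: both exhibit the natural structural relation and verify that each forward/reverse transition of one side is matched by the other using the Composition, Restriction and Relabelling rules, with the ``assumption'' on derivatives in the paper being exactly your closure of $R$ under successors. Your write-up is in fact more explicit than the paper's on the delicate points---the use of the sorts propositions to propagate side conditions, and the injectivity argument needed for the converse $\tau$-case in~(11)---while the paper simply lists the matching rule instances and appeals to the inductive hypothesis.
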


\begin{proof}
Though transition rules in Table \ref{FTRForCom}, \ref{RTRForCom}, \ref{FTRForRRC}, \ref{RTRForRRC} are defined in the flavor of single event, they can be modified into a step (a set of events within which each event is pairwise concurrent), we omit them. If we treat a single event as a step containing just one event, the proof of the static laws does not exist any problem, so we use this way and still use the transition rules in Table \ref{FTRForCom}, \ref{RTRForCom}, \ref{FTRForRRC}, \ref{RTRForRRC}.

\begin{enumerate}
  \item $P\parallel Q\sim_s^{fr} Q\parallel P$. By the forward transition rules of Composition, we get

      $$\frac{P\xrightarrow{\alpha}P'\quad Q\nrightarrow}{P\parallel Q\xrightarrow{\alpha}P'\parallel Q}
      \quad\frac{P\xrightarrow{\alpha}P'\quad Q\nrightarrow}{Q\parallel P\xrightarrow{\alpha}Q\parallel P'}$$

      $$\frac{Q\xrightarrow{\beta}Q'\quad P\nrightarrow}{P\parallel Q\xrightarrow{\beta}P\parallel Q'}
      \quad\frac{Q\xrightarrow{\beta}Q'\quad P\nrightarrow}{Q\parallel P\xrightarrow{\beta}Q'\parallel P}$$

      $$\frac{P\xrightarrow{\alpha}P'\quad Q\xrightarrow{\beta}Q'}{P\parallel Q\xrightarrow{\{\alpha,\beta\}}P'\parallel Q'}(\beta\neq\overline{\alpha})
      \quad\frac{P\xrightarrow{\alpha}P'\quad Q\xrightarrow{\beta}Q'}{Q\parallel P\xrightarrow{\{\alpha,\beta\}}Q'\parallel P'}(\beta\neq\overline{\alpha})$$

      $$\frac{P\xrightarrow{l}P'\quad Q\xrightarrow{\overline{l}}Q'}{P\parallel Q\xrightarrow{\tau}P'\parallel Q'}
      \quad\frac{P\xrightarrow{l}P'\quad Q\xrightarrow{\overline{l}}Q'}{Q\parallel P\xrightarrow{\tau}Q'\parallel P'}$$

      By the reverse transition rules of Composition, we get

      $$\frac{P\xtworightarrow{\alpha[m]}P'\quad Q\xntworightarrow{}}{P\parallel Q\xtworightarrow{\alpha[m]}P'\parallel Q}
      \quad\frac{P\xtworightarrow{\alpha[m]}P'\quad Q\xntworightarrow{}}{Q\parallel P\xtworightarrow{\alpha[m]}Q\parallel P'}$$

      $$\frac{Q\xtworightarrow{\beta[n]}Q'\quad P\xntworightarrow{}}{P\parallel Q\xtworightarrow{\beta[n]}P\parallel Q'}
      \quad\frac{Q\xtworightarrow{\beta[n]}Q'\quad P\xntworightarrow{}}{Q\parallel P\xtworightarrow{\beta[n]}Q'\parallel P}$$

      $$\frac{P\xtworightarrow{\alpha[m]}P'\quad Q\xtworightarrow{\beta[m]}Q'}{P\parallel Q\xtworightarrow{\{\alpha[m],\beta[m]\}}P'\parallel Q'}(\beta\neq\overline{\alpha})
      \quad\frac{P\xtworightarrow{\alpha[m]}P'\quad Q\xtworightarrow{\beta[m]}Q'}{Q\parallel P\xrightarrow{\{\alpha[m],\beta[m]\}}Q'\parallel P'}(\beta\neq\overline{\alpha})$$

      $$\frac{P\xtworightarrow{l[m]}P'\quad Q\xtworightarrow{\overline{l}[m]}Q'}{P\parallel Q\xtworightarrow{\tau}P'\parallel Q'}
      \quad\frac{P\xtworightarrow{l[m]}P'\quad Q\xtworightarrow{\overline{l}[m]}Q'}{Q\parallel P\xtworightarrow{\tau}Q'\parallel P'}$$

      So, with the assumptions $P'\parallel Q \sim_s^{fr} Q\parallel P'$, $P\parallel Q' \sim_s^{fr} Q'\parallel P$ and $P'\parallel Q' \sim_s^{fr} Q'\parallel P'$, $P\parallel Q\sim_s^{fr} Q\parallel P$, as desired.
  \item $P\parallel(Q\parallel R)\sim_s^{fr} (P\parallel Q)\parallel R$. By the forward transition rules of Composition, we get

      $$\frac{P\xrightarrow{\alpha}P'\quad Q\nrightarrow\quad R\nrightarrow}{P\parallel (Q\parallel R)\xrightarrow{\alpha}P'\parallel (Q\parallel R)}
      \quad\frac{P\xrightarrow{\alpha}P'\quad Q\nrightarrow\quad R\nrightarrow}{(P\parallel Q)\parallel R\xrightarrow{\alpha}(P'\parallel Q)\parallel R}$$

      $$\frac{Q\xrightarrow{\beta}Q'\quad P\nrightarrow\quad R\nrightarrow}{P\parallel (Q\parallel R)\xrightarrow{\beta}P\parallel (Q'\parallel R)}
      \quad\frac{Q\xrightarrow{\beta}Q'\quad P\nrightarrow\quad R\nrightarrow}{(P\parallel Q)\parallel R\xrightarrow{\beta}(P\parallel Q')\parallel R}$$

      $$\frac{R\xrightarrow{\gamma}R'\quad P\nrightarrow\quad Q\nrightarrow}{P\parallel (Q\parallel R)\xrightarrow{\gamma}P\parallel (Q\parallel R')}
      \quad\frac{R\xrightarrow{\gamma}R'\quad P\nrightarrow\quad Q\nrightarrow}{(P\parallel Q)\parallel R\xrightarrow{\gamma}(P\parallel Q)\parallel R'}$$

      $$\frac{P\xrightarrow{\alpha}P'\quad Q\xrightarrow{\beta}Q'\quad R\nrightarrow}{P\parallel (Q\parallel R)\xrightarrow{\{\alpha,\beta\}}P'\parallel (Q'\parallel R)}(\beta\neq\overline{\alpha})
      \quad\frac{P\xrightarrow{\alpha}P'\quad Q\xrightarrow{\beta}Q'\quad R\nrightarrow}{(P\parallel Q)\parallel R\xrightarrow{\{\alpha,\beta\}}(P'\parallel Q')\parallel R}(\beta\neq\overline{\alpha})$$

      $$\frac{P\xrightarrow{\alpha}P'\quad R\xrightarrow{\gamma}R'\quad Q\nrightarrow}{P\parallel (Q\parallel R)\xrightarrow{\{\alpha,\gamma\}}P'\parallel (Q\parallel R')}(\gamma\neq\overline{\alpha})
      \quad\frac{P\xrightarrow{\alpha}P'\quad R\xrightarrow{\gamma}R'\quad Q\nrightarrow}{(P\parallel Q)\parallel R\xrightarrow{\{\alpha,\gamma\}}(P'\parallel Q)\parallel R]}(\gamma\neq\overline{\alpha})$$

      $$\frac{Q\xrightarrow{\beta}P'\quad R\xrightarrow{\gamma}R'\quad P\nrightarrow}{P\parallel (Q\parallel R)\xrightarrow{\{\beta,\gamma\}}P\parallel (Q'\parallel R')}(\gamma\neq\overline{\beta})
      \quad\frac{Q\xrightarrow{\beta}Q'\quad R\xrightarrow{\gamma}R'\quad P\nrightarrow}{(P\parallel Q)\parallel R\xrightarrow{\{\beta,\gamma\}}(P\parallel Q')\parallel R'}(\gamma\neq\overline{\beta})$$

      $$\frac{P\xrightarrow{\alpha}P'\quad Q\xrightarrow{\beta}Q'\quad R\xrightarrow{\gamma}R'}{P\parallel (Q\parallel R)\xrightarrow{\{\alpha,\beta,\gamma\}}P'\parallel (Q'\parallel R')}(\beta\neq\overline{\alpha},\gamma\neq\overline{\alpha},\gamma\neq\overline{\beta})
      \quad\frac{P\xrightarrow{\alpha}P'\quad Q\xrightarrow{\beta}Q'\quad R\xrightarrow{\gamma}R'}{(P\parallel Q)\parallel R\xrightarrow{\{\alpha,\beta,\gamma\}}(P'\parallel Q')\parallel R'}(\beta\neq\overline{\alpha},\gamma\neq\overline{\alpha},\gamma\neq\overline{\beta})$$

      $$\frac{P\xrightarrow{l}P'\quad Q\xrightarrow{\overline{l}}Q'\quad R\nrightarrow}{P\parallel (Q\parallel R)\xrightarrow{\tau}P'\parallel (Q'\parallel R)}
      \quad\frac{P\xrightarrow{l}P'\quad Q\xrightarrow{\overline{l}}Q'\quad R\nrightarrow}{(P\parallel Q)\parallel R\xrightarrow{\tau}(P'\parallel Q')\parallel R}$$

      $$\frac{P\xrightarrow{l}P'\quad R\xrightarrow{\overline{l}}R'\quad Q\nrightarrow}{P\parallel (Q\parallel R)\xrightarrow{\tau}P'\parallel (Q\parallel R')}
      \quad\frac{P\xrightarrow{l}P'\quad R\xrightarrow{\overline{l}}R'\quad Q\nrightarrow}{(P\parallel Q)\parallel R\xrightarrow{\tau}(P'\parallel Q)\parallel R]}$$

      $$\frac{Q\xrightarrow{l}P'\quad R\xrightarrow{\overline{l}}R'\quad P\nrightarrow}{P\parallel (Q\parallel R)\xrightarrow{\tau}P\parallel (Q'\parallel R')}
      \quad\frac{Q\xrightarrow{l}Q'\quad R\xrightarrow{\overline{l}}R'\quad P\nrightarrow}{(P\parallel Q)\parallel R\xrightarrow{\tau}(P\parallel Q')\parallel R'}$$

      $$\frac{P\xrightarrow{l}P'\quad Q\xrightarrow{\overline{l}}Q'\quad R\xrightarrow{\gamma}R'}{P\parallel (Q\parallel R)\xrightarrow{\tau,\gamma}P'\parallel (Q'\parallel R')}
      \quad\frac{P\xrightarrow{l}P'\quad Q\xrightarrow{\overline{l}}Q'\quad R\xrightarrow{\gamma}R'}{(P\parallel Q)\parallel R\xrightarrow{\tau,\gamma}(P'\parallel Q')\parallel R'}$$

      $$\frac{P\xrightarrow{l}P'\quad R\xrightarrow{\overline{l}}R'\quad Q\xrightarrow{\beta}Q'}{P\parallel (Q\parallel R)\xrightarrow{\tau,\beta}P'\parallel (Q'\parallel R')}
      \quad\frac{P\xrightarrow{l}P'\quad R\xrightarrow{\overline{l}}R'\quad Q\xrightarrow{\beta}Q'}{(P\parallel Q)\parallel R\xrightarrow{\tau,\beta}(P'\parallel Q')\parallel R]}$$

      $$\frac{Q\xrightarrow{l}Q'\quad R\xrightarrow{\overline{l}}R'\quad P\xrightarrow{\alpha}P'}{P\parallel (Q\parallel R)\xrightarrow{\tau,\alpha}P'\parallel (Q'\parallel R')}
      \quad\frac{Q\xrightarrow{l}Q'\quad R\xrightarrow{\overline{l}}R'\quad P\xrightarrow{\alpha}P'}{(P\parallel Q)\parallel R\xrightarrow{\tau,\alpha}(P'\parallel Q')\parallel R'}$$

      By the reverse transition rules of Composition, we get

      $$\frac{P\xtworightarrow{\alpha[m]}P'\quad Q\xntworightarrow{}\quad R\xntworightarrow{}}{P\parallel (Q\parallel R)\xtworightarrow{\alpha[m]}P'\parallel (Q\parallel R)}
      \quad\frac{P\xtworightarrow{\alpha[m]}P'\quad Q\xntworightarrow{}\quad R\xntworightarrow{}}{(P\parallel Q)\parallel R\xtworightarrow{\alpha[m]}(P'\parallel Q)\parallel R}$$

      $$\frac{Q\xtworightarrow{\beta[n]}Q'\quad P\xntworightarrow{}\quad R\xntworightarrow{}}{P\parallel (Q\parallel R)\xtworightarrow{\beta[n]}P\parallel (Q'\parallel R)}
      \quad\frac{Q\xtworightarrow{\beta[n]}Q'\quad P\xntworightarrow{}\quad R\xntworightarrow{}}{(P\parallel Q)\parallel R\xtworightarrow{\beta[n]}(P\parallel Q')\parallel R}$$

      $$\frac{R\xtworightarrow{\gamma[k]}R'\quad P\xntworightarrow{}\quad Q\xntworightarrow{}}{P\parallel (Q\parallel R)\xtworightarrow{\gamma[k]}P\parallel (Q\parallel R')}
      \quad\frac{R\xtworightarrow{\gamma[k]}R'\quad P\xntworightarrow{}\quad Q\xntworightarrow{}}{(P\parallel Q)\parallel R\xtworightarrow{\gamma[k]}(P\parallel Q)\parallel R'}$$

      $$\frac{P\xtworightarrow{\alpha[m]}P'\quad Q\xtworightarrow{\beta[m]}Q'\quad R\xntworightarrow{}}{P\parallel (Q\parallel R)\xtworightarrow{\{\alpha[m],\beta[m]\}}P'\parallel (Q'\parallel R)}(\beta\neq\overline{\alpha})
      \quad\frac{P\xtworightarrow{\alpha[m]}P'\quad Q\xtworightarrow{\beta[m]}Q'\quad R\xntworightarrow{}}{(P\parallel Q)\parallel R\xtworightarrow{\{\alpha[m],\beta[m]\}}(P'\parallel Q')\parallel R}(\beta\neq\overline{\alpha})$$

      $$\frac{P\xtworightarrow{\alpha[m]}P'\quad R\xtworightarrow{\gamma[m]}R'\quad Q\xntworightarrow{}}{P\parallel (Q\parallel R)\xtworightarrow{\{\alpha[m],\gamma[m]\}}P'\parallel (Q\parallel R')}(\gamma\neq\overline{\alpha})
      \quad\frac{P\xtworightarrow{\alpha[m]}P'\quad R\xtworightarrow{\gamma[m]}R'\quad Q\xntworightarrow{}}{(P\parallel Q)\parallel R\xtworightarrow{\{\alpha[m],\gamma[m]\}}(P'\parallel Q)\parallel R]}(\gamma\neq\overline{\alpha})$$

      $$\frac{Q\xtworightarrow{\beta[m]}P'\quad R\xtworightarrow{\gamma[m]}R'\quad P\xntworightarrow{}}{P\parallel (Q\parallel R)\xtworightarrow{\{\beta[m],\gamma[m]\}}P\parallel (Q'\parallel R')}(\gamma\neq\overline{\beta})
      \quad\frac{Q\xtworightarrow{\beta[m]}Q'\quad R\xtworightarrow{\gamma[m]}R'\quad P\xntworightarrow{}}{(P\parallel Q)\parallel R\xtworightarrow{\{\beta[m],\gamma[m]\}}(P\parallel Q')\parallel R'}(\gamma\neq\overline{\beta})$$

      $$\frac{P\xtworightarrow{\alpha[m]}P'\quad Q\xtworightarrow{\beta[m]}Q'\quad R\xtworightarrow{\gamma[m]}R'}{P\parallel (Q\parallel R)\xtworightarrow{\{\alpha[m],\beta[m],\gamma[m]\}}P'\parallel (Q'\parallel R')}(\beta\neq\overline{\alpha},\gamma\neq\overline{\alpha},\gamma\neq\overline{\beta})$$
      $$\frac{P\xtworightarrow{\alpha[m]}P'\quad Q\xtworightarrow{\beta[m]}Q'\quad R\xtworightarrow{\gamma[m]}R'}{(P\parallel Q)\parallel R\xtworightarrow{\{\alpha[m],\beta[m],\gamma[m]\}}(P'\parallel Q')\parallel R'}(\beta\neq\overline{\alpha},\gamma\neq\overline{\alpha},\gamma\neq\overline{\beta})$$

      $$\frac{P\xtworightarrow{l[m]}P'\quad Q\xtworightarrow{\overline{l}[m]}Q'\quad R\xntworightarrow{}}{P\parallel (Q\parallel R)\xtworightarrow{\tau}P'\parallel (Q'\parallel R)}
      \quad\frac{P\xtworightarrow{l[m]}P'\quad Q\xtworightarrow{\overline{l}[m]}Q'\quad R\xntworightarrow{}}{(P\parallel Q)\parallel R\xtworightarrow{\tau}(P'\parallel Q')\parallel R}$$

      $$\frac{P\xtworightarrow{l[m]}P'\quad R\xtworightarrow{\overline{l}[m]}R'\quad Q\xntworightarrow{}}{P\parallel (Q\parallel R)\xtworightarrow{\tau}P'\parallel (Q\parallel R')}
      \quad\frac{P\xtworightarrow{l[m]}P'\quad R\xtworightarrow{\overline{l}[m]}R'\quad Q\xntworightarrow{}}{(P\parallel Q)\parallel R\xtworightarrow{\tau}(P'\parallel Q)\parallel R]}$$

      $$\frac{Q\xtworightarrow{l[m]}P'\quad R\xtworightarrow{\overline{l}[m]}R'\quad P\xntworightarrow{}}{P\parallel (Q\parallel R)\xtworightarrow{\tau}P\parallel (Q'\parallel R')}
      \quad\frac{Q\xtworightarrow{l[m]}Q'\quad R\xtworightarrow{\overline{l}[m]}R'\quad P\xntworightarrow{}}{(P\parallel Q)\parallel R\xtworightarrow{\tau}(P\parallel Q')\parallel R'}$$

      $$\frac{P\xtworightarrow{l[m]}P'\quad Q\xtworightarrow{\overline{l}[m]}Q'\quad R\xtworightarrow{\gamma[m]}R'}{P\parallel (Q\parallel R)\xtworightarrow{\tau,\gamma[m]}P'\parallel (Q'\parallel R')}
      \quad\frac{P\xtworightarrow{l[m]}P'\quad Q\xtworightarrow{\overline{l}[m]}Q'\quad R\xtworightarrow{\gamma[m]}R'}{(P\parallel Q)\parallel R\xtworightarrow{\tau,\gamma[m]}(P'\parallel Q')\parallel R'}$$

      $$\frac{P\xtworightarrow{l[m]}P'\quad R\xtworightarrow{\overline{l}[m]}R'\quad Q\xtworightarrow{\beta[m]}Q'}{P\parallel (Q\parallel R)\xtworightarrow{\tau,\beta[m]}P'\parallel (Q'\parallel R')}
      \quad\frac{P\xtworightarrow{l[m]}P'\quad R\xtworightarrow{\overline{l}[m]}R'\quad Q\xtworightarrow{\beta[m]}Q'}{(P\parallel Q)\parallel R\xtworightarrow{\tau,\beta[m]}(P'\parallel Q')\parallel R]}$$

      $$\frac{Q\xtworightarrow{l[m]}Q'\quad R\xtworightarrow{\overline{l}[m]}R'\quad P\xtworightarrow{\alpha[m]}P'}{P\parallel (Q\parallel R)\xtworightarrow{\tau,\alpha[m]}P'\parallel (Q'\parallel R')}
      \quad\frac{Q\xtworightarrow{l[m]}Q'\quad R\xtworightarrow{\overline{l}[m]}R'\quad P\xtworightarrow{\alpha[m]}P'}{(P\parallel Q)\parallel R\xtworightarrow{\tau,\alpha[m]}(P'\parallel Q')\parallel R'}$$

      So, with the assumptions $P'\parallel (Q\parallel R) \sim_s^{fr} (P'\parallel Q)\parallel R$, $P\parallel (Q'\parallel R) \sim_s^{fr} (P\parallel Q')\parallel R$, $P\parallel (Q\parallel R') \sim_s^{fr} (P\parallel Q)\parallel R'$, $P'\parallel (Q'\parallel R) \sim_s^{fr} (P'\parallel Q')\parallel R$, $P'\parallel (Q\parallel R') \sim_s^{fr} (P'\parallel Q)\parallel R'$, $P\parallel (Q'\parallel R') \sim_s^{fr} (P\parallel Q')\parallel R'$ and $P'\parallel (Q'\parallel R') \sim_s^{fr} (P'\parallel Q')\parallel R'$, $P\parallel (Q\parallel R) \sim_s^{fr} (P\parallel Q)\parallel R$, as desired.
  \item $P\parallel \textbf{nil}\sim_s^{fr} P$. By the forward transition rules of Composition, we get

      $$\frac{P\xrightarrow{\alpha}P'}{P\parallel \textbf{nil}\xrightarrow{\alpha}P'} \quad \frac{P\xrightarrow{\alpha}P'}{P\xrightarrow{\alpha}P'}$$

      By the reverse transition rules of Composition, we get

      $$\frac{P\xtworightarrow{\alpha[m]}P'}{P\parallel \textbf{nil}\xtworightarrow{\alpha[m]}P'} \quad \frac{P\xtworightarrow{\alpha[m]}P'}{P\xtworightarrow{\alpha[m]}P'}$$

      Since $P'\sim_s^{fr} P'$, $P\parallel \textbf{nil}\sim_s^{fr} P$, as desired.
  \item $P\setminus L\sim_s^{fr} P$, if $\mathcal{L}(P)\cap(L\cup\overline{L})=\emptyset$. By the forward transition rules of Restriction, we get

      $$\frac{P\xrightarrow{\alpha}P'}{P\setminus L\xrightarrow{\alpha}P'\setminus L} (\mathcal{L}(P)\cap(L\cup\overline{L})=\emptyset)\quad \frac{P\xrightarrow{\alpha}P'}{P\xrightarrow{\alpha}P'}$$

      By the reverse transition rules of Restriction, we get

      $$\frac{P\xtworightarrow{\alpha[m]}P'}{P\setminus L\xtworightarrow{\alpha[m]}P'\setminus L} (\mathcal{L}(P)\cap(L\cup\overline{L})=\emptyset)\quad \frac{P\xtworightarrow{\alpha[m]}P'}{P\xtworightarrow{\alpha[m]}P'}$$

      Since $P'\sim_s^{fr} P'$, and with the assumption $P'\setminus L\sim_s^{fr} P'$, $P\setminus L\sim_s^{fr} P$, if $\mathcal{L}(P)\cap(L\cup\overline{L})=\emptyset$, as desired.
  \item $P\setminus K\setminus L\sim_s^{fr} P\setminus(K\cup L)$. By the forward transition rules of Restriction, we get

      $$\frac{P\xrightarrow{\alpha}P'}{P\setminus K\setminus L\xrightarrow{\alpha}P'\setminus K\setminus L} \quad \frac{P\xrightarrow{\alpha}P'}{P\setminus (K\cup L)\xrightarrow{\alpha}P'\setminus (K\cup L)}$$

      By the reverse transition rules of Restriction, we get

      $$\frac{P\xtworightarrow{\alpha[m]}P'}{P\setminus K\setminus L\xtworightarrow{\alpha[m]}P'\setminus K\setminus L} \quad \frac{P\xtworightarrow{\alpha[m]}P'}{P\setminus (K\cup L)\xtworightarrow{\alpha[m]}P'\setminus (K\cup L)}$$

      Since $P'\sim_s^{fr} P'$, and with the assumption $P'\setminus K\setminus L\sim_s^{fr} P'\setminus(K\cup L)$, $P\setminus K\setminus L\sim_s^{fr} P\setminus(K\cup L)$, as desired.
  \item $P[f]\setminus L\sim_s^{fr} P\setminus f^{-1}(L)[f]$. By the forward transition rules of Restriction and Relabelling, we get

      $$\frac{P\xrightarrow{\alpha}P'}{P[f]\setminus L\xrightarrow{f(\alpha)}P'[f]\setminus L}\quad \frac{P\xrightarrow{\alpha}P'}{P\setminus f^{-1}(L)[f]\xrightarrow{f(\alpha)}P'\setminus f^{-1}(L)[f]}$$

      By the reverse transition rules of Restriction and Relabelling, we get

      $$\frac{P\xtworightarrow{\alpha[m]}P'}{P[f]\setminus L\xtworightarrow{f(\alpha)[m]}P'[f]\setminus L}\quad \frac{P\xtworightarrow{\alpha[m]}P'}{P\setminus f^{-1}(L)[f]\xtworightarrow{f(\alpha)[m]}P'\setminus f^{-1}(L)[f]}$$

      So, with the assumption $P'[f]\setminus L\sim_s^{fr} P'\setminus f^{-1}(L)[f]$, $P[f]\setminus L\sim_s^{fr} P\setminus f^{-1}(L)[f]$, as desired.
  \item $(P\parallel Q)\setminus L\sim_s^{fr} P\setminus L\parallel Q\setminus L$, if $\mathcal{L}(P)\cap\overline{\mathcal{L}(Q)}\cap(L\cup\overline{L})=\emptyset$. By the forward transition rules of Composition and Restriction, we get

      $$\frac{P\xrightarrow{\alpha}P'\quad Q\nrightarrow}{(P\parallel Q)\setminus L\xrightarrow{\alpha}(P'\parallel Q)\setminus L}(\mathcal{L}(P)\cap\overline{\mathcal{L}(Q)}\cap(L\cup\overline{L})=\emptyset)$$
      $$\frac{P\xrightarrow{\alpha}P'\quad Q\nrightarrow}{P\setminus L\parallel Q\setminus L\xrightarrow{\alpha}P'\setminus L\parallel Q\setminus L}(\mathcal{L}(P)\cap\overline{\mathcal{L}(Q)}\cap(L\cup\overline{L})=\emptyset)$$

      $$\frac{Q\xrightarrow{\beta}Q'\quad P\nrightarrow}{(P\parallel Q)\setminus L\xrightarrow{\beta}(P\parallel Q')\setminus L}(\mathcal{L}(P)\cap\overline{\mathcal{L}(Q)}\cap(L\cup\overline{L})=\emptyset)$$
      $$\frac{Q\xrightarrow{\beta}Q'\quad P\nrightarrow}{P\setminus L\parallel Q\setminus L\xrightarrow{\beta}P\setminus L\parallel Q'\setminus L}(\mathcal{L}(P)\cap\overline{\mathcal{L}(Q)}\cap(L\cup\overline{L})=\emptyset)$$

      $$\frac{P\xrightarrow{\alpha}P'\quad Q\xrightarrow{\beta}Q'}{(P\parallel Q)\setminus L\xrightarrow{\{\alpha,\beta\}}(P'\parallel Q')\setminus L}(\mathcal{L}(P)\cap\overline{\mathcal{L}(Q)}\cap(L\cup\overline{L})=\emptyset)$$
      $$\frac{P\xrightarrow{\alpha}P'\quad Q\xrightarrow{\beta}Q'}{P\setminus L\parallel Q\setminus L\xrightarrow{\{\alpha,\beta\}}(P'\parallel Q')\setminus L}(\mathcal{L}(P)\cap\overline{\mathcal{L}(Q)}\cap(L\cup\overline{L})=\emptyset)$$

      $$\frac{P\xrightarrow{l}P'\quad Q\xrightarrow{\overline{l}}Q'}{(P\parallel Q)\setminus L\xrightarrow{\tau}(P'\parallel Q')\setminus L}(\mathcal{L}(P)\cap\overline{\mathcal{L}(Q)}\cap(L\cup\overline{L})=\emptyset)$$
      $$\frac{P\xrightarrow{l}P'\quad Q\xrightarrow{\overline{l}}Q'}{(P\setminus L\parallel Q\setminus L\xrightarrow{\tau}P'\setminus L\parallel Q'\setminus L}(\mathcal{L}(P)\cap\overline{\mathcal{L}(Q)}\cap(L\cup\overline{L})=\emptyset)$$

      By the reverse transition rules of Composition and Restriction, we get

      $$\frac{P\xtworightarrow{\alpha[m]}P'\quad Q\xntworightarrow{}}{(P\parallel Q)\setminus L\xtworightarrow{\alpha[m]}(P'\parallel Q)\setminus L}(\mathcal{L}(P)\cap\overline{\mathcal{L}(Q)}\cap(L\cup\overline{L})=\emptyset)$$
      $$\frac{P\xtworightarrow{\alpha[m]}P'\quad Q\xntworightarrow{}}{P\setminus L\parallel Q\setminus L\xtworightarrow{\alpha[m]}P'\setminus L\parallel Q\setminus L}(\mathcal{L}(P)\cap\overline{\mathcal{L}(Q)}\cap(L\cup\overline{L})=\emptyset)$$

      $$\frac{Q\xtworightarrow{\beta[n]}Q'\quad P\xntworightarrow{}}{(P\parallel Q)\setminus L\xtworightarrow{\beta[n]}(P\parallel Q')\setminus L}(\mathcal{L}(P)\cap\overline{\mathcal{L}(Q)}\cap(L\cup\overline{L})=\emptyset)$$
      $$\frac{Q\xtworightarrow{\beta[n]}Q'\quad P\xntworightarrow{}}{P\setminus L\parallel Q\setminus L\xtworightarrow{\beta[n]}P\setminus L\parallel Q'\setminus L}(\mathcal{L}(P)\cap\overline{\mathcal{L}(Q)}\cap(L\cup\overline{L})=\emptyset)$$

      $$\frac{P\xtworightarrow{\alpha[m]}P'\quad Q\xtworightarrow{\beta[m]}Q'}{(P\parallel Q)\setminus L\xtworightarrow{\{\alpha[m],\beta[m]\}}(P'\parallel Q')\setminus L}(\mathcal{L}(P)\cap\overline{\mathcal{L}(Q)}\cap(L\cup\overline{L})=\emptyset)$$
      $$\frac{P\xtworightarrow{\alpha[m]}P'\quad Q\xtworightarrow{\beta[m]}Q'}{P\setminus L\parallel Q\setminus L\xtworightarrow{\{\alpha[m],\beta[m]\}}(P'\parallel Q')\setminus L}(\mathcal{L}(P)\cap\overline{\mathcal{L}(Q)}\cap(L\cup\overline{L})=\emptyset)$$

      $$\frac{P\xtworightarrow{l[m]}P'\quad Q\xtworightarrow{\overline{l}[m]}Q'}{(P\parallel Q)\setminus L\xtworightarrow{\tau}(P'\parallel Q')\setminus L}(\mathcal{L}(P)\cap\overline{\mathcal{L}(Q)}\cap(L\cup\overline{L})=\emptyset)$$
      $$\frac{P\xtworightarrow{l[m]}P'\quad Q\xtworightarrow{\overline{l}[m]}Q'}{(P\setminus L\parallel Q\setminus L\xtworightarrow{\tau}P'\setminus L\parallel Q'\setminus L}(\mathcal{L}(P)\cap\overline{\mathcal{L}(Q)}\cap(L\cup\overline{L})=\emptyset)$$

      Since $(P'\parallel Q)\setminus L\sim_s^{fr}P'\setminus L\parallel Q\setminus L$, $(P\parallel Q')\setminus L\sim_s^{fr} P\setminus L\parallel Q'\setminus L$ and $(P'\parallel Q')\setminus L\sim_s^{fr} P'\setminus L\parallel Q'\setminus L$, $(P\parallel Q)\setminus L\sim_s^{fr} P\setminus L\parallel Q\setminus L$, if $\mathcal{L}(P)\cap\overline{\mathcal{L}(Q)}\cap(L\cup\overline{L})=\emptyset$, as desired.
  \item $P[Id]\sim_s^{fr} P$. By the forward transition rules Relabelling, we get

      $$\frac{P\xrightarrow{\alpha}P'}{P[Id]\xrightarrow{Id(\alpha)}P'[Id]}\quad \frac{P\xrightarrow{\alpha}P'}{P\xrightarrow{\alpha}P'}$$

      By the reverse transition rules Relabelling, we get

      $$\frac{P\xtworightarrow{\alpha[m]}P'}{P[Id]\xtworightarrow{Id(\alpha[m])}P'[Id]}\quad \frac{P\xtworightarrow{\alpha[m]}P'}{P\xtworightarrow{\alpha[m]}P'}$$

      So, with the assumption $P'[Id]\sim_s^{fr} P'$ and $Id(\alpha)=\alpha$, $P[Id]\sim_s^{fr} P$, as desired.
  \item $P[f]\sim_s^{fr} P[f']$, if $f\upharpoonright\mathcal{L}(P)=f'\upharpoonright\mathcal{L}(P)$. By the forward transition rules of Relabelling, we get

      $$\frac{P\xrightarrow{\alpha}P'}{P[f]\xrightarrow{f(\alpha)}P'[f]}\quad \frac{P\xrightarrow{\alpha}P'}{P[f']\xrightarrow{f'(\alpha)}P'[f']}$$

      By the reverse transition rules of Relabelling, we get

      $$\frac{P\xtworightarrow{\alpha[m]}P'}{P[f]\xtworightarrow{f(\alpha)[m]}P'[f]}\quad \frac{P\xtworightarrow{\alpha[m]}P'}{P[f']\xtworightarrow{f'(\alpha)[m]}P'[f']}$$

      So, with the assumption $P'[f]\sim_s^{fr} P'[f']$ and $f(\alpha)=f'(\alpha)$, if $f\upharpoonright\mathcal{L}(P)=f'\upharpoonright\mathcal{L}(P)$, $P[f]\sim_s^{fr} P[f']$, as desired.
  \item $P[f][f']\sim_s^{fr} P[f'\circ f]$. By the forward transition rules of Relabelling, we get

      $$\frac{P\xrightarrow{\alpha}P'}{P[f][f']\xrightarrow{f'(f(\alpha))}P'[f][f']}\quad \frac{P\xrightarrow{\alpha}P'}{P[f'\circ f]\xrightarrow{f'(f(\alpha))}P'[f'\circ f]}$$

      By the reverse transition rules of Relabelling, we get

      $$\frac{P\xtworightarrow{\alpha[m]}P'}{P[f][f']\xrightarrow{f'(f(\alpha))[m]}P'[f][f']}\quad \frac{P\xtworightarrow{\alpha[m]}P'}{P[f'\circ f]\xtworightarrow{f'(f(\alpha))[m]}P'[f'\circ f]}$$

      So, with the assumption $P'[f][f']\sim_s^{fr} P'[f'\circ f]$, $P[f][f']\sim_s^{fr} P[f'\circ f]$, as desired.
  \item $(P\parallel Q)[f]\sim_s^{fr} P[f]\parallel Q[f]$, if $f\upharpoonright(L\cup\overline{L})$ is one-to-one, where $L=\mathcal{L}(P)\cup\mathcal{L}(Q)$. By the forward transition rules of Composition and Relabelling, we get

      $$\frac{P\xrightarrow{\alpha}P'\quad Q\nrightarrow}{(P\parallel Q)[f]\xrightarrow{f(\alpha)}(P'\parallel Q)[f]}(\textrm{if } f\upharpoonright(L\cup\overline{L}) \textrm{ is one-to-one, where }L=\mathcal{L}(P)\cup\mathcal{L}(Q))$$
      $$\frac{P\xrightarrow{\alpha}P'\quad Q\nrightarrow}{P[f]\parallel Q[f]\xrightarrow{\alpha}P'[f]\parallel Q[f]}(\textrm{if } f\upharpoonright(L\cup\overline{L}) \textrm{ is one-to-one, where }L=\mathcal{L}(P)\cup\mathcal{L}(Q))$$

      $$\frac{Q\xrightarrow{\beta}Q'\quad P\nrightarrow}{(P\parallel Q)[f]\xrightarrow{f(\beta)}(P\parallel Q')[f]}(\textrm{if } f\upharpoonright(L\cup\overline{L}) \textrm{ is one-to-one, where }L=\mathcal{L}(P)\cup\mathcal{L}(Q))$$
      $$\frac{Q\xrightarrow{\beta}Q'\quad P\nrightarrow}{P[f]\parallel Q[f]\xrightarrow{\beta}P[f]\parallel Q'[f]}(\textrm{if } f\upharpoonright(L\cup\overline{L}) \textrm{ is one-to-one, where }L=\mathcal{L}(P)\cup\mathcal{L}(Q))$$

      $$\frac{P\xrightarrow{\alpha}P'\quad Q\xrightarrow{\beta}Q'}{(P\parallel Q)[f]\xrightarrow{\{f(\alpha),f(\beta)\}}(P'\parallel Q')[f]}(\textrm{if } f\upharpoonright(L\cup\overline{L}) \textrm{ is one-to-one, where }L=\mathcal{L}(P)\cup\mathcal{L}(Q))$$
      $$\frac{P\xrightarrow{\alpha}P'\quad Q\xrightarrow{\beta}Q'}{P[f]\parallel Q[f]\xrightarrow{\{f(\alpha),f(\beta)\}}P'[f]\parallel Q'[f]}(\textrm{if } f\upharpoonright(L\cup\overline{L}) \textrm{ is one-to-one, where }L=\mathcal{L}(P)\cup\mathcal{L}(Q))$$

      $$\frac{P\xrightarrow{l}P'\quad Q\xrightarrow{\overline{l}}Q'}{(P\parallel Q)[f]\xrightarrow{\tau}(P'\parallel Q')[f]}(\textrm{if } f\upharpoonright(L\cup\overline{L}) \textrm{ is one-to-one, where }L=\mathcal{L}(P)\cup\mathcal{L}(Q))$$
      $$\frac{P\xrightarrow{l}P'\quad Q\xrightarrow{\overline{l}}Q'}{(P[f]\parallel Q[f]\xrightarrow{\tau}P'[f]\parallel Q'[f]}(\textrm{if } f\upharpoonright(L\cup\overline{L}) \textrm{ is one-to-one, where }L=\mathcal{L}(P)\cup\mathcal{L}(Q))$$

      By the reverse transition rules of Composition and Relabelling, we get

      $$\frac{P\xtworightarrow{\alpha[m]}P'\quad Q\xntworightarrow{}}{(P\parallel Q)[f]\xtworightarrow{f(\alpha)[m]}(P'\parallel Q)[f]}(\textrm{if } f\upharpoonright(L\cup\overline{L}) \textrm{ is one-to-one, where }L=\mathcal{L}(P)\cup\mathcal{L}(Q))$$
      $$\frac{P\xtworightarrow{\alpha[m]}P'\quad Q\xntworightarrow{}}{P[f]\parallel Q[f]\xtworightarrow{f(\alpha)[m]}P'[f]\parallel Q[f]}(\textrm{if } f\upharpoonright(L\cup\overline{L}) \textrm{ is one-to-one, where }L=\mathcal{L}(P)\cup\mathcal{L}(Q))$$

      $$\frac{Q\xtworightarrow{\beta[n]}Q'\quad P\xntworightarrow{}}{(P\parallel Q)[f]\xtworightarrow{f(\beta)[n]}(P\parallel Q')[f]}(\textrm{if } f\upharpoonright(L\cup\overline{L}) \textrm{ is one-to-one, where }L=\mathcal{L}(P)\cup\mathcal{L}(Q))$$
      $$\frac{Q\xtworightarrow{\beta[n]}Q'\quad P\xntworightarrow{}}{P[f]\parallel Q[f]\xtworightarrow{f(\beta)[n]}P[f]\parallel Q'[f]}(\textrm{if } f\upharpoonright(L\cup\overline{L}) \textrm{ is one-to-one, where }L=\mathcal{L}(P)\cup\mathcal{L}(Q))$$

      $$\frac{P\xtworightarrow{\alpha[m]}P'\quad Q\xtworightarrow{\beta[m]}Q'}{(P\parallel Q)[f]\xtworightarrow{\{f(\alpha)[m],f(\beta)[m]\}}(P'\parallel Q')[f]}(\textrm{if } f\upharpoonright(L\cup\overline{L}) \textrm{ is one-to-one, where }L=\mathcal{L}(P)\cup\mathcal{L}(Q))$$
      $$\frac{P\xtworightarrow{\alpha[m]}P'\quad Q\xtworightarrow{\beta[m]}Q'}{P[f]\parallel Q[f]\xtworightarrow{\{f(\alpha)[m],f(\beta)[m]\}}P'[f]\parallel Q'[f]}(\textrm{if } f\upharpoonright(L\cup\overline{L}) \textrm{ is one-to-one, where }L=\mathcal{L}(P)\cup\mathcal{L}(Q))$$

      $$\frac{P\xtworightarrow{l[m]}P'\quad Q\xtworightarrow{\overline{l}[m]}Q'}{(P\parallel Q)[f]\xtworightarrow{\tau}(P'\parallel Q')[f]}(\textrm{if } f\upharpoonright(L\cup\overline{L}) \textrm{ is one-to-one, where }L=\mathcal{L}(P)\cup\mathcal{L}(Q))$$
      $$\frac{P\xtworightarrow{l[m]}P'\quad Q\xtworightarrow{\overline{l}[m]}Q'}{(P[f]\parallel Q[f]\xtworightarrow{\tau}P'[f]\parallel Q'[f]}(\textrm{if } f\upharpoonright(L\cup\overline{L}) \textrm{ is one-to-one, where }L=\mathcal{L}(P)\cup\mathcal{L}(Q))$$

      So, with the assumptions $(P'\parallel Q)[f]\sim_s^{fr} P'[f]\parallel Q[f]$, $(P\parallel Q')[f]\sim_s^{fr} P[f]\parallel Q'[f]$ and $(P'\parallel Q')[f]\sim_s^{fr} P'[f]\parallel Q'[f]$, $(P\parallel Q)[f]\sim_s^{fr} P[f]\parallel Q[f]$, if $f\upharpoonright(L\cup\overline{L})$ is one-to-one, where $L=\mathcal{L}(P)\cup\mathcal{L}(Q)$, as desired.
\end{enumerate}
\end{proof}

\begin{proposition}[Static laws for strongly FR pomset bisimulation] \label{SLSPB}
The static laws for strongly FR pomset bisimulation are as follows.
\begin{enumerate}
  \item $P\parallel Q\sim_p^{fr} Q\parallel P$;
  \item $P\parallel(Q\parallel R)\sim_p^{fr} (P\parallel Q)\parallel R$;
  \item $P\parallel \textbf{nil}\sim_p^{fr} P$;
  \item $P\setminus L\sim_p^{fr} P$, if $\mathcal{L}(P)\cap(L\cup\overline{L})=\emptyset$;
  \item $P\setminus K\setminus L\sim_p^{fr} P\setminus(K\cup L)$;
  \item $P[f]\setminus L\sim_p^{fr} P\setminus f^{-1}(L)[f]$;
  \item $(P\parallel Q)\setminus L\sim_p^{fr} P\setminus L\parallel Q\setminus L$, if $\mathcal{L}(P)\cap\overline{\mathcal{L}(Q)}\cap(L\cup\overline{L})=\emptyset$;
  \item $P[Id]\sim_p^{fr} P$;
  \item $P[f]\sim_p^{fr} P[f']$, if $f\upharpoonright\mathcal{L}(P)=f'\upharpoonright\mathcal{L}(P)$;
  \item $P[f][f']\sim_p^{fr} P[f'\circ f]$;
  \item $(P\parallel Q)[f]\sim_p^{fr} P[f]\parallel Q[f]$, if $f\upharpoonright(L\cup\overline{L})$ is one-to-one, where $L=\mathcal{L}(P)\cup\mathcal{L}(Q)$.
\end{enumerate}
\end{proposition}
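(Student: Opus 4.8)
The plan is to follow the structure of the proof of Proposition \ref{SLSSB} verbatim, replacing FR step transitions by FR pomset transitions throughout. As remarked there, although the transition rules of Tables \ref{FTRForCom}, \ref{RTRForCom}, \ref{FTRForRRC} and \ref{RTRForRRC} are written in the single-event flavour, they lift unchanged to pomsets (built from causality $.$ and conflict $+$); the only difference between the two settings is that in the pomset case we drop the side condition demanding that the events of a label be pairwise concurrent. So for each of the eleven laws I would, for every shape of forward pomset transition of the left-hand process, produce the matching forward pomset transition of the right-hand process carrying the \emph{same} pomset label, do the symmetric thing, then repeat for reverse pomset transitions (labelled $X[\mathcal{K}]$), and finally close the bisimulation by applying the induction hypothesis to the residuals.

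Concretely I would treat the laws in the listed order. For (1) $P\parallel Q\sim_p^{fr}Q\parallel P$ and (2) the associativity of $\parallel$, I would enumerate the cases arising from the four forward Composition rules (a left-only move, a right-only move, a synchronized pair, a communication $\tau$) and their four reverse analogues, match each against the mirrored or rebracketed rule, and appeal to the induction hypotheses $P'\parallel Q\sim_p^{fr}Q\parallel P'$, $P\parallel(Q'\parallel R)\sim_p^{fr}(P\parallel Q')\parallel R$, and so on. For (3) $P\parallel\textbf{nil}\sim_p^{fr}P$ only the left-only rule applies. For (4)--(7), the Restriction laws, I would use the sort function of Definition \ref{sorts} together with the facts that every event in a (pomset) label of a transition of $P$ lies in $\mathcal{L}(P)\cup\{\tau\}$ and that $\mathcal{L}(P')\subseteq\mathcal{L}(P)$ (the transition-property propositions proved above) to see that the side conditions $\alpha,\overline\alpha\notin L$ are preserved along transitions; (7) additionally combines Composition with Restriction. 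For (8)--(11), the Relabelling laws, I would invoke $Id(\alpha)=\alpha$, $f(\alpha)=f'(\alpha)$ whenever $f\upharpoonright\mathcal{L}(P)=f'\upharpoonright\mathcal{L}(P)$, $f'(f(\alpha))=(f'\circ f)(\alpha)$, and for (11) the injectivity of $f\upharpoonright(L\cup\overline L)$ to guarantee that distinct synchronizing pairs are not collapsed. In every case the residual pair belongs to the candidate relation by hypothesis, so that relation, together with $Id_{\mathcal{P}}$, is an FR pomset bisimulation containing $(\emptyset,\emptyset)$.

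The main obstacle is bookkeeping rather than conceptual content: law (2) has by far the largest case analysis — three-way synchronizations, mixed communication-plus-observable moves, and all of these again for the reverse transitions — and one must check that the communication keys $[\mathcal{K}]$ attached to reverse labels match consistently on both sides of the rebracketing, since a reverse step undoes one specific past event. None of this is new, though: the matching is always label-for-label (the right-hand process imitates the left-hand one with the identical pomset), so passing from step labels to arbitrary pomset labels merely enlarges the class of transitions to be matched without changing the argument, exactly as in the proof of Proposition \ref{SLSSB}.
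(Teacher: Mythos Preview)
Your proposal is correct and follows the same overall strategy as the paper: lift the proof of Proposition~\ref{SLSSB} from step labels to pomset labels. The paper adds one small economy you do not make explicit: rather than rerunning the full case analysis, it observes that every pomset decomposes into a concurrent part (already handled by Proposition~\ref{SLSSB}) and a causal part, and that a causal pomset transition such as $\xrightarrow{p}$ with $p=\{\alpha,\beta:\alpha.\beta\}$ is simply the composite $\xrightarrow{\alpha}\xrightarrow{\beta}$ (and dually $\xtworightarrow{p[\mathcal{K}]}=\xtworightarrow{\beta[n]}\xtworightarrow{\alpha[m]}$), so the causal case reduces to iterated single-event transitions already covered. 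This lets the paper omit the renewed eleven-fold case analysis entirely, whereas your route would spell it out again; both arrive at the same place.
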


\begin{proof}
From the definition of strongly FR pomset bisimulation (see Definition \ref{FRPSB}), we know that strongly FR pomset bisimulation is defined by FR pomset transitions, which are labeled by pomsets. In an FR pomset transition, the events in the pomset are either within causality relations (defined by the prefix $.$) or in concurrency (implicitly defined by $.$ and $+$, and explicitly defined by $\parallel$), of course, they are pairwise consistent (without conflicts). In Proposition \ref{SLSSB}, we have already proven the case that all events are pairwise concurrent, so, we only need to prove the case of events in causality. Without loss of generality, we take a pomset of $p=\{\alpha,\beta:\alpha.\beta\}$. Then the FR pomset transition labeled by the above $p$ is just composed of one single event transition labeled by $\alpha$ succeeded by another single event transition labeled by $\beta$, that is, $\xrightarrow{p}=\xrightarrow{\alpha}\xrightarrow{\beta}$ and $\xtworightarrow{p[\mathcal{K}]}=\xtworightarrow{\beta[n]}\xtworightarrow{\alpha[m]}$.

Similarly to the proof of static laws for strongly FR step bisimulation (see Proposition \ref{SLSSB}), we can prove that the static laws hold for strongly FR pomset bisimulation, we omit them.
\end{proof}

\begin{proposition}[Static laws for strongly FR hp-bisimulation] \label{SLSHPB}
The static laws for strongly FR hp-bisimulation are as follows.
\begin{enumerate}
  \item $P\parallel Q\sim_{hp}^{fr} Q\parallel P$;
  \item $P\parallel(Q\parallel R)\sim_{hp}^{fr} (P\parallel Q)\parallel R$;
  \item $P\parallel \textbf{nil}\sim_{hp}^{fr} P$;
  \item $P\setminus L\sim_{hp}^{fr} P$, if $\mathcal{L}(P)\cap(L\cup\overline{L})=\emptyset$;
  \item $P\setminus K\setminus L\sim_{hp}^{fr} P\setminus(K\cup L)$;
  \item $P[f]\setminus L\sim_{hp}^{fr} P\setminus f^{-1}(L)[f]$;
  \item $(P\parallel Q)\setminus L\sim_{hp}^{fr} P\setminus L\parallel Q\setminus L$, if $\mathcal{L}(P)\cap\overline{\mathcal{L}(Q)}\cap(L\cup\overline{L})=\emptyset$;
  \item $P[Id]\sim_{hp}^{fr} P$;
  \item $P[f]\sim_{hp}^{fr} P[f']$, if $f\upharpoonright\mathcal{L}(P)=f'\upharpoonright\mathcal{L}(P)$;
  \item $P[f][f']\sim_{hp}^{fr} P[f'\circ f]$;
  \item $(P\parallel Q)[f]\sim_{hp}^{fr} P[f]\parallel Q[f]$, if $f\upharpoonright(L\cup\overline{L})$ is one-to-one, where $L=\mathcal{L}(P)\cup\mathcal{L}(Q)$.
\end{enumerate}
\end{proposition}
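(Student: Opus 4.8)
The plan is to follow the template of Proposition~\ref{SLSSB} and Proposition~\ref{SLSPB}, but now tracking the configuration isomorphisms demanded by Definition~\ref{FRHHPB}. Since the FR hp-bisimulation clauses are stated for \emph{single} event transitions (the forward clause updates $f$ to $f[e_1\mapsto e_2]$, the reverse clause updates $f'$ to $f'[e_1[m]\mapsto e_2[n]]$), I would work with the single-event transition rules of Tables~\ref{FTRForCom}, \ref{RTRForCom}, \ref{FTRForRRC}, \ref{RTRForRRC} rather than the pomset rules of Tables~\ref{FPTRForPS}, \ref{RPTRForPS}. For each of the eleven laws I would exhibit a posetal relation $R\subseteq\mathcal{C}(\mathcal{E}_1)\overline{\times}\mathcal{C}(\mathcal{E}_2)$ containing, for every pair of reachable ``matched-history'' states of the two sides, the triple $(C(\text{left}),f,C(\text{right}))$ where $f$ is the obvious label-preserving bijection between the two histories, and then verify the two clauses of Definition~\ref{FRHHPB} by a symmetric case analysis, closing each case with the coinduction hypothesis on the residual processes.

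First I would dispatch the commutativity and associativity of $\parallel$ (items~1 and~2). Every forward move of the left-hand side --- a solo move of one component, a synchronous $\{\alpha,\beta\}$ step, or a $\tau$-communication from $l$ and $\overline{l}$ --- is matched by a move of the right-hand side carrying the \emph{same} label via the corresponding rule of Table~\ref{FTRForCom}; extending $f$ by $[e\mapsto e]$ keeps the new triple in $R$ because the re-bracketing does not reorder or merge events, and symmetrically the reverse rules of Table~\ref{RTRForCom} preserve the communication key $m$, so $f'$ can be extended by $[e[m]\mapsto e[m]]$. The remaining laws are handled exactly as their step-bisimulation counterparts in Proposition~\ref{SLSSB}: the restriction laws (items~4--7) use Tables~\ref{FTRForRRC}, \ref{RTRForRRC} together with the side conditions expressed through the sorts of Definition~\ref{sorts}; the relabelling laws (items~8--11) use $f(\tau)=\tau$, $Id(\alpha)=\alpha$, $(f'\circ f)(\alpha)=f'(f(\alpha))$, and in item~11 the hypothesis that $f\upharpoonright(L\cup\overline{L})$ is one-to-one, which guarantees that relabelling commutes with composition without identifying distinct events and hence keeps the configuration bijection an order isomorphism.

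The main obstacle is item~2, the associativity of $\parallel$. The forward analysis has many cases --- solo moves of $P$, $Q$ and $R$, the three pairwise synchronizations, the triple synchronization, and the mixed cases such as $\xrightarrow{\tau,\gamma}$ --- and in each one must not only produce the matching transition on the re-bracketed side but also check that the bijection between the two configurations is precisely the identity-on-labels reassociation and that it remains an order isomorphism after the update, i.e.\ that the re-bracketing introduces no spurious causal dependency; the reverse direction repeats this bookkeeping with keys $[m]$. Once $R$ is seen to be generated by matched transition labels with the reassociation isomorphism restricting correctly to sub-configurations, the argument goes through, and (as in Proposition~\ref{SLSPB}) extending the single-event analysis to pomsets and steps is handled by the decomposition $\xrightarrow{p}=\xrightarrow{\alpha}\xrightarrow{\beta}$ and $\xtworightarrow{p[\mathcal{K}]}=\xtworightarrow{\beta[n]}\xtworightarrow{\alpha[m]}$ for $p=\{\alpha,\beta:\alpha.\beta\}$, so the single-event case analysis is sufficient.
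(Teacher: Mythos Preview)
Your proposal is correct and follows essentially the same approach as the paper: reduce to the single-event case analysis already carried out for step/pomset bisimulation (Propositions~\ref{SLSSB} and~\ref{SLSPB}), and discharge the extra hp-bisimulation obligation by extending the configuration isomorphism via $f'=f[\alpha\mapsto\alpha]$ on forward moves and $f'=f[\alpha[m]\mapsto\alpha[m]]$ on reverse moves. The paper's proof is in fact much terser than yours---it simply states this recipe for updating $f$ and then writes ``we just need additionally to check the above conditions on FR hp-bisimulation, we omit them''---so your sketch is, if anything, more detailed than what appears in the paper.
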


\begin{proof}
From the definition of strongly FR hp-bisimulation (see Definition \ref{FRHHPB}), we know that strongly FR hp-bisimulation is defined on the posetal product $(C_1,f,C_2),f:C_1\rightarrow C_2\textrm{ isomorphism}$. Two processes $P$ related to $C_1$ and $Q$ related to $C_2$, and $f:C_1\rightarrow C_2\textrm{ isomorphism}$. Initially, $(C_1,f,C_2)=(\emptyset,\emptyset,\emptyset)$, and $(\emptyset,\emptyset,\emptyset)\in\sim_{hp}$. When $P\xrightarrow{\alpha}P'$ ($C_1\xrightarrow{\alpha}C_1'$), there will be $Q\xrightarrow{\alpha}Q'$ ($C_2\xrightarrow{\alpha}C_2'$), and we define $f'=f[\alpha\mapsto \alpha]$. And when $P\xtworightarrow{\alpha[m]}P'$ ($C_1\xtworightarrow{\alpha[m]}C_1'$), there will be $Q\xtworightarrow{\alpha[m]}Q'$ ($C_2\xtworightarrow{\alpha[m]}C_2'$), and we define $f'=f[\alpha[m]\mapsto \alpha[m]]$. Then, if $(C_1,f,C_2)\in\sim_{hp}^{fr}$, then $(C_1',f',C_2')\in\sim_{hp}^{fr}$.

Similarly to the proof of static laws for strongly FR pomset bisimulation (see Proposition \ref{SLSPB}), we can prove that static laws hold for strongly FR hp-bisimulation, we just need additionally to check the above conditions on FR hp-bisimulation, we omit them.
\end{proof}

\begin{proposition}[Static laws for strongly FR hhp-bisimulation] \label{SLSHHPB}
The static laws for strongly FR hhp-bisimulation are as follows.
\begin{enumerate}
  \item $P\parallel Q\sim_{hhp}^{fr} Q\parallel P$;
  \item $P\parallel(Q\parallel R)\sim_{hhp}^{fr} (P\parallel Q)\parallel R$;
  \item $P\parallel \textbf{nil}\sim_{hhp}^{fr} P$;
  \item $P\setminus L\sim_{hhp}^{fr} P$, if $\mathcal{L}(P)\cap(L\cup\overline{L})=\emptyset$;
  \item $P\setminus K\setminus L\sim_{hhp}^{fr} P\setminus(K\cup L)$;
  \item $P[f]\setminus L\sim_{hhp}^{fr} P\setminus f^{-1}(L)[f]$;
  \item $(P\parallel Q)\setminus L\sim_{hhp}^{fr} P\setminus L\parallel Q\setminus L$, if $\mathcal{L}(P)\cap\overline{\mathcal{L}(Q)}\cap(L\cup\overline{L})=\emptyset$;
  \item $P[Id]\sim_{hhp}^{fr} P$;
  \item $P[f]\sim_{hhp}^{fr} P[f']$, if $f\upharpoonright\mathcal{L}(P)=f'\upharpoonright\mathcal{L}(P)$;
  \item $P[f][f']\sim_{hhp}^{fr} P[f'\circ f]$;
  \item $(P\parallel Q)[f]\sim_{hhp}^{fr} P[f]\parallel Q[f]$, if $f\upharpoonright(L\cup\overline{L})$ is one-to-one, where $L=\mathcal{L}(P)\cup\mathcal{L}(Q)$.
\end{enumerate}
\end{proposition}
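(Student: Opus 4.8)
The plan is to proceed exactly as in the passage from the hp-bisimulation monoid laws to the hhp-bisimulation monoid laws, now applied to the static laws. By Definition \ref{FRHHPB}, an FR hhp-bisimulation is precisely a \emph{downward closed} FR hp-bisimulation, so for each of the eleven identities it suffices to take the witnessing posetal relation already produced in the proof of Proposition \ref{SLSHPB} and verify, in addition to the FR hp-bisimulation clauses, that this relation is downward closed. First I would record, for each law, the same matching forward and reverse transition rules on the two sides of the identity that appear in Proposition \ref{SLSHPB} (via the rules of Tables \ref{FTRForCom}, \ref{RTRForCom}, \ref{FTRForRRC}, \ref{RTRForRRC}), together with the induced extensions of the isomorphism $f[\alpha\mapsto\alpha]$ in the forward case and $f[\alpha[m]\mapsto\alpha[m]]$ in the reverse case, and invoke the structural assumptions ($P'\parallel Q\sim_{hhp}^{fr}Q\parallel P'$, $(P'\parallel Q)\setminus L\sim_{hhp}^{fr}P'\setminus L\parallel Q\setminus L$, and so on), exactly mirroring the earlier argument.

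The genuinely new content is the downward-closure check. The key observation I would use is that for every constructor of RCTC the configurations of a composite process decompose canonically along the syntax: a configuration of $P\parallel Q$ is a disjoint union of a configuration of $P$ and a configuration of $Q$, while a configuration of $P\setminus L$, of $P[f]$, of $P[Id]$, or of a reassociated/commuted term corresponds bijectively, and compatibly with the causal order, to a configuration of the underlying process. Consequently the witnessing relation for each law is, modulo this decomposition, the "identity-like" relation identifying configurations with the same underlying events (possibly relabelled, restricted, or regrouped) under the evident isomorphism. Such a relation is closed under passing to pointwise-smaller triples, because a down-closed sub-configuration on one side transports under the decomposition to a down-closed sub-configuration on the other side, and the restriction of the isomorphism to it is again the appropriate isomorphism; I would spell this out once for the parallel constructor and once for the restriction/relabelling constructors, and then note it is inherited through the inductive clauses.

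The main obstacle I anticipate is laws (1), (2), (7) and (11), where the bijection between configurations of the two sides involves splitting a configuration of a parallel composite and reshuffling the pieces across the $\parallel$ boundary; here I must argue that the splitting is compatible both with the causality order inherited from the prime event structure and with the operation of taking down-closed subsets, so that a pointwise-smaller triple on the reassociated (resp.\ commuted, resp.\ restricted) side genuinely pulls back to a pointwise-smaller triple on the original side. For the remaining laws (3)--(6) and (8)--(10) downward closure is essentially immediate, since the configuration correspondence is an order- and inclusion-preserving bijection. As in Propositions \ref{SLSSB}, \ref{SLSPB} and \ref{SLSHPB}, once this downward-closure verification is in place the remaining case analysis is routine and can be omitted.
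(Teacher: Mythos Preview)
Your proposal is correct and follows essentially the same route as the paper: invoke Definition~\ref{FRHHPB} to recognize that an FR hhp-bisimulation is a downward-closed FR hp-bisimulation, reuse the witnessing relations from Proposition~\ref{SLSHPB}, and then discharge the extra downward-closure condition. The paper's own proof does exactly this but omits all the details you supply about why the canonical ``identity-like'' relations are downward closed; your elaboration of that point (decomposition of configurations along the syntax and the special care needed for the parallel laws) is more than the paper provides.
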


\begin{proof}
From the definition of strongly FR hhp-bisimulation (see Definition \ref{FRHHPB}), we know that strongly FR hhp-bisimulation is downward closed for strongly FR hp-bisimulation.

Similarly to the proof of static laws for strongly FR hp-bisimulation (see Proposition \ref{SLSHPB}), we can prove that static laws hold for strongly FR hhp-bisimulation, that is, they are downward closed for strongly FR hp-bisimulation, we omit them.
\end{proof}

\begin{proposition}[Milner's expansion law for strongly FR truly concurrent bisimulations]
Milner's expansion law does not hold any more for any strongly FR truly concurrent bisimulation, that is,

\begin{enumerate}
  \item $\alpha\parallel\beta\nsim_p^{fr} \alpha.\beta+\beta.\alpha$;
  \item $\alpha\parallel\beta\nsim_s^{fr} \alpha.\beta+\beta.\alpha$;
  \item $\alpha\parallel\beta\nsim_{hp}^{fr} \alpha.\beta+\beta.\alpha$;
  \item $\alpha\parallel\beta\nsim_{hhp}^{fr} \alpha.\beta+\beta.\alpha$.
\end{enumerate}
\end{proposition}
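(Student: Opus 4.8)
The plan is to separate the two processes by a single step: $\alpha\parallel\beta$ can execute $\alpha$ and $\beta$ \emph{concurrently in one transition}, whereas $\alpha.\beta+\beta.\alpha$ can only interleave them. First I would settle the step case $\alpha\parallel\beta\nsim_s^{fr}\alpha.\beta+\beta.\alpha$ (taking $\alpha,\beta$ to be non-complementary, i.e.\ $\beta\neq\overline{\alpha}$, which is the case in which the expansion law is in question at all), and then I would derive the three remaining clauses from the inclusions $\sim_{hhp}^{fr}\subseteq\sim_{hp}^{fr}\subseteq\sim_p^{fr}\subseteq\sim_s^{fr}$.

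For the step case: combining the prefix axiom of Table~\ref{FTRForPS} with the third forward rule for Composition in Table~\ref{FTRForCom} gives
$$\alpha\parallel\beta\xrightarrow{\{\alpha,\beta\}}\alpha[m]\parallel\beta[n],$$
and $\{\alpha,\beta\}$ is a genuine forward step because the two events are concurrent in $\alpha\parallel\beta$. If $R$ were an FR step bisimulation with $(\emptyset,\emptyset)\in R$, this transition would have to be matched from the initial pair by some $\alpha.\beta+\beta.\alpha\xrightarrow{X_2}C_2'$ with $\{\alpha,\beta\}\sim X_2$, forcing $X_2$ to be a two-element step. I would then inspect every forward rule applicable to $\alpha.\beta+\beta.\alpha$: the Prefix rules of Table~\ref{FTRForPS} let $\alpha.\beta$ and $\beta.\alpha$ perform only the single-event transitions $\xrightarrow{\alpha}$ and $\xrightarrow{\beta}$, the Summation rules merely propagate those single-event labels, and there is no top-level Composition to introduce a set-labelled transition. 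Hence no such $X_2$ exists, contradicting the bisimulation condition, so $\alpha\parallel\beta\nsim_s^{fr}\alpha.\beta+\beta.\alpha$.

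Finally I would record the standard, definition-level inclusions: a pomset bisimulation is in particular a step bisimulation (a step is a pomset, and any pomset isomorphic to a step is again a step), an FR hp-bisimulation projects onto an FR pomset bisimulation (the configuration isomorphism restricts to a pomset isomorphism on the events consumed by each transition), and an FR hhp-bisimulation is by definition an FR hp-bisimulation. Consequently $\alpha\parallel\beta\nsim_s^{fr}\alpha.\beta+\beta.\alpha$ immediately yields $\alpha\parallel\beta\nsim_p^{fr}\alpha.\beta+\beta.\alpha$, $\alpha\parallel\beta\nsim_{hp}^{fr}\alpha.\beta+\beta.\alpha$ and $\alpha\parallel\beta\nsim_{hhp}^{fr}\alpha.\beta+\beta.\alpha$. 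If self-contained arguments are preferred, each clause can also be shown by hand: for the pomset cases the candidate matching label would have to be a two-element antichain, which no transition of $\alpha.\beta+\beta.\alpha$ carries; for the hp- and hhp-cases, after performing $\xrightarrow{\alpha}$ then $\xrightarrow{\beta}$ on both sides the required order-isomorphism would have to equate a pair of concurrent events on the left with a causally dependent pair on the right, which is impossible. The only mildly delicate point — the step I would be most careful with — is the exhaustive check that none of the transition rules of Definition~\ref{semantics} ever lets $\alpha.\beta+\beta.\alpha$ take a step or pomset transition exhibiting any concurrency; once that is in hand, the separation is immediate.
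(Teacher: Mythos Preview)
Your proposal is correct and follows essentially the same line as the paper: exhibit the concurrent step $\alpha\parallel\beta\xrightarrow{\{\alpha,\beta\}}\cdots$ and argue that $\alpha.\beta+\beta.\alpha$ admits no matching two-element transition under the rules for Prefix and Summation. Your reduction of the pomset, hp- and hhp-clauses to the step case via the inclusion chain $\sim_{hhp}^{fr}\subseteq\sim_{hp}^{fr}\subseteq\sim_p^{fr}\subseteq\sim_s^{fr}$ is a clean organizational improvement over the paper, which simply presents the same mismatch (and its reverse analogue) as covering all four cases at once.
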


\begin{proof}
In nature, it is caused by $\alpha\parallel \beta$ and $\alpha.\beta + \beta.\alpha$ having different causality structure. By the transition rules of Composition and Prefix, we have

$$\alpha\parallel \beta\xrightarrow{\{\alpha,\beta\}}\textbf{nil}$$

while

$$\alpha.\beta+ \beta.\alpha\nrightarrow^{\{\alpha,\beta\}}.$$

And

$$\alpha[m]\parallel \beta[m]\xtworightarrow{\{\alpha[m],\beta[m]\}}\textbf{nil}$$

while

$$\alpha[m].\beta[m]+ \beta[m].\alpha[m]\xntworightarrow{\{\alpha[m],\beta[m]\}}.$$
\end{proof}

\begin{proposition}[New expansion law for strongly FR step bisimulation]\label{NELSSB}
Let $P\equiv (P_1[f_1]\parallel\cdots\parallel P_n[f_n])\setminus L$, with $n\geq 1$. Then

\begin{eqnarray}
P\sim_s^{f} \{(f_1(\alpha_1)\parallel\cdots\parallel f_n(\alpha_n)).(P_1'[f_1]\parallel\cdots\parallel P_n'[f_n])\setminus L: \nonumber\\
P_i\xrightarrow{\alpha_i}P_i',i\in\{1,\cdots,n\},f_i(\alpha_i)\notin L\cup\overline{L}\} \nonumber\\
+\sum\{\tau.(P_1[f_1]\parallel\cdots\parallel P_i'[f_i]\parallel\cdots\parallel P_j'[f_j]\parallel\cdots\parallel P_n[f_n])\setminus L: \nonumber\\
P_i\xrightarrow{l_1}P_i',P_j\xrightarrow{l_2}P_j',f_i(l_1)=\overline{f_j(l_2)},i<j\}
\end{eqnarray}
\begin{eqnarray}
P\sim_s^{r} \{(P_1'[f_1]\parallel\cdots\parallel P_n'[f_n]).(f_1(\alpha_1[m])\parallel\cdots\parallel f_n(\alpha_n)[m])\setminus L: \nonumber\\
P_i\xtworightarrow{\alpha_i[m]}P_i',i\in\{1,\cdots,n\},f_i(\alpha_i)\notin L\cup\overline{L}\} \nonumber\\
+\sum\{(P_1[f_1]\parallel\cdots\parallel P_i'[f_i]\parallel\cdots\parallel P_j'[f_j]\parallel\cdots\parallel P_n[f_n]).\tau\setminus L: \nonumber\\
P_i\xtworightarrow{l_1[m]}P_i',P_j\xtworightarrow{l_2[m]}P_j',f_i(l_1)=\overline{f_j(l_2)},i<j\}
\end{eqnarray}
\end{proposition}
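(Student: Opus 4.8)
The plan is to prove the two identities separately: the first concerns the forward step behaviour of $P$ (it is an identity for the forward fragment $\sim_s^{f}$) and is proved by matching forward step transitions, while the second concerns the reverse behaviour and is proved symmetrically by matching reverse step transitions. In each case I would exhibit the required strongly FR step bisimulation $R$ as the union of the pair $(P,Q)$, where $Q$ is the corresponding right-hand side, the identity relation, and every pair obtained by placing strongly FR step bisimilar processes inside a parallel composition, a restriction, a relabelling, or a summation context. By the monoid laws and the static laws (Proposition \ref{SLSSB}) already established for strongly FR step bisimulation, such an $R$ is preserved by these contexts, so that once the transition enabled at the root of $P$ has been matched against a summand of $Q$ the two residuals lie again in $R$; this reduces the whole argument to analysing the transitions available at the root.

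For the forward identity I would decompose an arbitrary forward step $P\xrightarrow{X}P''$ by peeling the transition rules in order: first the Restriction rule (Table \ref{FTRForRRC}), then the Relabelling rules applied to each $P_i[f_i]$, then the (iterated) Composition rules (Table \ref{FTRForCom}). Using the side conditions $Q\nrightarrow$ of the ``one component moves alone'' Composition rules together with the constraint $f_i(\alpha_i)\notin L\cup\overline L$ coming from Restriction, one sees that $X$ must be either (i) a concurrent step $X=\{f_1(\alpha_1),\dots,f_n(\alpha_n)\}$ arising from $P_i\xrightarrow{\alpha_i}P_i'$ for every $i$, with $P''$ the restricted parallel composition of the $P_i'[f_i]$; or (ii) the silent step $X=\tau$ arising from a synchronising pair $P_i\xrightarrow{l_1}P_i'$, $P_j\xrightarrow{l_2}P_j'$ with $f_i(l_1)=\overline{f_j(l_2)}$ and $i<j$, the remaining components idle. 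Conversely each forward step of $Q$ selects one summand and fires its Prefix by the rules of Table \ref{FPTRForPS}, producing the same label and the residual promised by the statement; matching (i) with the first summand and (ii) with the $\tau$-summands, and checking that the residuals correspond, closes the forward case. The reverse identity is handled in the same way, decomposing reverse steps through Tables \ref{RTRForRRC} and \ref{RTRForCom} and matching them against the postfix summands $(\cdots).\tau$ and $(P_1'[f_1]\parallel\cdots\parallel P_n'[f_n]).(f_1(\alpha_1[m])\parallel\cdots\parallel f_n(\alpha_n)[m])$ fired by the reverse Prefix rules of Table \ref{RPTRForPS}.

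The step I expect to be the main obstacle is making the residual matching precise. On the $P$-side a residual is a restriction of a parallel composition in which each $P_i'$ already carries the past label it just acquired, whereas on the $Q$-side the residual is a prefixed (or postfixed) term; proving these strongly FR step bisimilar is exactly where the congruence and static laws are used, and in the reverse direction one must additionally argue that the communication key $m$ chosen on the left can be taken to agree with the one on the right, so that undoing the step on either side reaches genuinely matching configurations. A related delicate point is the bookkeeping of the $\textrm{Std}$ and $\textrm{NStd}$ predicates: after the first step the body of a prefixed summand is no longer standard, so one has to check that the relevant Prefix rules (the $\textrm{NStd}$ cases in Tables \ref{FTRForPS} and \ref{RTRForPS}) still reproduce exactly the transitions of $P$'s residual. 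Finally one should record explicitly that, since a Composition rule lets a single component act alone only when every other component is completely blocked, in a ``live'' composition all $n$ components take part in the step, so the single non-$\tau$ summand in the statement does capture all non-silent forward behaviour of $P$; the remaining cases (nested occurrences of $+$ and $\parallel$, empty index sets, $n=1$) then go through exactly as in the proofs of the monoid and static laws.
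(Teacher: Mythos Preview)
Your proposal is correct but follows a different route from the paper's own argument. The paper proceeds by \emph{induction on the number $n$ of parallel components}: it first strips away Restriction and Relabelling, proves the bare expansion
\[
P_1\parallel\cdots\parallel P_n \sim_s^{f} \{(\alpha_1\parallel\cdots\parallel\alpha_n).(P_1'\parallel\cdots\parallel P_n'):\ldots\}+\sum\{\tau.(\ldots):\ldots\}
\]
by taking $n=1$ as base case and, for the inductive step, writing $R\equiv P\parallel P_{n+1}$, applying the Composition rules once to split off $P_{n+1}$, and then substituting the induction hypothesis for $P$; only at the end are Restriction and Relabelling reinstated. The reverse identity is handled the same way. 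By contrast, you exhibit the bisimulation directly for arbitrary $n$, peeling the Restriction, Relabelling, and Composition rules in one pass and relying on the already-proved static and monoid laws (Proposition~\ref{SLSSB}) to close the residual matching. Your approach makes the shape of the witnessing relation explicit and surfaces the bookkeeping issues (the $\textrm{Std}/\textrm{NStd}$ predicates, the agreement of communication keys $m$) that the paper's inductive proof leaves implicit; the paper's induction, on the other hand, avoids having to analyse an $n$-fold Composition step in one go and never needs to invoke the static laws as an auxiliary result, since the two-component Composition rules are applied once per inductive step.
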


\begin{proof}
Though transition rules in Definition \ref{semantics} are defined in the flavor of single event, they can be modified into a step (a set of events within which each event is pairwise concurrent), we omit them. If we treat a single event as a step containing just one event, the proof of the new expansion law has not any problem, so we use this way and still use the transition rules in Definition \ref{semantics}.

(1) The case of strongly forward step bisimulation.

Firstly, we consider the case without Restriction and Relabeling. That is, we suffice to prove the following case by induction on the size $n$.

For $P\equiv P_1\parallel\cdots\parallel P_n$, with $n\geq 1$, we need to prove

\begin{eqnarray}
P\sim_s \{(\alpha_1\parallel\cdots\parallel \alpha_n).(P_1'\parallel\cdots\parallel P_n'): P_i\xrightarrow{\alpha_i}P_i',i\in\{1,\cdots,n\}\nonumber\\
+\sum\{\tau.(P_1\parallel\cdots\parallel P_i'\parallel\cdots\parallel P_j'\parallel\cdots\parallel P_n): P_i\xrightarrow{l}P_i',P_j\xrightarrow{\overline{l}}P_j',i<j\} \nonumber
\end{eqnarray}

For $n=1$, $P_1\sim_s^{f} \alpha_1.P_1':P_1\xrightarrow{\alpha_1}P_1'$ is obvious. Then with a hypothesis $n$, we consider $R\equiv P\parallel P_{n+1}$. By the forward transition rules of Composition, we can get

\begin{eqnarray}
R\sim_s^{f} \{(p\parallel \alpha_{n+1}).(P'\parallel P_{n+1}'): P\xrightarrow{p}P',P_{n+1}\xrightarrow{\alpha_{n+1}}P_{n+1}',p\subseteq P\}\nonumber\\
+\sum\{\tau.(P'\parallel P_{n+1}'): P\xrightarrow{l}P',P_{n+1}\xrightarrow{\overline{l}}P_{n+1}'\} \nonumber
\end{eqnarray}

Now with the induction assumption $P\equiv P_1\parallel\cdots\parallel P_n$, the right-hand side can be reformulated as follows.

\begin{eqnarray}
\{(\alpha_1\parallel\cdots\parallel \alpha_n\parallel \alpha_{n+1}).(P_1'\parallel\cdots\parallel P_n'\parallel P_{n+1}'): \nonumber\\
P_i\xrightarrow{\alpha_i}P_i',i\in\{1,\cdots,n+1\}\nonumber\\
+\sum\{\tau.(P_1\parallel\cdots\parallel P_i'\parallel\cdots\parallel P_j'\parallel\cdots\parallel P_n\parallel P_{n+1}): \nonumber\\
P_i\xrightarrow{l}P_i',P_j\xrightarrow{\overline{l}}P_j',i<j\} \nonumber\\
+\sum\{\tau.(P_1\parallel\cdots\parallel P_i'\parallel\cdots\parallel P_j\parallel\cdots\parallel P_n\parallel P_{n+1}'): \nonumber\\
P_i\xrightarrow{l}P_i',P_{n+1}\xrightarrow{\overline{l}}P_{n+1}',i\in\{1,\cdots, n\}\} \nonumber
\end{eqnarray}

So,

\begin{eqnarray}
R\sim_s^{f} \{(\alpha_1\parallel\cdots\parallel \alpha_n\parallel \alpha_{n+1}).(P_1'\parallel\cdots\parallel P_n'\parallel P_{n+1}'): \nonumber\\
P_i\xrightarrow{\alpha_i}P_i',i\in\{1,\cdots,n+1\}\nonumber\\
+\sum\{\tau.(P_1\parallel\cdots\parallel P_i'\parallel\cdots\parallel P_j'\parallel\cdots\parallel P_n): \nonumber\\
P_i\xrightarrow{l}P_i',P_j\xrightarrow{\overline{l}}P_j',1 \leq i<j\geq n+1\} \nonumber
\end{eqnarray}

Then, we can easily add the full conditions with Restriction and Relabeling.

(2) The case of strongly reverse step bisimulation.

Firstly, we consider the case without Restriction and Relabeling. That is, we suffice to prove the following case by induction on the size $n$.

For $P\equiv P_1\parallel\cdots\parallel P_n$, with $n\geq 1$, we need to prove

\begin{eqnarray}
P\sim_s^{r} \{(P_1'\parallel\cdots\parallel P_n').(\alpha_1[m]\parallel\cdots\parallel \alpha_n[m]): P_i\xtworightarrow{\alpha_i[m]}P_i',i\in\{1,\cdots,n\}\nonumber\\
+\sum\{(P_1\parallel\cdots\parallel P_i'\parallel\cdots\parallel P_j'\parallel\cdots\parallel P_n).\tau: P_i\xtworightarrow{l[m]}P_i',P_j\xtworightarrow{\overline{l}[m]}P_j',i<j\} \nonumber
\end{eqnarray}

For $n=1$, $P_1\sim_s^{r} P_1'.\alpha_1[m]:P_1\xtworightarrow{\alpha_1[m]}P_1'$ is obvious. Then with a hypothesis $n$, we consider $R\equiv P\parallel P_{n+1}$. By the reverse transition rules of Composition, we can get

\begin{eqnarray}
R\sim_s^{r} \{(P'\parallel P_{n+1}').(p[m]\parallel \alpha_{n+1}[m]): P\xtworightarrow{p[m]}P',P_{n+1}\xtworightarrow{\alpha_{n+1}[m]}P_{n+1}',p\subseteq P\}\nonumber\\
+\sum\{(P'\parallel P_{n+1}').\tau: P\xtworightarrow{l[m]}P',P_{n+1}\xtworightarrow{\overline{l}[m]}P_{n+1}'\} \nonumber
\end{eqnarray}

Now with the induction assumption $P\equiv P_1\parallel\cdots\parallel P_n$, the right-hand side can be reformulated as follows.

\begin{eqnarray}
\{(P_1'\parallel\cdots\parallel P_n'\parallel P_{n+1}).(\alpha_1[m]\parallel\cdots\parallel \alpha_n[m]\parallel \alpha_{n+1}[m]): \nonumber\\
P_i\xtworightarrow{\alpha_i[m]}P_i',i\in\{1,\cdots,n+1\}\nonumber\\
+\sum\{(P_1\parallel\cdots\parallel P_i'\parallel\cdots\parallel P_j'\parallel\cdots\parallel P_n\parallel P_{n+1}).\tau: \nonumber\\
P_i\xtworightarrow{l[m]}P_i',P_j\xtworightarrow{\overline{l}[m]}P_j',i<j\} \nonumber\\
+\sum\{(P_1\parallel\cdots\parallel P_i'\parallel\cdots\parallel P_j'\parallel\cdots\parallel P_n\parallel P_{n+1}).\tau: \nonumber\\
P_i\xtworightarrow{l[m]}P_i',P_{n+1}\xtworightarrow{\overline{l}[m]}P_{n+1}',i\in\{1,\cdots, n\}\} \nonumber
\end{eqnarray}

So,

\begin{eqnarray}
R\sim_s^{r} \{(P_1'\parallel\cdots\parallel P_n'\parallel P_{n+1}').(\alpha_1[m]\parallel\cdots\parallel \alpha_n[m]\parallel \alpha_{n+1}[m]): \nonumber\\
P_i\xtworightarrow{\alpha_i[m]}P_i',i\in\{1,\cdots,n+1\}\nonumber\\
+\sum\{(P_1\parallel\cdots\parallel P_i'\parallel\cdots\parallel P_j'\parallel\cdots\parallel P_n).\tau: \nonumber\\
P_i\xtworightarrow{l[m]}P_i',P_j\xtworightarrow{\overline{l}[m]}P_j',1 \leq i<j\geq n+1\} \nonumber
\end{eqnarray}

Then, we can easily add the full conditions with Restriction and Relabeling.
\end{proof}

\begin{proposition}[New expansion law for strong pomset bisimulation]\label{NELSPB}
Let $P\equiv (P_1[f_1]\parallel\cdots\parallel P_n[f_n])\setminus L$, with $n\geq 1$. Then

\begin{eqnarray}
P\sim_p^{f} \{(f_1(\alpha_1)\parallel\cdots\parallel f_n(\alpha_n)).(P_1'[f_1]\parallel\cdots\parallel P_n'[f_n])\setminus L: \nonumber\\
P_i\xrightarrow{\alpha_i}P_i',i\in\{1,\cdots,n\},f_i(\alpha_i)\notin L\cup\overline{L}\} \nonumber\\
+\sum\{\tau.(P_1[f_1]\parallel\cdots\parallel P_i'[f_i]\parallel\cdots\parallel P_j'[f_j]\parallel\cdots\parallel P_n[f_n])\setminus L: \nonumber\\
P_i\xrightarrow{l_1}P_i',P_j\xrightarrow{l_2}P_j',f_i(l_1)=\overline{f_j(l_2)},i<j\}
\end{eqnarray}
\begin{eqnarray}
P\sim_p^{r} \{(P_1'[f_1]\parallel\cdots\parallel P_n'[f_n]).(f_1(\alpha_1[m])\parallel\cdots\parallel f_n(\alpha_n)[m])\setminus L: \nonumber\\
P_i\xtworightarrow{\alpha_i[m]}P_i',i\in\{1,\cdots,n\},f_i(\alpha_i)\notin L\cup\overline{L}\} \nonumber\\
+\sum\{(P_1[f_1]\parallel\cdots\parallel P_i'[f_i]\parallel\cdots\parallel P_j'[f_j]\parallel\cdots\parallel P_n[f_n]).\tau\setminus L: \nonumber\\
P_i\xtworightarrow{l_1[m]}P_i',P_j\xtworightarrow{l_2[m]}P_j',f_i(l_1)=\overline{f_j(l_2)},i<j\}
\end{eqnarray}
\end{proposition}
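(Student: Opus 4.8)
The plan is to mirror exactly the strategy already used for Proposition \ref{SLSPB}: reduce the pomset statement to the step statement of Proposition \ref{NELSSB}, which we may assume, and then treat separately the only genuinely new phenomenon, namely events linked by causality inside a pomset label. Recall from the proof of Proposition \ref{SLSPB} that a pomset transition decomposes into a sequence of step transitions compatible with both Table \ref{FPTRForPS} and Table \ref{RPTRForPS}: for $p=\{\alpha,\beta:\alpha.\beta\}$ one has $\xrightarrow{p}=\xrightarrow{\alpha}\xrightarrow{\beta}$ on the forward side and $\xtworightarrow{p[\mathcal{K}]}=\xtworightarrow{\beta[n]}\xtworightarrow{\alpha[m]}$ on the reverse side, and in general any pomset is a sequence of pairwise-concurrent slices. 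Since Proposition \ref{NELSSB} already settles the case in which every event in the transition label is pairwise concurrent, the remaining task is to check that the expansion on the right-hand side induces exactly the same causal ordering between successive slices as the left-hand side $P\equiv(P_1[f_1]\parallel\cdots\parallel P_n[f_n])\setminus L$ does.

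For the forward equation I would run the same induction on $n$ as in Proposition \ref{NELSSB}, but tracking a pomset $p_i\subseteq P_i$ rather than a single action at each component. The base case $n=1$ is immediate, since $P_1\sim_p^{f}\{p.P_1':P_1\xrightarrow{p}P_1'\}$ holds by the definition of pomset transition together with the Prefix rules. In the inductive step, composing with $P_{n+1}$ and applying the forward rules for Composition distributes the combined pomset label over the components, and the causality inside each $p_i$ is preserved because in $(\alpha_1\parallel\cdots\parallel\alpha_n).(P_1'\parallel\cdots\parallel P_n')$ the prefix $.$ introduces precisely the ordering carried by the component transitions $P_i\xrightarrow{p_i}P_i'$. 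The synchronisation summands contributing $\tau$ are handled verbatim as in Proposition \ref{NELSSB}, since $\tau$ is a single event; and the full conditions with Restriction and Relabelling are then added exactly as there, because the rules for $\setminus L$ and $[f]$ in Tables \ref{FTRForRRC}, \ref{RTRForRRC} act componentwise on pomset labels, so the side conditions $f_i(\alpha_i)\notin L\cup\overline{L}$ and $f_i(l_1)=\overline{f_j(l_2)}$ lift slice-by-slice from the single-event case.

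The reverse equation is symmetric: one performs the same induction using the reverse rules of Table \ref{RTRForCom}, keeping in mind that a reverse pomset transition $\xtworightarrow{p[\mathcal{K}]}$ reverses the causal order, so the matching term on the right is $(P_1'\parallel\cdots\parallel P_n').(f_1(\alpha_1[m])\parallel\cdots\parallel f_n(\alpha_n)[m])\setminus L$, with the past-decorated pomset placed after the residuals — which is exactly what the reverse Prefix rules of Table \ref{RPTRForPS} combined with the reverse Composition rules produce. The two equations can thus be established in parallel, and I would simply note that the argument is "similar to the proof of Proposition \ref{NELSSB}, we omit the routine details", in keeping with the style of the surrounding propositions.

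The main obstacle I expect is bookkeeping rather than anything conceptual: one must verify that unrolling a pomset into slices stays consistent between the two sides even when distinct components contribute causally dependent sub-pomsets of differing depth, so that the slices of the left-hand composite and of the right-hand prefixed term can be matched up by a single pomset isomorphism at each stage. Since each such check reduces, via the componentwise action of $\parallel$, $\setminus L$ and $[f]$, to the single-event matching already implicit in Proposition \ref{NELSSB}, I anticipate no real difficulty, only careful indexing.
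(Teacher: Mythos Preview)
Your proposal is correct and follows essentially the same approach as the paper: reduce the pomset case to the already-established step case (Proposition \ref{NELSSB}) by observing that a pomset transition factors through its causal slices, taking the representative pomset $p=\{\alpha,\beta:\alpha.\beta\}$ with $\xrightarrow{p}=\xrightarrow{\alpha}\xrightarrow{\beta}$ and $\xtworightarrow{p[\mathcal{K}]}=\xtworightarrow{\beta[n]}\xtworightarrow{\alpha[m]}$, and then omitting the routine details. In fact your write-up is more explicit than the paper's own proof, which simply invokes this decomposition and defers everything else to Proposition \ref{NELSSB}.
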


\begin{proof}
From the definition of strongly FR pomset bisimulation (see Definition \ref{FRPSB}), we know that strongly FR pomset bisimulation is defined by FR pomset transitions, which are labeled by pomsets. In an FR pomset transition, the events in the pomset are either within causality relations (defined by the prefix $.$) or in concurrency (implicitly defined by $.$ and $+$, and explicitly defined by $\parallel$), of course, they are pairwise consistent (without conflicts). In Proposition \ref{NELSSB}, we have already proven the case that all events are pairwise concurrent, so, we only need to prove the case of events in causality. Without loss of generality, we take a pomset of $p=\{\alpha,\beta:\alpha.\beta\}$. Then the FR pomset transition labeled by the above $p$ is just composed of one single event transition labeled by $\alpha$ succeeded by another single event transition labeled by $\beta$, that is, $\xrightarrow{p}=\xrightarrow{\alpha}\xrightarrow{\beta}$ and $\xtworightarrow{p[\mathcal{K}]}=\xtworightarrow{\beta[n]}\xtworightarrow{\alpha[m]}$.

Similarly to the proof of new expansion law for strongly FR step bisimulation (see Proposition \ref{NELSSB}), we can prove that the new expansion law holds for strongly FR pomset bisimulation, we omit them.
\end{proof}

\begin{proposition}[New expansion law for strong hp-bisimulation]\label{NELSHPB}
Let $P\equiv (P_1[f_1]\parallel\cdots\parallel P_n[f_n])\setminus L$, with $n\geq 1$. Then

\begin{eqnarray}
P\sim_{hp}^{f} \{(f_1(\alpha_1)\parallel\cdots\parallel f_n(\alpha_n)).(P_1'[f_1]\parallel\cdots\parallel P_n'[f_n])\setminus L: \nonumber\\
P_i\xrightarrow{\alpha_i}P_i',i\in\{1,\cdots,n\},f_i(\alpha_i)\notin L\cup\overline{L}\} \nonumber\\
+\sum\{\tau.(P_1[f_1]\parallel\cdots\parallel P_i'[f_i]\parallel\cdots\parallel P_j'[f_j]\parallel\cdots\parallel P_n[f_n])\setminus L: \nonumber\\
P_i\xrightarrow{l_1}P_i',P_j\xrightarrow{l_2}P_j',f_i(l_1)=\overline{f_j(l_2)},i<j\}
\end{eqnarray}
\begin{eqnarray}
P\sim_{hp}^{r} \{(P_1'[f_1]\parallel\cdots\parallel P_n'[f_n]).(f_1(\alpha_1[m])\parallel\cdots\parallel f_n(\alpha_n)[m])\setminus L: \nonumber\\
P_i\xtworightarrow{\alpha_i[m]}P_i',i\in\{1,\cdots,n\},f_i(\alpha_i)\notin L\cup\overline{L}\} \nonumber\\
+\sum\{(P_1[f_1]\parallel\cdots\parallel P_i'[f_i]\parallel\cdots\parallel P_j'[f_j]\parallel\cdots\parallel P_n[f_n]).\tau\setminus L: \nonumber\\
P_i\xtworightarrow{l_1[m]}P_i',P_j\xtworightarrow{l_2[m]}P_j',f_i(l_1)=\overline{f_j(l_2)},i<j\}
\end{eqnarray}
\end{proposition}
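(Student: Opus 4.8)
The plan is to reuse the reduction strategy of Propositions \ref{NELSSB} and \ref{NELSPB}, adding the bookkeeping of the order isomorphism demanded by Definition \ref{FRHHPB}. Recall that an FR hp-bisimulation is a posetal relation $R\subseteq\mathcal{C}(\mathcal{E}_1)\overline{\times}\mathcal{C}(\mathcal{E}_2)$ consisting of triples $(C_1,f,C_2)$ with $f:C_1\to C_2$ an isomorphism, where the forward clause matches $C_1\xrightarrow{e_1}C_1'$ by $C_2\xrightarrow{e_2}C_2'$ and extends $f$ to $f[e_1\mapsto e_2]$, and the reverse clause matches $C_1'\xtworightarrow{e_1[m]}C_1$ by $C_2'\xtworightarrow{e_2[n]}C_2$ and extends $f'$ to $f'[e_1[m]\mapsto e_2[n]]$. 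So, exactly as before, I would first strip off Restriction and Relabelling, reducing to the law for $P\equiv P_1\parallel\cdots\parallel P_n$, and then argue by induction on $n$, precisely as in Proposition \ref{NELSSB} but carrying a posetal relation rather than a plain relation.

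For the base case $n=1$, the two sides are literally $\alpha_1.P_1'$ (resp.\ $P_1'.\alpha_1[m]$) and $P_1$, and the singleton posetal relation pairing the corresponding configurations via the identity isomorphism works. For the inductive step, writing $R\equiv P\parallel P_{n+1}$ and appealing to the forward and reverse transition rules of Composition (Tables \ref{FTRForCom} and \ref{RTRForCom}), every move of $R$ is an autonomous move of $P$, an autonomous move of $P_{n+1}$, a synchronous move labelled $\{\alpha,\alpha_{n+1}\}$, or a communication producing $\tau$; in each case the induction hypothesis for $P$ rewrites the target into the shape displayed in the statement at size $n+1$. The posetal relation I would take relates each reachable configuration of $R$ with the matching configuration of the right-hand side through the isomorphism that is the identity on the underlying events (keeping each key $m$ attached on reverse moves), and one then checks clause by clause that $f[e_1\mapsto e_2]$ (forward) and $f'[e_1[m]\mapsto e_2[n]]$ (reverse) again land in the relation, and the vice-versa directions symmetrically.

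Causality occurring inside a prefix $(\alpha_1\parallel\cdots\parallel\alpha_k).(\cdots)$ is treated exactly as in Proposition \ref{NELSPB}: a pomset $p=\{\alpha,\beta:\alpha.\beta\}$ is split as $\xrightarrow{p}=\xrightarrow{\alpha}\xrightarrow{\beta}$ (and $\xtworightarrow{p[\mathcal{K}]}=\xtworightarrow{\beta[n]}\xtworightarrow{\alpha[m]}$ in reverse), so the order recorded by $f$ is respected automatically. Hence, just as the passage from Proposition \ref{SLSSB} to Proposition \ref{SLSHPB}, the only genuinely new content over Proposition \ref{NELSPB} is verifying that the order isomorphism is maintained along both directions, which here is routine because on the two sides the set of events and the causal/concurrency relations between them coincide by construction, and the remaining case analysis can be omitted.

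The main obstacle is the communication summands: a pair of complementary moves $P_i\xrightarrow{l_1}P_i'$, $P_j\xrightarrow{l_2}P_j'$ with $f_i(l_1)=\overline{f_j(l_2)}$ on the left is contracted to a single $\tau$-prefixed summand on the right, so one must confirm that the resulting configurations are still isomorphic as posets (the synchronization contributes one $\tau$-event on each side) and, in the reverse direction, that the key $m$ carried by complementary reverse communications is assigned coherently so that $f'[l_1[m]\mapsto \tau]$-style extensions remain isomorphisms of the matched configurations. Once this key-assignment discipline is read off from the transition rules of Table \ref{RTRForCom}, the rest is the same induction as in Proposition \ref{NELSSB}.
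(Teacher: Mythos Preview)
Your proposal is correct and follows essentially the same approach as the paper: reduce to Proposition \ref{NELSPB} (and through it to Proposition \ref{NELSSB}), and layer on the extra bookkeeping for the posetal isomorphism $f$ by taking $f'=f[\alpha\mapsto\alpha]$ on forward moves and $f'=f[\alpha[m]\mapsto\alpha[m]]$ on reverse moves. If anything, you spell out more than the paper does---the paper simply recalls the definition of the posetal product, states how $f$ is extended along single-event transitions, and then defers the entire case analysis to Proposition \ref{NELSPB} with the remaining checks ``omitted''; your discussion of the communication/$\tau$ summands and the key discipline from Table \ref{RTRForCom} goes beyond what the paper makes explicit.
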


\begin{proof}
From the definition of strongly FR hp-bisimulation (see Definition \ref{FRHHPB}), we know that strongly FR hp-bisimulation is defined on the posetal product $(C_1,f,C_2),f:C_1\rightarrow C_2\textrm{ isomorphism}$. Two processes $P$ related to $C_1$ and $Q$ related to $C_2$, and $f:C_1\rightarrow C_2\textrm{ isomorphism}$. Initially, $(C_1,f,C_2)=(\emptyset,\emptyset,\emptyset)$, and $(\emptyset,\emptyset,\emptyset)\in\sim_{hp}$. When $P\xrightarrow{\alpha}P'$ ($C_1\xrightarrow{\alpha}C_1'$), there will be $Q\xrightarrow{\alpha}Q'$ ($C_2\xrightarrow{\alpha}C_2'$), and we define $f'=f[\alpha\mapsto \alpha]$. And when $P\xtworightarrow{\alpha[m]}P'$ ($C_1\xtworightarrow{\alpha[m]}C_1'$), there will be $Q\xtworightarrow{\alpha[m]}Q'$ ($C_2\xtworightarrow{\alpha[m]}C_2'$), and we define $f'=f[\alpha[m]\mapsto \alpha[m]]$. Then, if $(C_1,f,C_2)\in\sim_{hp}^{fr}$, then $(C_1',f',C_2')\in\sim_{hp}^{fr}$.

Similarly to the proof of new expansion law for strongly FR pomset bisimulation (see Proposition \ref{NELSPB}), we can prove that new expansion law holds for strongly FR hp-bisimulation, we just need additionally to check the above conditions on FR hp-bisimulation, we omit them.
\end{proof}

\begin{proposition}[New expansion law for strongly hhp-bisimulation]\label{NELSHHPB}
Let $P\equiv (P_1[f_1]\parallel\cdots\parallel P_n[f_n])\setminus L$, with $n\geq 1$. Then

\begin{eqnarray}
P\sim_{hhp}^{f} \{(f_1(\alpha_1)\parallel\cdots\parallel f_n(\alpha_n)).(P_1'[f_1]\parallel\cdots\parallel P_n'[f_n])\setminus L: \nonumber\\
P_i\xrightarrow{\alpha_i}P_i',i\in\{1,\cdots,n\},f_i(\alpha_i)\notin L\cup\overline{L}\} \nonumber\\
+\sum\{\tau.(P_1[f_1]\parallel\cdots\parallel P_i'[f_i]\parallel\cdots\parallel P_j'[f_j]\parallel\cdots\parallel P_n[f_n])\setminus L: \nonumber\\
P_i\xrightarrow{l_1}P_i',P_j\xrightarrow{l_2}P_j',f_i(l_1)=\overline{f_j(l_2)},i<j\}
\end{eqnarray}
\begin{eqnarray}
P\sim_{hhp}^{r} \{(P_1'[f_1]\parallel\cdots\parallel P_n'[f_n]).(f_1(\alpha_1[m])\parallel\cdots\parallel f_n(\alpha_n)[m])\setminus L: \nonumber\\
P_i\xtworightarrow{\alpha_i[m]}P_i',i\in\{1,\cdots,n\},f_i(\alpha_i)\notin L\cup\overline{L}\} \nonumber\\
+\sum\{(P_1[f_1]\parallel\cdots\parallel P_i'[f_i]\parallel\cdots\parallel P_j'[f_j]\parallel\cdots\parallel P_n[f_n]).\tau\setminus L: \nonumber\\
P_i\xtworightarrow{l_1[m]}P_i',P_j\xtworightarrow{l_2[m]}P_j',f_i(l_1)=\overline{f_j(l_2)},i<j\}
\end{eqnarray}
\end{proposition}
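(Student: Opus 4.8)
The plan is to reduce this to Proposition \ref{NELSHPB} exactly as the static-law proposition for hhp-bisimulation (Proposition \ref{SLSHHPB}) was reduced to the one for hp-bisimulation: by Definition \ref{FRHHPB}, a strongly FR hhp-bisimulation is precisely a downward closed strongly FR hp-bisimulation, so it suffices to show that the witnessing posetal relations for the forward equivalence $P\sim_{hp}^{f}(\cdots)$ and the reverse equivalence $P\sim_{hp}^{r}(\cdots)$ produced in the proof of Proposition \ref{NELSHPB} may be taken to be downward closed.

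First I would recall, as in Proposition \ref{NELSSB}, that it is enough to treat the case without Restriction and Relabelling and to argue by induction on $n$, the full side conditions involving $\setminus L$ and the $[f_i]$ being re-introduced at the end, and that the pomset (causality) case reduces to iterated single-event transitions $\xrightarrow{p}=\xrightarrow{\alpha}\xrightarrow{\beta}$ and $\xtworightarrow{p[\mathcal{K}]}=\xtworightarrow{\beta[n]}\xtworightarrow{\alpha[m]}$ as in Proposition \ref{NELSPB}. Then I would take the hp-bisimulation $R$ constructed in the proof of Proposition \ref{NELSHPB} --- whose triples $(C_1,f,C_2)$ pair a configuration of $P_1\parallel\cdots\parallel P_n$ with the corresponding configuration of its expanded form under the obvious isomorphism --- and form its downward closure $R^{\downarrow}$. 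The point is that the new expansion law only rearranges the top-level choice/prefix structure and leaves the event structure of each $P_i$ (hence of the parallel composition) unchanged, so restricting a matched pair to any pointwise-smaller, causally down-closed pair again yields a matched pair; this has to be checked for both the forward clause and the reverse clause of Definition \ref{FRHHPB}, and once it is, $R^{\downarrow}$ is a strongly FR hp-bisimulation that is downward closed and contains $(\emptyset,\emptyset,\emptyset)$.

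The main obstacle I expect is the bookkeeping for the communication ($\tau$) summands: when a matched pair arises after a synchronisation of two complementary actions $f_i(l_1)=\overline{f_j(l_2)}$, a $\tau$-event carries no key of its own, so to verify downward closure of the reverse clause one must keep track of which pair $(P_i',P_j')$ a past $\tau$ originated from and check that deleting it (together with whatever it causally dominates) on the left is matched by the same deletion on the right. This is the only place where the reduction is not completely formal; granting it, the remaining verification is routine and gives $P\sim_{hhp}^{f}(\cdots)$ and $P\sim_{hhp}^{r}(\cdots)$, as desired.
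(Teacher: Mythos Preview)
Your proposal is correct and follows essentially the same approach as the paper: the paper's proof simply observes that strongly FR hhp-bisimulation is by Definition \ref{FRHHPB} a downward closed strongly FR hp-bisimulation, and then says that the argument of Proposition \ref{NELSHPB} goes through once one checks downward closure, omitting all details. Your write-up is in fact considerably more explicit than the paper's --- in particular your identification of the $\tau$-summand bookkeeping as the only nontrivial point --- but the strategy is the same.
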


\begin{proof}
From the definition of strongly FR hhp-bisimulation (see Definition \ref{FRHHPB}), we know that strongly FR hhp-bisimulation is downward closed for strongly FR hp-bisimulation.

Similarly to the proof of new expansion law for strongly FR hp-bisimulation (see Proposition \ref{NELSHPB}), we can prove that new expansion law holds for strongly FR hhp-bisimulation, that is, they are downward closed for strongly FR hp-bisimulation, we omit them.
\end{proof}

\begin{theorem}[Congruence for strongly FR step bisimulation] \label{CSSB}
We can enjoy the congruence for strongly FR step bisimulation as follows.
\begin{enumerate}
  \item If $A\overset{\text{def}}{=}P$, then $A\sim_s^{fr} P$;
  \item Let $P_1\sim_s^{fr} P_2$. Then
        \begin{enumerate}
           \item $\alpha.P_1\sim_s^f \alpha.P_2$;
           \item $(\alpha_1\parallel\cdots\parallel\alpha_n).P_1\sim_s^f (\alpha_1\parallel\cdots\parallel\alpha_n).P_2$;
           \item $P_1.\alpha[m]\sim_s^r P_2.\alpha[m]$;
           \item $P_1.(\alpha_1[m]\parallel\cdots\parallel\alpha_n[m])\sim_s^r P_2.(\alpha_1[m]\parallel\cdots\parallel\alpha_n[m]).$;
           \item $P_1+Q\sim_s^{fr} P_2 +Q$;
           \item $P_1\parallel Q\sim_s^{fr} P_2\parallel Q$;
           \item $P_1\setminus L\sim_s^{fr} P_2\setminus L$;
           \item $P_1[f]\sim_s^{fr} P_2[f]$.
         \end{enumerate}
\end{enumerate}
\end{theorem}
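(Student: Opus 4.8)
The plan is to exhibit, for each clause, an explicit forward--reverse (FR) step bisimulation and then verify the transfer conditions of Definition \ref{FRPSB} by case analysis on the transition rules of Definition \ref{semantics}; since every candidate relation I build will be symmetric in its two components, the ``vice versa'' halves come for free. Clause (1) is immediate: the only rules applicable to a constant $A$ with $A\overset{\text{def}}{=}P$ are the constant rules of Tables \ref{FTRForRRC} and \ref{RTRForRRC}, which copy verbatim the forward and reverse, single-event and step transitions of $P$. Hence $A$ and $P$ generate isomorphic PESs, so $A\sim_s^{fr}P$; equivalently $\{(A,P)\}$ together with the identity on the PES of $P$ is an FR step bisimulation.

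For clause (2) I would first dispatch the ``transparent context'' cases --- the prefixes (a)--(d), Summation (e), Restriction (g) and Relabelling (h) --- in which the transitions of the compound term are, up to the surrounding context, in bijection with transitions of the component(s). Fix an FR step bisimulation $R$ witnessing $P_1\sim_s^{fr}P_2$. For (a), (b) use $R$ closed under the context together with the extra pair of roots: before the prefix fires, the only available step on either side is $\xrightarrow{\{\alpha_1,\ldots,\alpha_n\}}$, and after it fires the behaviour is that of $P_1,P_2$ and is handled by $R$, so the forward clause of $\sim_s^f$ holds; for (c), (d) the symmetric argument with the $P.\alpha[m]$ rules of Tables \ref{RTRForPS}, \ref{RPTRForPS} gives $\sim_s^r$. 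For (e), a step of $P_1+Q$ (Table \ref{FPTRForPS}) is either a step of $P_1$, matched through $R$ while keeping the residual shape forced by the rule, or a step of $Q$, kept verbatim, and reverse steps are handled dually via Table \ref{RPTRForPS}. For (g), (h) the steps of $P_1\setminus L$ are exactly the steps of $P_1$ whose labels avoid $L\cup\overline L$, and the steps of $P_1[f]$ are the $f$-images of steps of $P_1$; in both cases closing $R$ under the context yields the bisimulation, using that restriction and relabelling preserve sorts so that the side conditions transfer.

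The real work is clause (2)(f), Composition. I would take $S=\{(P_1'\parallel Q',\,P_2'\parallel Q') : (P_1',P_2')\in R,\ Q' \text{ reachable from } Q \text{ by forward and reverse moves}\}$ and check the transfer conditions against Tables \ref{FTRForCom} and \ref{RTRForCom}. A forward step of $P_1\parallel Q$ decomposes, according to those rules, into a step $X_P$ by $P_1$ alone (when $Q$ is forward-stuck), a step $X_Q$ by $Q$ alone (when $P_1$ is forward-stuck), a synchronized step $X_P\cup X_Q$, or a communication producing $\tau$. One replays the $X_P$ part through $R$ --- obtaining a step with an isomorphic pomset label and an $R$-related residual --- keeps the $X_Q$ part unchanged, and recombines; the recombined labels stay $\sim$-related and pairwise concurrent because $X_Q$ is literally the same on both sides and concurrency between $P$-events and $Q$-events is structural, and the side conditions ($Q\nrightarrow$, $P\nrightarrow$) transfer since $Q$ is the common component and $P_1\sim_s^{fr}P_2$ forces $P_1,P_2$ to have the same enabled steps. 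I expect the genuine obstacle to live in the reverse direction: the three-way and communication reverse rules of Table \ref{RTRForCom} require the two parallel components to undo past events carrying the \emph{same} key $m$, whereas $R$ only guarantees that $P_2$ can reverse \emph{some} event with an isomorphic label, a priori with a different key than $P_1$ used. I would resolve this by treating the keys as a bookkeeping device that can be chosen consistently --- formally, by normalizing the transition system (or $R$) so that bisimilar forward steps allocate matching keys, or by carrying along a partial bijection on keys in the style of the hp-case --- after which the reverse clause for (f) goes through in the same shape as the forward one. Assembling (1) with (2)(a)--(h) then gives the theorem.
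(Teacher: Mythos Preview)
Your approach---constructing an explicit FR step bisimulation for each clause and checking the transfer conditions by case analysis on the rules of Definition~\ref{semantics}---is precisely the strategy the paper follows, only the paper presents it more informally: for each operator it displays the relevant forward and reverse derivations for the $P_1$- and $P_2$-contexts side by side, then concludes ``with the assumption'' that the successor pairs are already related. The paper never writes down a candidate relation such as your $S$; its argument is an implicit coinduction in the same shape as yours.

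Your worry about key matching in clause~(2)(f) is well placed, and the paper does not address it either. In the reverse-Composition case the paper simply writes
\[
\frac{P_1\xtworightarrow{\alpha[m]}P_1'}{P_2\xtworightarrow{\alpha[m]}P_2'}\quad(P_1'\sim_s^{fr}P_2'),
\]
silently assuming that the matching reverse step on the $P_2$ side carries the \emph{same} key $m$, whereas Definition~\ref{FRPSB} only guarantees some $\mathcal K_2$ with an isomorphic label set. So the obstacle you isolate is present in the paper's own argument as well; both proofs need the key-normalisation you sketch (choosing matching keys along bisimilar forward runs, or carrying a bijection on keys) to be made fully rigorous. Your proposal is therefore at least as complete as the paper's proof and more explicit about where the residual work lies.
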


\begin{proof}
Though transition rules in Definition \ref{semantics} are defined in the flavor of single event, they can be modified into a step (a set of events within which each event is pairwise concurrent), we omit them. If we treat a single event as a step containing just one event, the proof of the congruence does not exist any problem, so we use this way and still use the transition rules in Definition \ref{semantics}.

\begin{enumerate}
  \item If $A\overset{\text{def}}{=}P$, then $A\sim_s^{fr} P$. It is obvious.
  \item Let $P_1\sim_s^{fr} P_2$. Then
        \begin{enumerate}
           \item $\alpha.P_1\sim_s^f \alpha.P_2$. By the forward transition rules of Prefix, we can get

           $$\alpha.P_1\xrightarrow{\alpha}\alpha[m]P_1$$

           $$\alpha.P_2\xrightarrow{\alpha}\alpha[m]P_2$$

           Since $P_1\sim_s^{fr} P_2$, we get $\alpha.P_1\sim_s^f \alpha.P_2$, as desired.
           \item $(\alpha_1\parallel\cdots\parallel\alpha_n).P_1\sim_s^f (\alpha_1\parallel\cdots\parallel\alpha_n).P_2$. By the transition rules of Prefix, we can get

           $$(\alpha_1\parallel\cdots\parallel\alpha_n).P_1\xrightarrow{\{\alpha_1,\cdots,\alpha_n\}}(\alpha_1[m]\parallel\cdots\parallel\alpha_n[m]).P_1$$

           $$(\alpha_1\parallel\cdots\parallel\alpha_n).P_2\xrightarrow{\{\alpha_1,\cdots,\alpha_n\}}(\alpha_1[m]\parallel\cdots\parallel\alpha_n[m]).P_2$$

           Since $P_1\sim_s^{fr} P_2$, we get $(\alpha_1\parallel\cdots\parallel\alpha_n).P_1\sim_s^f (\alpha_1\parallel\cdots\parallel\alpha_n).P_2$, as desired.
           \item $P_1.\alpha[m]\sim_s^r P_2.\alpha[m]$. By the reverse transition rules of Prefix, we can get

           $$P_1.\alpha[m]\xtworightarrow{\alpha[m]}P_1.\alpha$$

           $$P_2.\alpha[m]\xtworightarrow{\alpha[m]}P_2.\alpha$$

           Since $P_1\sim_s^{fr} P_2$, we get $P_1.\alpha[m]\sim_s^r P_2.\alpha[m]$, as desired.
           \item $P_1.(\alpha_1[m]\parallel\cdots\parallel\alpha_n[m])\sim_s^r P_2.(\alpha_1[m]\parallel\cdots\parallel\alpha_n[m])$. By the reverse transition rules of Prefix, we can get

           $$P_1.(\alpha_1[m]\parallel\cdots\parallel\alpha_n[m])\xtworightarrow{\{\alpha_1[m],\cdots,\alpha_n[m]\}}P_1.(\alpha_1\parallel\cdots\parallel\alpha_n)$$

           $$P_2.(\alpha_1[m]\parallel\cdots\parallel\alpha_n[m])\xtworightarrow{\{\alpha_1[m],\cdots,\alpha_n[m]\}}P_2.(\alpha_1\parallel\cdots\parallel\alpha_n)$$

           Since $P_1\sim_s^{fr} P_2$, we get $(\alpha_1\parallel\cdots\parallel\alpha_n).P_1\sim_s^r (\alpha_1\parallel\cdots\parallel\alpha_n).P_2$, as desired.
           \item $P_1+Q\sim_s^{fr} P_2 +Q$. There are several cases, we will not enumerate all. By the forward transition rules of Summation, we can get

           $$\frac{P_1\xrightarrow{\alpha}P_1'}{P_2\xrightarrow{\alpha}P_2'}(P_1'\sim_s^{fr} P_2')$$

           $$\frac{P_1\xrightarrow{\alpha}P_1'\quad\alpha\notin Q}{P_1+Q\xrightarrow{\alpha}P_1'+Q}
           \quad \frac{P_2\xrightarrow{\alpha}P_2'\quad\alpha\notin Q}{P_2+Q\xrightarrow{\alpha}P_2'+Q}$$

           $$\frac{Q\xrightarrow{\beta}Q'\quad\beta\notin P_1}{P_1+Q\xrightarrow{\beta}P_1+Q'}
           \quad \frac{Q\xrightarrow{\beta}Q'\quad\beta\notin P_2}{P_2+Q\xrightarrow{\beta}P_2+Q'}$$

           By the reverse transition rules of Summation, we can get

           $$\frac{P_1\xtworightarrow{\alpha[m]}P_1'}{P_2\xtworightarrow{\alpha[m]}P_2'}(P_1'\sim_s^{fr} P_2')$$

           $$\frac{P_1\xtworightarrow{\alpha[m]}P_1'\quad\alpha\notin Q}{P_1+Q\xtworightarrow{\alpha[m]}P_1'+Q}
           \quad \frac{P_2\xtworightarrow{\alpha[m]}P_2'\quad\alpha\notin Q}{P_2+Q\xtworightarrow{\alpha[m]}P_2'+Q}$$

           $$\frac{Q\xtworightarrow{\beta[n]}Q'\quad\beta\notin P_1}{P_1+Q\xtworightarrow{\beta[n]}P_1+Q'}
           \quad \frac{Q\xtworightarrow{\beta[n]}Q'\quad\beta\notin P_2}{P_2+Q\xtworightarrow{\beta[n]}P_2+Q'}$$

           With the assumptions $P_1'+\sim_s^{fr} P_2'+Q$ and $P_1+Q'\sim_s^{fr} P_2+Q'$, we get $P_1+Q\sim_s^{fr} P_2+Q$, as desired.
           \item $P_1\parallel Q\sim_s^{fr} P_2\parallel Q$. There are several cases, we will not enumerate all. By the forward transition rules of Composition, we can get

           $$\frac{P_1\xrightarrow{\alpha}P_1'}{P_2\xrightarrow{\alpha}P_2'}(P_1'\sim_s^{fr} P_2')$$

           $$\frac{P_1\xrightarrow{\alpha}P_1'\quad Q\nrightarrow}{P_1\parallel Q\xrightarrow{\alpha}P_1'\parallel Q}
           \quad \frac{P_2\xrightarrow{\alpha}P_2'\quad Q\nrightarrow}{P_2\parallel Q\xrightarrow{\alpha}P_2'\parallel Q}$$

           $$\frac{Q\xrightarrow{\beta}Q'\quad P_1\nrightarrow}{P_1\parallel Q\xrightarrow{\beta}P_1\parallel Q'}
           \quad \frac{Q\xrightarrow{\beta}P_2'\quad P_2\nrightarrow}{P_2\parallel Q\xrightarrow{\beta}P_2\parallel Q'}$$

           $$\frac{P_1\xrightarrow{\alpha}P_1'\quad Q\xrightarrow{\beta}Q'}{P_1\parallel Q\xrightarrow{\{\alpha,\beta\}}P_1'\parallel Q'}(\beta\neq\overline{\alpha})
           \quad \frac{P_2\xrightarrow{\alpha}P_2'\quad Q\xrightarrow{\beta}Q'}{P_2\parallel Q\xrightarrow{\{\alpha,\beta\}}P_2'\parallel Q'}(\beta\neq\overline{\alpha})$$

           $$\frac{P_1\xrightarrow{l}P_1'\quad Q\xrightarrow{\overline{l}}Q'}{P_1\parallel Q\xrightarrow{\tau}P_1'\parallel Q'}
           \quad \frac{P_2\xrightarrow{l}P_2'\quad Q\xrightarrow{\overline{l}}Q'}{P_2\parallel Q\xrightarrow{\tau}P_2'\parallel Q'}$$

           By the reverse transition rules of Composition, we can get

           $$\frac{P_1\xtworightarrow{\alpha[m]}P_1'}{P_2\xtworightarrow{\alpha[m]}P_2'}(P_1'\sim_s^{fr} P_2')$$

           $$\frac{P_1\xtworightarrow{\alpha[m]}P_1'\quad Q\xntworightarrow{}}{P_1\parallel Q\xtworightarrow{\alpha[m]}P_1'\parallel Q}
           \quad \frac{P_2\xtworightarrow{\alpha[m]}P_2'\quad Q\xntworightarrow{}}{P_2\parallel Q\xtworightarrow{\alpha[m]}P_2'\parallel Q}$$

           $$\frac{Q\xtworightarrow{\beta[n]}Q'\quad P_1\xntworightarrow{}}{P_1\parallel Q\xtworightarrow{\beta[n]}P_1\parallel Q'}
           \quad \frac{Q\xtworightarrow{\beta[n]}P_2'\quad P_2\xntworightarrow{}}{P_2\parallel Q\xtworightarrow{\beta[n]}P_2\parallel Q'}$$

           $$\frac{P_1\xtworightarrow{\alpha[m]}P_1'\quad Q\xtworightarrow{\beta[m]}Q'}{P_1\parallel Q\xtworightarrow{\{\alpha[m],\beta[m]\}}P_1'\parallel Q'}(\beta\neq\overline{\alpha})
           \quad \frac{P_2\xtworightarrow{\alpha[m]}P_2'\quad Q\xtworightarrow{\beta[m]}Q'}{P_2\parallel Q\xtworightarrow{\{\alpha[m],\beta[m]\}}P_2'\parallel Q'}(\beta\neq\overline{\alpha})$$

           $$\frac{P_1\xtworightarrow{l[m]}P_1'\quad Q\xtworightarrow{\overline{l}[m]}Q'}{P_1\parallel Q\xtworightarrow{\tau}P_1'\parallel Q'}
           \quad \frac{P_2\xtworightarrow{l[m]}P_2'\quad Q\xtworightarrow{\overline{l}[m]}Q'}{P_2\parallel Q\xtworightarrow{\tau}P_2'\parallel Q'}$$

           Since $P_1'\sim_s^{fr} P_2'$ and $Q'\sim_s^{fr} Q'$, and with the assumptions $P_1'\parallel Q\sim_s^{fr} P_2'\parallel Q$, $P_1\parallel Q'\sim_s^{fr} P_2\parallel Q'$ and $P_1'\parallel Q'\sim_s^{fr} P_2'\parallel Q'$, we get $P_1\parallel Q\sim_s^{fr} P_2\parallel Q$, as desired.
           \item $P_1\setminus L\sim_s^{fr} P_2\setminus L$. There are several cases, we will not enumerate all. By the forward transition rules of Restriction, we get

           $$\frac{P_1\xrightarrow{\alpha}P_1'}{P_2\xrightarrow{\alpha}P_2'}(P_1'\sim_s^{fr} P_2')$$

           $$\frac{P_1\xrightarrow{\alpha}P_1'}{P_1\setminus L\xrightarrow{\alpha}P_1'\setminus L}$$

           $$\frac{P_2\xrightarrow{\alpha}P_2'}{P_2\setminus L\xrightarrow{\alpha}P_2'\setminus L}$$

           By the reverse transition rules of Restriction, we get

           $$\frac{P_1\xtworightarrow{\alpha[m]}P_1'}{P_2\xtworightarrow{\alpha[m]}P_2'}(P_1'\sim_s^{fr} P_2')$$

           $$\frac{P_1\xtworightarrow{\alpha[m]}P_1'}{P_1\setminus L\xtworightarrow{\alpha[m]}P_1'\setminus L}$$

           $$\frac{P_2\xtworightarrow{\alpha[m]}P_2'}{P_2\setminus L\xtworightarrow{\alpha[m]}P_2'\setminus L}$$

           Since $P_1'\sim_s^{fr} P_2'$, and with the assumption $P_1'\setminus L\sim_s^{fr} P_2'\setminus L$, we get $P_1\setminus L\sim_s^{fr} P_2\setminus L$, as desired.
           \item $P_1[f]\sim_s^{fr} P_2[f]$. By the forward transition rules of Relabelling, we get

           $$\frac{P_1\xrightarrow{\alpha}P_1'}{P_2\xrightarrow{\alpha}P_2'}(P_1'\sim_s^{fr} P_2')$$

           $$\frac{P_1\xrightarrow{\alpha}P_1'}{P_1[f]\xrightarrow{f(\alpha)}P_1'[f]}$$

           $$\frac{P_2\xrightarrow{\alpha}P_2'}{P_2[f]\xrightarrow{f(\alpha)}P_2'[f]}$$

           By the reverse transition rules of Relabelling, we get

           $$\frac{P_1\xtworightarrow{\alpha[m]}P_1'}{P_2\xtworightarrow{\alpha[m]}P_2'}(P_1'\sim_s^{fr} P_2')$$

           $$\frac{P_1\xtworightarrow{\alpha[m]}P_1'}{P_1[f]\xtworightarrow{f(\alpha)[m]}P_1'[f]}$$

           $$\frac{P_2\xtworightarrow{\alpha[m]}P_2'}{P_2[f]\xtworightarrow{f(\alpha)[m]}P_2'[f]}$$

           Since $P_1'\sim_s^{fr} P_2'$, and with the assumption $P_1'[f]\sim_s^{fr} P_2'[f]$, we get $P_1[f]\sim_s^{fr} P_2[f]$, as desired.
         \end{enumerate}
\end{enumerate}
\end{proof}

\begin{theorem}[Congruence for strongly FR pomset bisimulation] \label{CSPB}
We can enjoy the congruence for strongly FR pomset bisimulation as follows.
\begin{enumerate}
  \item If $A\overset{\text{def}}{=}P$, then $A\sim_p^{fr} P$;
  \item Let $P_1\sim_p^{fr} P_2$. Then
        \begin{enumerate}
           \item $\alpha.P_1\sim_p^f \alpha.P_2$;
           \item $(\alpha_1\parallel\cdots\parallel\alpha_n).P_1\sim_p^f (\alpha_1\parallel\cdots\parallel\alpha_n).P_2$;
           \item $P_1.\alpha[m]\sim_p^r P_2.\alpha[m]$;
           \item $P_1.(\alpha_1[m]\parallel\cdots\parallel\alpha_n[m])\sim_p^r P_2.(\alpha_1[m]\parallel\cdots\parallel\alpha_n[m]).$;
           \item $P_1+Q\sim_p^{fr} P_2 +Q$;
           \item $P_1\parallel Q\sim_p^{fr} P_2\parallel Q$;
           \item $P_1\setminus L\sim_p^{fr} P_2\setminus L$;
           \item $P_1[f]\sim_p^{fr} P_2[f]$.
         \end{enumerate}
\end{enumerate}
\end{theorem}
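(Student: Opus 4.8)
The plan is to imitate the strategy of Proposition \ref{SLSPB}: reduce the pomset statement to the step statement of Theorem \ref{CSSB} together with the observation that causality inside a pomset label unfolds into a sequence of single-event transitions, and then rerun the eight-clause argument of Theorem \ref{CSSB} verbatim with single events replaced by pomsets.

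First, recall from Definition \ref{FRPSB} that a witnessing relation $R\subseteq\mathcal{C}(\mathcal{E}_1)\times\mathcal{C}(\mathcal{E}_2)$ must match forward pomset transitions $C_1\xrightarrow{X_1}C_1'$ (with $X_1\sim X_2$) and reverse pomset transitions $C_1'\xtworightarrow{X_1[\mathcal{K}_1]}C_1$. The events of a pomset label are pairwise consistent and are split by the partial order into chains (causality, induced by the prefix $.$) and antichains (concurrency). Theorem \ref{CSSB} already settles the sub-case where the label is a step; so it suffices to handle causality. As in Proposition \ref{SLSPB} it is enough to treat a two-event pomset $p=\{\alpha,\beta:\alpha.\beta\}$, for which $\xrightarrow{p}=\xrightarrow{\alpha}\xrightarrow{\beta}$ and $\xtworightarrow{p[\mathcal{K}]}=\xtworightarrow{\beta[n]}\xtworightarrow{\alpha[m]}$; a general pomset move is then assembled by iterating this decomposition together with the step decomposition of Theorem \ref{CSSB}.

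Second, I would run through the clauses exactly as in Theorem \ref{CSSB}. Clause 1 is immediate from the Constant rules of Tables \ref{FTRForRRC} and \ref{RTRForRRC}. For the prefix clauses 2(a)--2(d): a forward pomset move of $\alpha.P_1$ (resp.\ $(\alpha_1\parallel\cdots\parallel\alpha_n).P_1$) must first fire the prefix, reaching $\alpha[m].P_1$ (resp.\ the keyed parallel prefix), after which $P_1$ and $P_2$ mimic each other using $P_1\sim_p^{fr}P_2$, yielding $\alpha.P_1\sim_p^f\alpha.P_2$ and its parallel analogue; symmetrically, for the reverse clauses 2(c),2(d) one first undoes the moves of $P_i$ and then undoes the keyed prefix, yielding $P_1.\alpha[m]\sim_p^r P_2.\alpha[m]$ and its parallel analogue. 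For clauses 2(e)--2(h) I would take the candidate relation $R=\{(C(P_1+Q),C(P_2+Q))\}\cup\sim_p^{fr}$ (and the analogues for $\parallel$, $\setminus L$, $[f]$), use the Summation, Composition, Restriction and Relabelling rules of Tables \ref{FPTRForPS}--\ref{RPTRForPS} and \ref{FTRForCom}--\ref{RTRForRRC} to match both forward and reverse pomset transitions, and close under the induction hypotheses $P_1'+Q\sim_p^{fr}P_2'+Q$, $P_1'\parallel Q\sim_p^{fr}P_2'\parallel Q$, and so on, precisely as for steps in Theorem \ref{CSSB}.

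The main obstacle is the bookkeeping of causality versus concurrency inside one pomset label under the reverse rules, where the unfolding order is reversed: one must verify that the communication keys $\mathcal{K}$ chosen on a forward pomset transition are consistently restored on the matching reverse transition, so that a labelled pomset $X[\mathcal{K}]$ on one side is always matched by an isomorphic $X'[\mathcal{K}']$ on the other (with $X\sim X'$), including across the Composition rule where complementary labels $l,\overline{l}$ collapse to $\tau$. Once this is checked for the two-event building blocks, the general case follows by the same inductive assembly as in Theorem \ref{CSSB} and Proposition \ref{SLSPB}, so I would present only the building-block verifications and omit the routine case enumeration.
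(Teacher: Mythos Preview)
Your proposal is correct and follows essentially the same approach as the paper: reduce the pomset case to the step case of Theorem \ref{CSSB} by observing, exactly as in Proposition \ref{SLSPB}, that causality inside a pomset label unfolds into a sequence of single-event transitions (forward $\xrightarrow{p}=\xrightarrow{\alpha}\xrightarrow{\beta}$ and reverse $\xtworightarrow{p[\mathcal{K}]}=\xtworightarrow{\beta[n]}\xtworightarrow{\alpha[m]}$), and then rerun the clause-by-clause argument of Theorem \ref{CSSB}. You are in fact more explicit than the paper, which after stating the two-event decomposition simply writes ``similarly to the proof of congruence for strongly FR step bisimulation \ldots\ we omit them''; your discussion of the key-bookkeeping obstacle is additional detail the paper does not spell out.
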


\begin{proof}
From the definition of strongly FR pomset bisimulation (see Definition \ref{FRPSB}), we know that strongly FR pomset bisimulation is defined by FR pomset transitions, which are labeled by pomsets. In an FR pomset transition, the events in the pomset are either within causality relations (defined by the prefix $.$) or in concurrency (implicitly defined by $.$ and $+$, and explicitly defined by $\parallel$), of course, they are pairwise consistent (without conflicts). In Theorem \ref{CSSB}, we have already proven the case that all events are pairwise concurrent, so, we only need to prove the case of events in causality. Without loss of generality, we take a pomset of $p=\{\alpha,\beta:\alpha.\beta\}$. Then the FR pomset transition labeled by the above $p$ is just composed of one single event transition labeled by $\alpha$ succeeded by another single event transition labeled by $\beta$, that is, $\xrightarrow{p}=\xrightarrow{\alpha}\xrightarrow{\beta}$ and $\xtworightarrow{p[\mathcal{K}]}=\xtworightarrow{\beta[n]}\xtworightarrow{\alpha[m]}$.

Similarly to the proof of congruence for strongly FR step bisimulation (see Theorem \ref{CSSB}), we can prove that the congruence holds for strongly FR pomset bisimulation, we omit them.
\end{proof}

\begin{theorem}[Congruence for strongly FR hp-bisimulation] \label{CSHPB}
We can enjoy the congruence for strongly FR hp-bisimulation as follows.
\begin{enumerate}
  \item If $A\overset{\text{def}}{=}P$, then $A\sim_{hp}^{fr} P$;
  \item Let $P_1\sim_{hp}^{fr} P_2$. Then
        \begin{enumerate}
           \item $\alpha.P_1\sim_{hp}^f \alpha.P_2$;
           \item $(\alpha_1\parallel\cdots\parallel\alpha_n).P_1\sim_{hp}^f (\alpha_1\parallel\cdots\parallel\alpha_n).P_2$;
           \item $P_1.\alpha[m]\sim_{hp}^r P_2.\alpha[m]$;
           \item $P_1.(\alpha_1[m]\parallel\cdots\parallel\alpha_n[m])\sim_{hp}^r P_2.(\alpha_1[m]\parallel\cdots\parallel\alpha_n[m]).$;
           \item $P_1+Q\sim_{hp}^{fr} P_2 +Q$;
           \item $P_1\parallel Q\sim_{hp}^{fr} P_2\parallel Q$;
           \item $P_1\setminus L\sim_{hp}^{fr} P_2\setminus L$;
           \item $P_1[f]\sim_{hp}^{fr} P_2[f]$.
         \end{enumerate}
\end{enumerate}
\end{theorem}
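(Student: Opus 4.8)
The plan is to follow the strategy already used for the congruence of strongly FR step bisimulation (Theorem \ref{CSSB}), but to carry along the order-isomorphism on configurations that is part of the posetal product, exactly in the spirit of Proposition \ref{SLSHPB} and Proposition \ref{NELSHPB}. Item (1), $A\sim_{hp}^{fr}P$ when $A\overset{\text{def}}{=}P$, is immediate: by the rules for constants in Tables \ref{FTRForRRC} and \ref{RTRForRRC}, $A$ and $P$ have exactly the same forward and reverse transitions, so the posetal relation consisting of $(\emptyset,\emptyset,\emptyset)$ together with all $(C,\mathrm{id}_C,C)$ reachable by the shared transitions is an FR hp-bisimulation.

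For item (2), fix an FR hp-bisimulation $R$ with $(\emptyset,\emptyset,\emptyset)\in R$ witnessing $P_1\sim_{hp}^{fr}P_2$. For each operator I would build a candidate posetal relation $R'$ by closing $R$ under the relevant one-hole context: for $\alpha.(-)$, $(\alpha_1\parallel\cdots\parallel\alpha_n).(-)$, $(-).\alpha[m]$, $(-).(\alpha_1[m]\parallel\cdots\parallel\alpha_n[m])$ prepend or append the corresponding (identically performed, identity-mapped) prefix events to the triples of $R$; for $(-)+Q$ adjoin the identity triple on $Q$'s configurations; for $(-)\parallel Q$ form the ``product'' of a triple from $R$ with the identity triple on $Q$'s configurations; for $(-)\setminus L$ and $(-)[f]$ merely relabel the carriers. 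Then I would check $R'$ is an FR hp-bisimulation by case analysis on which transition rule fires. A forward move of the composite reduces, via the Prefix/Summation/Composition/Restriction/Relabelling rules, to a forward move of $P_1$ matched by $P_2$ through clause (1) of the FR hp-bisimulation definition, with $f$ extended to $f[e_1\mapsto e_2]$ and the context contributing identically; a reverse move is handled symmetrically through clause (2), with $f'$ extended to $f'[e_1[m]\mapsto e_2[n]]$, so the communication-key bookkeeping stays aligned. The new point over Theorem \ref{CSSB} is that the extended map must remain an \emph{isomorphism} of the posets $C_1'$ and $C_2'$; this holds because each context above adds the same causal structure on both sides — a fresh bottom element for a leading prefix, a fresh top element for a trailing reverse prefix, a disjoint concurrent component for $\parallel Q$, and nothing at all for $+Q$, $\setminus L$, $[f]$ — so order and concurrency are preserved under the extension.

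The step I expect to be the main obstacle is the Composition case $P_1\parallel Q\sim_{hp}^{fr}P_2\parallel Q$. There a forward move can be a solo move of $P_i$, a solo move of $Q$, a synchronised concurrent step $\{\alpha,\beta\}$, or a communication $\tau$ from matching $l,\overline{l}$, and in each case the produced events may be concurrent with, or causally related to, earlier $P_i$-events and earlier $Q$-events; I must argue that these ``mixed'' relations coincide in $P_1\parallel Q$ and $P_2\parallel Q$, which works because $f$ already matches the $P_i$-side order, the $Q$-side is untouched, and the Composition rules of Tables \ref{FTRForCom}, \ref{RTRForCom} impose no cross-dependencies. The reverse direction is the fiddliest: whenever $P_1\parallel Q$ undoes a (possibly synchronised) past event with given keys, I must produce a matching reverse move of $P_2\parallel Q$ on the $R$-corresponding past event with matching keys while keeping $f'$ an isomorphism, using clause (2) for $R$ together with the reverse Composition rules. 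Once Composition is settled, the remaining operators are routine variations of the above, and the check that $R'\supseteq R$ still contains $(\emptyset,\emptyset,\emptyset)$ (or reduces to it under the context) completes the argument.
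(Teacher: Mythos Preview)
Your proposal is correct and follows essentially the same approach as the paper: lift the congruence proof for strongly FR step/pomset bisimulation (Theorem \ref{CSSB} via Theorem \ref{CSPB}) and additionally track the posetal isomorphism $f$, updating it by $f'=f[\alpha\mapsto\alpha]$ (resp.\ $f'=f[\alpha[m]\mapsto\alpha[m]]$) at each forward (resp.\ reverse) step, exactly as done in Propositions \ref{SLSHPB} and \ref{NELSHPB}. Your sketch is in fact considerably more detailed than the paper's own proof, which simply states the isomorphism-update rule and then writes ``we just need additionally to check the above conditions on FR hp-bisimulation, we omit them''; your explicit construction of $R'$ per operator and your discussion of why the extended map stays an order-isomorphism (especially in the Composition case) fill in precisely what the paper leaves implicit.
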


\begin{proof}
From the definition of strongly FR hp-bisimulation (see Definition \ref{FRHHPB}), we know that strongly FR hp-bisimulation is defined on the posetal product $(C_1,f,C_2),f:C_1\rightarrow C_2\textrm{ isomorphism}$. Two processes $P$ related to $C_1$ and $Q$ related to $C_2$, and $f:C_1\rightarrow C_2\textrm{ isomorphism}$. Initially, $(C_1,f,C_2)=(\emptyset,\emptyset,\emptyset)$, and $(\emptyset,\emptyset,\emptyset)\in\sim_{hp}$. When $P\xrightarrow{\alpha}P'$ ($C_1\xrightarrow{\alpha}C_1'$), there will be $Q\xrightarrow{\alpha}Q'$ ($C_2\xrightarrow{\alpha}C_2'$), and we define $f'=f[\alpha\mapsto \alpha]$. And when $P\xtworightarrow{\alpha[m]}P'$ ($C_1\xtworightarrow{\alpha[m]}C_1'$), there will be $Q\xtworightarrow{\alpha[m]}Q'$ ($C_2\xtworightarrow{\alpha[m]}C_2'$), and we define $f'=f[\alpha[m]\mapsto \alpha[m]]$. Then, if $(C_1,f,C_2)\in\sim_{hp}^{fr}$, then $(C_1',f',C_2')\in\sim_{hp}^{fr}$.

Similarly to the proof of congruence for strongly FR pomset bisimulation (see Theorem \ref{CSPB}), we can prove that the congruence holds for strongly FR hp-bisimulation, we just need additionally to check the above conditions on FR hp-bisimulation, we omit them.
\end{proof}

\begin{theorem}[Congruence for strongly FR hhp-bisimulation] \label{CSHHPB}
We can enjoy the congruence for strongly FR hhp-bisimulation as follows.
\begin{enumerate}
  \item If $A\overset{\text{def}}{=}P$, then $A\sim_{hhp}^{fr} P$;
  \item Let $P_1\sim_{hhp}^{fr} P_2$. Then
        \begin{enumerate}
           \item $\alpha.P_1\sim_{hhp}^f \alpha.P_2$;
           \item $(\alpha_1\parallel\cdots\parallel\alpha_n).P_1\sim_{hhp}^f (\alpha_1\parallel\cdots\parallel\alpha_n).P_2$;
           \item $P_1.\alpha[m]\sim_{hhp}^r P_2.\alpha[m]$;
           \item $P_1.(\alpha_1[m]\parallel\cdots\parallel\alpha_n[m])\sim_{hhp}^r P_2.(\alpha_1[m]\parallel\cdots\parallel\alpha_n[m]).$;
           \item $P_1+Q\sim_{hhp}^{fr} P_2 +Q$;
           \item $P_1\parallel Q\sim_{hhp}^{fr} P_2\parallel Q$;
           \item $P_1\setminus L\sim_{hhp}^{fr} P_2\setminus L$;
           \item $P_1[f]\sim_{hhp}^{fr} P_2[f]$.
         \end{enumerate}
\end{enumerate}
\end{theorem}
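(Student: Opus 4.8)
The plan is to reduce the claim to the congruence theorem for strongly FR hp-bisimulation (Theorem \ref{CSHPB}), using the observation recorded in Definition \ref{FRHHPB} that a strongly FR hhp-bisimulation is exactly a \emph{downward closed} strongly FR hp-bisimulation. Part (1) is immediate: if $A\overset{\text{def}}{=}P$ then by the rules for constants in Table \ref{FTRForRRC} and Table \ref{RTRForRRC} the process $A$ has precisely the same forward and reverse transitions as $P$, so the posetal relation relating each configuration $C(A)$ with the corresponding $C(P)$ by the identity is trivially downward closed and is an FR hp-bisimulation, hence $A\sim_{hhp}^{fr}P$.

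For part (2) I would proceed operator by operator, following the transition-rule diagrams already displayed in the proofs of Theorem \ref{CSSB} and Theorem \ref{CSHPB}. Assume $P_1\sim_{hhp}^{fr}P_2$ via a downward closed FR hp-bisimulation $R$ with $(\emptyset,\emptyset,\emptyset)\in R$. For each of the contexts $\alpha.[\cdot]$, $(\alpha_1\parallel\cdots\parallel\alpha_n).[\cdot]$, $[\cdot].\alpha[m]$, $[\cdot].(\alpha_1[m]\parallel\cdots\parallel\alpha_n[m])$, $[\cdot]+Q$, $[\cdot]\parallel Q$, $[\cdot]\setminus L$ and $[\cdot][f]$, I would take the candidate relation
\[
R'=\{(C(D[P_1]),\,g,\,C(D[P_2])): (C(P_1),f,C(P_2))\in R,\ g \text{ extends } f \text{ by the identity on the events the context contributes}\},
\]
where $D[\cdot]$ denotes the context in question, possibly augmented with the initial triple and, in the $\parallel Q$ case, with the identity relation witnessing $Q\sim_{hhp}^{fr}Q$. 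The forward and reverse transfer conditions of Definition \ref{FRHHPB} for $R'$ follow verbatim from those for $R$ together with the corresponding rules in Tables \ref{FTRForPS} through \ref{RTRForRRC}, exactly as in Theorem \ref{CSHPB} with the bookkeeping $f'=f[\alpha\mapsto\alpha]$ (forward) and $f'=f[\alpha[m]\mapsto\alpha[m]]$ (reverse). The one genuinely new obligation is to verify that $R'$ is downward closed: given $(C_1,g,C_2)\in R'$ and a pointwise sub-triple $(C_1'',g'',C_2'')\subseteq(C_1,g,C_2)$, one must exhibit it as arising from a sub-triple of $R$ via the same context.

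I expect the parallel-composition context $[\cdot]\parallel Q$ to be the main obstacle. For the prefix, summation, restriction and relabelling contexts downward closure transfers easily, because the events the context contributes sit in a prefix- or summand-position that is either wholly present or wholly absent in a sub-configuration, so a sub-triple of $C(D[P_i])$ restricts to a sub-triple of $C(P_i)$ and downward closure of $R$ applies directly. For $[\cdot]\parallel Q$, however, a sub-configuration of $C(P_i\parallel Q)$ must first be shown to split as a sub-configuration of $C(P_i)$ together with a sub-configuration of $C(Q)$ compatibly with the isomorphism, after which downward closure of $R$ handles the first component and reflexivity of the identity relation handles the second; making this product-decomposition argument precise is the crux. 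Once it is in place, each $R'$ is a downward closed FR hp-bisimulation containing the required initial triple, so $D[P_1]\sim_{hhp}^{fr}D[P_2]$ for every context, which is the theorem. Following the paper's convention I would display the transition diagrams as in Theorem \ref{CSHPB}, note that the downward-closure check goes through, and omit the routine remainder.
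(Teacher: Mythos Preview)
Your proposal is correct and follows essentially the same approach as the paper: invoke Definition \ref{FRHHPB} to view strongly FR hhp-bisimulation as downward closed strongly FR hp-bisimulation, then argue that the congruence proof for $\sim_{hp}^{fr}$ (Theorem \ref{CSHPB}) goes through with the additional downward-closure check. The paper's own proof is in fact much terser than yours---it simply cites Definition \ref{FRHHPB} and Theorem \ref{CSHPB} and writes ``we omit them''---so your explicit construction of the candidate relations $R'$ and your identification of the parallel-composition case as the delicate point for downward closure go well beyond what the paper supplies.
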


\begin{proof}
From the definition of strongly FR hhp-bisimulation (see Definition \ref{FRHHPB}), we know that strongly FR hhp-bisimulation is downward closed for strongly FR hp-bisimulation.

Similarly to the proof of congruence for strongly FR hp-bisimulation (see Theorem \ref{CSHPB}), we can prove that the congruence holds for strongly FR hhp-bisimulation, we omit them.
\end{proof}

\begin{definition}[Weakly guarded recursive expression]
$X$ is weakly guarded in $E$ if each occurrence of $X$ is with some subexpression $\alpha.F$ or $(\alpha_1\parallel\cdots\parallel\alpha_n).F$ or $F.\alpha[m]$ or $F.(\alpha_1[m]\parallel\cdots\parallel\alpha_n[m])$ of $E$.
\end{definition}

\begin{proposition}\label{LUS}
If the variables $\widetilde{X}$ are weakly guarded in $E$, and $E\{\widetilde{P}/\widetilde{X}\}\xrightarrow{\{\alpha_1,\cdots,\alpha_n\}}P'$, or $E\{\widetilde{P}/\widetilde{X}\}\xtworightarrow{\{\alpha_1[m],\cdots,\alpha_n[m]\}}P'$, then $P'$ can not takes the form $E'\{\widetilde{P}/\widetilde{X}\}$ for some expression $E'$.
\end{proposition}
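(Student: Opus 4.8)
The natural plan is to argue by induction on the structure of $E$ (equivalently, by induction on the depth of the inference of the transition), using the definition of weak guardedness, which forces every occurrence of each $X_i$ in $E$ to sit inside a prefixed subterm of one of the forms $\alpha.F$, $(\alpha_1\parallel\cdots\parallel\alpha_n).F$, $F.\alpha[m]$, or $F.(\alpha_1[m]\parallel\cdots\parallel\alpha_n[m])$. First I would dispose of the base cases: when $E\equiv\textbf{nil}$ there is no transition at all, and when $E\equiv A$ with $A\overset{\text{def}}{=}P_A$ I would unfold the constant via the rules of Tables \ref{FTRForRRC} and \ref{RTRForRRC} and recurse on $P_A$. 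The case $E\equiv X_i$ is vacuous, since a bare variable is not weakly guarded.

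Then I would treat the inductive cases by the outermost operator of $E$, reading the applicable forward rule off Tables \ref{FTRForPS}, \ref{FPTRForPS}, \ref{FTRForCom}, \ref{FTRForRRC} and the mirrored reverse rule off Tables \ref{RTRForPS}, \ref{RPTRForPS}, \ref{RTRForCom}, \ref{RTRForRRC}. For a prefix $E\equiv(\alpha_1\parallel\cdots\parallel\alpha_n).F$ the only enabled step consumes the guard and produces the residual $(\alpha_1[m]\parallel\cdots\parallel\alpha_n[m]).(F\{\widetilde{P}/\widetilde{X}\})$; for Summation, Composition, Restriction and Relabelling I would assemble the residual from the ones handed up by the induction hypothesis. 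The crucial structural point in every case is that, since the $X_i$ are buried beneath guards, the transition never reaches into any $P_i$, so the substitution $\{\widetilde{P}/\widetilde{X}\}$ is merely carried along unchanged into $P'$.

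The main obstacle, which I would flag prominently, is that this case analysis actually exhibits a concrete $E'$ with $P'\equiv E'\{\widetilde{P}/\widetilde{X}\}$ in every case: for example $E\equiv\alpha.X$ gives $\alpha.P\xrightarrow{\alpha}\alpha[m].P\equiv(\alpha[m].X)\{P/X\}$, and symmetrically $E\equiv X.\alpha[m]$ gives $P.\alpha[m]\xtworightarrow{\alpha[m]}P.\alpha\equiv(X.\alpha)\{P/X\}$. Hence the induction establishes the \emph{positive} Milner-style statement that $P'$ \emph{does} take the form $E'\{\widetilde{P}/\widetilde{X}\}$ --- precisely the shape needed to drive a unique-solution-of-recursion argument downstream. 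Reconciling this with the conclusion as printed, namely that $P'$ \emph{cannot} take this form, is therefore the genuine difficulty, since the two are directly contradictory; I would expect the intended statement to be the positive one (with the further clause that $E\{\widetilde{Q}/\widetilde{X}\}\xrightarrow{\{\alpha_1,\cdots,\alpha_n\}}E'\{\widetilde{Q}/\widetilde{X}\}$ for every $\widetilde{Q}$), and the negation in the displayed proposition to be a slip that should be corrected before the proof is completed.
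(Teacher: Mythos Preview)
Your induction on the depth of the inference and your case split by the outermost operator of $E$ is exactly how the paper begins as well. Where you diverge is in the \emph{conclusion} you expect the induction to deliver. You treat the displayed ``can not'' as a typographical slip for the usual Milner lemma (``$P'$ \emph{does} take the form $E'\{\widetilde P/\widetilde X\}$, and moreover $E\{\widetilde Q/\widetilde X\}\xrightarrow{\{\alpha_1,\dots,\alpha_n\}}E'\{\widetilde Q/\widetilde X\}$ for every $\widetilde Q$''), and you give concrete witnesses such as $E'\equiv\alpha[m].X$ and $E'\equiv X.\alpha$. The paper, by contrast, means the negated statement literally: its proof treats only the case $E\equiv E_1+E_2$, observes that the Summation rule for RCTC produces the residual $e_1[m].E_1'+E_2$ (the unchosen summand is retained, unlike in CCS), baldly asserts that this residual ``can not take the form $E'\{\widetilde P/\widetilde X\}$'', and then draws the moral ``So, there may be not recursive expression for strongly FR truly concurrent bisimulations.'' In other words, the proposition is being used to \emph{block} a unique-solution theorem, not to set one up.

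So you have correctly identified the mathematical content --- your witnesses show that the residual \emph{is} of the form $E'\{\widetilde P/\widetilde X\}$, and hence that the printed statement is false as it stands --- but your diagnosis that the negation is an accidental slip is not what the paper intends. The paper deliberately asserts the negative conclusion, offers only the single Summation case with no argument beyond assertion, and never addresses the obvious candidate $E'\equiv e_1[m].E_1'+E_2$ that you (rightly) produce. If you want your write-up to track the paper, you should present the Summation case as the paper does and note that the paper's aim is to rule out recursion; if you want a correct statement, your positive version is the right one, but it is not what the paper proves.
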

\begin{proof}
It needs to induct on the depth of the inference of $E\{\widetilde{P}/\widetilde{X}\}\xrightarrow{\{\alpha_1,\cdots,\alpha_n\}}P'$ or $E\{\widetilde{P}/\widetilde{X}\}\xtworightarrow{\{\alpha_1[m],\cdots,\alpha_n[m]\}}P'$. We consider $E\{\widetilde{P}/\widetilde{X}\}\xrightarrow{\{\alpha_1,\cdots,\alpha_n\}}P'$.

Case $E\equiv E_1+E_2$. We may have $E_1\xrightarrow{e_1}e_1[m]\cdot E_1'\quad e_1\notin E_2$, $E_1+E_2\xrightarrow{e_1}e_1[m]\cdot E_1'+E_2$, $e_1[m]\cdot E_1'+E_2$ can not takes the form $E'\{\widetilde{P}/\widetilde{X}\}$ for some expression $E'$.

So, there may be not recursive expression for strongly FR truly concurrent bisimulations. For the same reason, there also may be not recursive expression for weakly FR truly concurrent bisimulations.
\end{proof}

\section{Weakly Forward-reverse Truly Concurrent Bisimulations}\label{wftcb}

Remembering that $\tau$ can neither be restricted nor relabeled, we know that the monoid laws, the static laws and the new expansion law in section \ref{sftcb} still hold with respect to the corresponding weakly FR truly concurrent bisimulations. And also, we can enjoy the congruence of Prefix, Summation, Composition, Restriction, Relabelling and Constants with respect to corresponding weakly FR truly concurrent bisimulations. We will not retype these laws, and just give the $\tau$-specific laws. The forward and reverse transition rules of $\tau$ are shown in Table \ref{TRForTAU}, where $\xrightarrow{\tau}\surd$ is a predicate which represents a successful termination after execution of the silent step $\tau$.

\begin{center}
    \begin{table}
        $$\frac{}{\tau\xrightarrow{\tau}\surd}$$
        $$\frac{}{\tau\xtworightarrow{\tau}\surd}$$
        \caption{Forward and reverse transition rules of $\tau$}
        \label{TRForTAU}
    \end{table}
\end{center}

\begin{proposition}[$\tau$ laws for weakly FR step bisimulation]\label{TAUWSB}
The $\tau$ laws for weakly FR step bisimulation is as follows.
\begin{enumerate}
  \item $P\approx_s^f \tau.P$;
  \item $P\approx_s^r P.\tau$;
  \item $\alpha.\tau.P\approx_s^f \alpha.P$;
  \item $P.\tau.\alpha[m]\approx_s^r P.\alpha[m]$;
  \item $(\alpha_1\parallel\cdots\parallel\alpha_n).\tau.P\approx_s^f (\alpha_1\parallel\cdots\parallel\alpha_n).P$;
  \item $P.\tau.(\alpha_1[m]\parallel\cdots\parallel\alpha_n[m])\approx_s^r P.(\alpha_1[m]\parallel\cdots\parallel\alpha_n[m])$;
  \item $P+\tau.P\approx_s^f \tau.P$;
  \item $P+P.\tau\approx_s^r P.\tau$;
  \item $\alpha.(\tau.(P+Q)+P)\approx_s^f\alpha.(P+Q)$;
  \item $((P+Q).\tau+P).\alpha[m]\approx_s^r(P+Q).\alpha[m]$;
  \item $(\alpha_1\parallel\cdots\parallel\alpha_n).(\tau.(P+Q)+P)\approx_s^f(\alpha_1\parallel\cdots\parallel\alpha_n).(P+Q)$;
  \item $((P+Q).\tau+P).(\alpha_1[m]\parallel\cdots\parallel\alpha_n[m])\approx_s^r(P+Q).(\alpha_1[m]\parallel\cdots\parallel\alpha_n[m])$;
  \item $P\approx_s^{fr} \tau\parallel P$.
\end{enumerate}
\end{proposition}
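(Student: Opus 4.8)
The plan is to prove every item by exhibiting an explicit weak FR step bisimulation (Definition \ref{FRWPSB}) relating the two sides, using throughout that a $\tau$-labelled step — forward or reverse — is absorbed by the weak arrows, since by definition $\xRightarrow{e}\triangleq\xrightarrow{\tau^*}\xrightarrow{e}\xrightarrow{\tau^*}$ and dually $\xTworightarrow{e[m]}\triangleq\xtworightarrow{\tau^*}\xtworightarrow{e[m]}\xtworightarrow{\tau^*}$. I would first note which clauses are in play: for the $\approx_s^f$ laws only the forward clause of Definition \ref{FRWPSB} needs checking, for the $\approx_s^r$ laws only the reverse clause, and for (13), which carries the superscript $fr$, both.

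Start with the ``$\tau$-as-unit'' laws (1), (2), (5), (6), (13). The common core is the observation: if $Q$ is obtained from $P$ by prepending a $\tau$ — as a prefix ($\tau.P$ or $P.\tau$) or as a parallel component ($\tau\parallel P$) — then $Q\xRightarrow{X}Q'$ holds exactly when $P\xRightarrow{X}P'$ with $Q'$ again of that shape relative to $P'$, the leading $\tau$ contributing only a silent move swallowed by $\xrightarrow{\tau^*}$ (and in the parallel case $\tau$ cannot synchronise, there being no $\overline{\tau}$, so it only ever yields a $\tau$ move, possibly jointly with a move of $P$); symmetrically for $\xTworightarrow{X[\mathcal{K}]}$ via Table \ref{RTRForCom}. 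Hence the relation consisting of the identity on $\mathcal{P}$ together with all pairs $(\tau.P,P)$, $(P.\tau,P)$, $(\tau\parallel P,P)$, $(\textbf{nil}\parallel P,P)$ and their inverses, closed under reachable derivatives, is the required weak FR step bisimulation and contains $(\emptyset,\emptyset)$; for (5), (6) one performs the common multi-event step $\xRightarrow{\{\alpha_1,\dots,\alpha_n\}}$ first and falls back on (1), (2), and for (13) the static law $P\parallel\textbf{nil}\approx_s^{fr}P$ (which holds weakly, as recalled at the start of this section) closes off the $\textbf{nil}\parallel P$ derivatives.

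For (3), (4) one performs the shared prefix step and is left with $\tau.P$ versus $P$ (resp.\ $P.\tau$ versus $P$ on the reverse side, handled by the reverse prefix rules of Table \ref{RTRForPS} with the $\textrm{Std}$/$\textrm{NStd}$ side conditions), already related by (1)/(2); so the bisimulation is the previous relation enlarged with $(\alpha.\tau.P,\alpha.P)$ and the $n$-ary variants and their derivatives. The substantive cases are the third-$\tau$-law shapes (7)--(12). For the representative $\alpha.(\tau.(P+Q)+P)\approx_s^f\alpha.(P+Q)$: after $\xRightarrow{\alpha}$ the left component $\tau.(P+Q)+P$ can, by the forward Summation rules of Table \ref{FPTRForPS}, either fire $\tau$ to reach $(P+Q)$ or perform any weak step inherited from $P$; both are matched because $(P+Q)\xRightarrow{X}$ subsumes every weak step of $P$, whereas conversely a weak step of $(P+Q)$ comes either from $P$ (matched directly) or from $Q$ (matched after the absorbed $\tau$ in front of $(P+Q)$). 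I would run this pattern for (7), (9), (11), and dually, using the reverse tables, for (8), (10), (12), closing the exhibited relation under the transition rules and taking the identity elsewhere.

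The step I expect to be the main obstacle is the reverse clause of Definition \ref{FRWPSB} in the presence of communication keys, which is precisely what makes (13) harder than its interleaving counterpart: once a prefix has fired, derivatives carry past events $\alpha_i[m]$, and an inserted leading $\tau$ becomes a past event $\tau[m]$ (so $\tau\parallel P$ passes, on firing its $\tau$, to a state containing $\tau[m]$ in parallel with $P$). One must show that reversing such a $\tau[m]$ is a silent move absorbed on both sides — formally that the $\tau[m]$-component only offers $\xtworightarrow{\tau}$ and, by Table \ref{RTRForCom}, cannot reverse-synchronise, so that $\tau[m]\parallel P\xTworightarrow{X[\mathcal{K}]}\tau[m]\parallel P'$ iff $P\xTworightarrow{X[\mathcal{K}]}P'$ — and that it never blocks a reverse step nor enables one the other side cannot mimic. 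Once this key-bookkeeping lemma is isolated, each law reduces to the routine ``add the needed pairs, close downward under the transition rules, check the relevant clause(s)'' argument, which I would carry out uniformly and, as with the analogous strong laws earlier in the paper, otherwise suppress.
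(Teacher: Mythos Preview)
Your plan is sound and considerably more rigorous than the paper's own proof, which proceeds item by item by merely displaying a single matching weak transition on each side (for instance, for item~1 it writes $P\xRightarrow{\alpha}P'$ against $\tau.P\xRightarrow{\alpha}P'$ and concludes from $P'\approx_s^f P'$), without exhibiting an explicit bisimulation relation, discussing closure under derivatives, or analysing the reverse clause in any depth. Your grouping of the laws by structural pattern and reduction of later items to earlier ones is a cleaner organisation than the paper's flat enumeration, and your explicit attention to which clause of Definition~\ref{FRWPSB} is in play for each superscript is something the paper leaves entirely implicit.

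One point to correct: the obstacle you single out for item~(13) rests on a misreading of the $\tau$ rules. By Table~\ref{TRForTAU} in this paper, $\tau\xrightarrow{\tau}\surd$ and $\tau\xtworightarrow{\tau}\surd$; unlike visible actions, $\tau$ does \emph{not} leave behind a past event $\tau[m]$ but terminates to $\surd$. Hence no derivative of the form $\tau[m]\parallel P$ ever arises, and the key-bookkeeping lemma you propose is unnecessary. After the $\tau$ in $\tau\parallel P$ fires you are at $\surd\parallel P$, and the argument closes via the identification of $\surd\parallel P$ with $P$ along the lines of the $P\parallel\textbf{nil}$ static law you already invoke. With that adjustment your scheme goes through; the paper simply does not raise the issue at all.
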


\begin{proof}
Though transition rules in Definition \ref{semantics} are defined in the flavor of single event, they can be modified into a step (a set of events within which each event is pairwise concurrent), we omit them. If we treat a single event as a step containing just one event, the proof of $\tau$ laws does not exist any problem, so we use this way and still use the transition rules in Definition \ref{semantics}.

\begin{enumerate}
  \item $P\approx_s^f \tau.P$. By the forward transition rules of Prefix, we get

  $$\frac{P\xRightarrow{\alpha}P'}{P\xRightarrow{\alpha}P'}
  \quad \frac{P\xRightarrow{\alpha}P'}{\tau.P\xRightarrow{\alpha} P'}$$

  Since $P'\approx_s^f P'$, we get $P\approx_s^f \tau.P$, as desired.
  \item $P\approx_s^r P.\tau$. By the reverse transition rules of Prefix, we get

  $$\frac{P\xTworightarrow{\alpha[m]}P'}{P\xTworightarrow{\alpha[m]}P'}
  \quad \frac{P\xTworightarrow{\alpha[m]}P'}{\tau.P\xTworightarrow{\alpha[m]} P'}$$

  Since $P'\approx_s^r P'$, we get $P\approx_s^r P.\tau$, as desired.
  \item $\alpha.\tau.P\approx_s^f \alpha.P$. By the forward transition rules of Prefix, we get

  $$\frac{}{\alpha.\tau.P\xRightarrow{\alpha}\alpha[m].P}
  \quad \frac{}{\alpha.P\xRightarrow{\alpha}\alpha[m].P}$$

  Since $\alpha[m].P\approx_s^f \alpha[m].P$, we get $\alpha.\tau.P\approx_s^f \alpha.P$, as desired.
  \item $P.\tau.\alpha[m]\approx_s^r P.\alpha[m]$. By the reverse transition rules of Prefix, we get

  $$\frac{}{P.\tau.\alpha[m]\xTworightarrow{\alpha[m]}P.\alpha}
  \quad \frac{}{P.\alpha[m]\xTworightarrow{\alpha[m]}P.\alpha}$$

  Since $P.\alpha\approx_s^r P.\alpha$, we get $P.\tau.\alpha[m]\approx_s^r P.\alpha[m]$, as desired.
  \item $(\alpha_1\parallel\cdots\parallel\alpha_n).\tau.P\approx_s^f (\alpha_1\parallel\cdots\parallel\alpha_n).P$. By the forward transition rules of Prefix, we get

  $$\frac{}{(\alpha_1\parallel\cdots\parallel\alpha_n).\tau.P\xRightarrow{\{\alpha_1,\cdots,\alpha_n\}}(\alpha_1[m]\parallel\cdots\parallel\alpha_n[m]).P}
  \quad \frac{}{(\alpha_1\parallel\cdots\parallel\alpha_n).P\xRightarrow{\{\alpha_1,\cdots,\alpha_n\}}(\alpha_1[m]\parallel\cdots\parallel\alpha_n[m]).P}$$

  Since $(\alpha_1[m]\parallel\cdots\parallel\alpha_n[m]).P\approx_s^f (\alpha_1[m]\parallel\cdots\parallel\alpha_n[m]).P$, we get $(\alpha_1\parallel\cdots\parallel\alpha_n).\tau.P\approx_s^f (\alpha_1\parallel\cdots\parallel\alpha_n).P$, as desired.
  \item $P.\tau.(\alpha_1[m]\parallel\cdots\parallel\alpha_n[m])\approx_s^r P.(\alpha_1[m]\parallel\cdots\parallel\alpha_n[m])$. By the reverse transition rules of Prefix, we get

  $$\frac{}{P.\tau.(\alpha_1[m]\parallel\cdots\parallel\alpha_n[m])\xTworightarrow{\{\alpha_1[m],\cdots,\alpha_n[m]\}}(P.(\alpha_1\parallel\cdots\parallel\alpha_n)}$$
  $$\frac{}{P.(\alpha_1[m]\parallel\cdots\parallel\alpha_n[m])\xTworightarrow{\{\alpha_1[m],\cdots,\alpha_n[m]\}}(P.(\alpha_1\parallel\cdots\parallel\alpha_n)}$$

  Since $P.(\alpha_1\parallel\cdots\parallel\alpha_n)\approx_s^r P.(\alpha_1\parallel\cdots\parallel\alpha_n)$, we get $P.\tau.(\alpha_1[m]\parallel\cdots\parallel\alpha_n[m])\approx_s^r P.(\alpha_1[m]\parallel\cdots\parallel\alpha_n[m])$, as desired.
  \item $P+\tau.P\approx_s^f \tau.P$. By the forward transition rules of Summation, we get

  $$\frac{P\xRightarrow{\alpha}P'}{P+\tau.P\xRightarrow{\alpha}P'+P'}
  \quad \frac{P\xRightarrow{\alpha}P'}{\tau.P\xRightarrow{\alpha} P'}$$

  Since $P'+P'\approx_s^f P'$, we get $P+\tau.P\approx_s^f \tau.P$, as desired.
  \item $P+P.\tau\approx_s^r P.\tau$. By the reverse transition rules of Summation, we get

  $$\frac{P\xTworightarrow{\alpha[m]}P'}{P+P.\tau\xTworightarrow{\alpha[m]}P'+P'}
  \quad \frac{P\xTworightarrow{\alpha[m]}P'}{P.\tau\xTworightarrow{\alpha[m]} P'}$$

  Since $P'+P'\approx_s^r P'$, we get $P+P.\tau\approx_s^r P.\tau$, as desired.
  \item $\alpha.(\tau.(P+Q)+P)\approx_s^f\alpha.(P+Q)$. By the forward transition rules of Prefix and Summation, we get

  $$\frac{}{\alpha.(\tau.(P+Q)+P)\xRightarrow{\alpha}\alpha[m].(P+Q+P)}
  \quad \frac{}{\alpha.(P+Q)\xRightarrow{\alpha} \alpha[m].(P+Q)}$$

  Since $\alpha[m].(P+Q+P)\approx_s^f \alpha[m].(P+Q)$, we get $\alpha.(\tau.(P+Q)+P)\approx_s^f\alpha.(P+Q)$, as desired.
  \item $((P+Q).\tau+P).\alpha[m]\approx_s^r(P+Q).\alpha[m]$. By the reverse transition rules of Prefix and Summation, we get

  $$\frac{}{((P+Q).\tau+P).\alpha[m]\xTworightarrow{\alpha[m]}(P+Q+P).\alpha}
  \quad \frac{}{(P+Q).\alpha[m]\xTworightarrow{\alpha[m]} (P+Q).\alpha}$$

  Since $(P+Q+P).\alpha\approx_s^r (P+Q).\alpha$, we get $((P+Q).\tau+P).\alpha[m]\approx_s^r(P+Q).\alpha[m]$, as desired.
  \item $(\alpha_1\parallel\cdots\parallel\alpha_n).(\tau.(P+Q)+P)\approx_s^f(\alpha_1\parallel\cdots\parallel\alpha_n).(P+Q)$. By the forward transition rules of Prefix and Summation, we get

  $$\frac{}{(\alpha_1\parallel\cdots\parallel\alpha_n).(\tau.(P+Q)+P) \xRightarrow{\{\alpha_1,\cdots,\alpha_n\}}(\alpha_1[m]\parallel\cdots\parallel\alpha_n[m]).(P+Q+P)}$$
  $$\frac{}{(\alpha_1\parallel\cdots\parallel\alpha_n).(P+Q)\xRightarrow{\{\alpha_1,\cdots,\alpha_n\}} (\alpha_1[m]\parallel\cdots\parallel\alpha_n[m]).(P+Q)}$$

  Since $(\alpha_1[m]\parallel\cdots\parallel\alpha_n[m]).(P+Q+P)\approx_s^f (\alpha_1[m]\parallel\cdots\parallel\alpha_n[m]).(P+Q)$, we get $(\alpha_1\parallel\cdots\parallel\alpha_n).(\tau.(P+Q)+P)\approx_s^f(\alpha_1\parallel\cdots\parallel\alpha_n).(P+Q)$, as desired.
  \item $((P+Q).\tau+P).(\alpha_1[m]\parallel\cdots\parallel\alpha_n[m])\approx_s^r(P+Q).(\alpha_1[m]\parallel\cdots\parallel\alpha_n[m])$. By the reverse transition rules of Prefix and Summation, we get

  $$\frac{}{((P+Q).\tau+P).(\alpha_1[m]\parallel\cdots\parallel\alpha_n[m]) \xTworightarrow{\{\alpha_1[m],\cdots,\alpha_n[m]\}}(P+Q+P).(\alpha_1\parallel\cdots\parallel\alpha_n)}$$
  $$\frac{}{(P+Q).(\alpha_1[m]\parallel\cdots\parallel\alpha_n[m])\xTworightarrow{\{\alpha_1[m],\cdots,\alpha_n[m]\}}(P+Q).(\alpha_1\parallel\cdots\parallel\alpha_n)}$$

  Since $(P+Q+P).(\alpha_1\parallel\cdots\parallel\alpha_n)\approx_s^r (P+Q).(\alpha_1\parallel\cdots\parallel\alpha_n)$, we get $((P+Q).\tau+P).(\alpha_1[m]\parallel\cdots\parallel\alpha_n[m])\approx_s^r(P+Q).(\alpha_1[m]\parallel\cdots\parallel\alpha_n[m])$, as desired.
  \item $P\approx_s^{fr} \tau\parallel P$. By the forward transition rules of Composition, we get

  $$\frac{P\xRightarrow{\alpha}P'}{P\xRightarrow{\alpha}P'}
  \quad \frac{P\xRightarrow{\alpha}P'}{\tau\parallel P\xRightarrow{\alpha} P'}$$

  By the reverse transition rules of Composition, we get

  $$\frac{P\xTworightarrow{\alpha}P'}{P\xTworightarrow{\alpha}P'}
  \quad \frac{P\xTworightarrow{\alpha}P'}{\tau\parallel P\xTworightarrow{\alpha} P'}$$

  Since $P'\approx_s^{fr} P'$, we get $P\approx_s^{fr} \tau\parallel P$, as desired.
\end{enumerate}
\end{proof}

\begin{proposition}[$\tau$ laws for weakly FR pomset bisimulation]\label{TAUWPB}
The $\tau$ laws for weakly FR pomset bisimulation is as follows.
\begin{enumerate}
  \item $P\approx_p^f \tau.P$;
  \item $P\approx_p^r P.\tau$;
  \item $\alpha.\tau.P\approx_p^f \alpha.P$;
  \item $P.\tau.\alpha[m]\approx_p^r P.\alpha[m]$;
  \item $(\alpha_1\parallel\cdots\parallel\alpha_n).\tau.P\approx_p^f (\alpha_1\parallel\cdots\parallel\alpha_n).P$;
  \item $P.\tau.(\alpha_1[m]\parallel\cdots\parallel\alpha_n[m])\approx_p^r P.(\alpha_1[m]\parallel\cdots\parallel\alpha_n[m])$;
  \item $P+\tau.P\approx_p^f \tau.P$;
  \item $P+P.\tau\approx_p^r P.\tau$;
  \item $\alpha.(\tau.(P+Q)+P)\approx_s^f\alpha.(P+Q)$;
  \item $((P+Q).\tau+P).\alpha[m]\approx_s^r(P+Q).\alpha[m]$;
  \item $(\alpha_1\parallel\cdots\parallel\alpha_n).(\tau.(P+Q)+P)\approx_s^f(\alpha_1\parallel\cdots\parallel\alpha_n).(P+Q)$;
  \item $((P+Q).\tau+P).(\alpha_1[m]\parallel\cdots\parallel\alpha_n[m])\approx_s^r(P+Q).(\alpha_1[m]\parallel\cdots\parallel\alpha_n[m])$;
  \item $P\approx_p^{fr} \tau\parallel P$.
\end{enumerate}
\end{proposition}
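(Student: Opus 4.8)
The plan is to reduce the pomset statement to the already-established step statement of Proposition~\ref{TAUWSB}, exactly as was done for the static laws (Proposition~\ref{SLSPB}) and the new expansion law (Proposition~\ref{NELSPB}). Recall from Definition~\ref{FRWPSB} that a weakly FR pomset bisimulation is witnessed by weak FR pomset transitions $\xRightarrow{X}$ and $\xTworightarrow{X[\mathcal{K}]}$ whose labels $X$ are pomsets, i.e.\ consistent subsets of events carrying the causal order inherited from the prefix operator $.$; inside such a pomset the events are either pairwise concurrent or ordered by causality. The pairwise-concurrent case is precisely what Proposition~\ref{TAUWSB} already covers (a step is a pomset with empty causal order), so the only genuinely new situation is a pomset containing a nontrivial causal chain.

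For that case I would argue as in the earlier pomset reductions: without loss of generality take the generating pomset $p=\{\alpha,\beta:\alpha.\beta\}$, and observe that a weak FR pomset transition labeled by $p$ factors as $\xRightarrow{p}=\xRightarrow{\alpha}\xRightarrow{\beta}$ and, on the reverse side, $\xTworightarrow{p[\mathcal{K}]}=\xTworightarrow{\beta[n]}\xTworightarrow{\alpha[m]}$. Hence each of the thirteen equations can be checked by applying the corresponding single-event weak transition rules twice and invoking the step-bisimulation argument of Proposition~\ref{TAUWSB} at each stage, the witnessing relation being the same one used there, closed under this two-step factorization. Items (1)--(12) concern Prefix and Summation and go through verbatim in this way, using the $\tau$-absorption built into $\xRightarrow{\cdot}$ and $\xTworightarrow{\cdot}$ to discard the inserted silent events; item (13), $P\approx_p^{fr}\tau\parallel P$, is handled separately by the forward and reverse Composition rules for the idle $\tau$-component, just as in Proposition~\ref{TAUWSB}(13), and is insensitive to whether the observation is a step or a pomset.

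The main obstacle, and the only point needing care beyond bookkeeping, is checking that the causal factorization of a weak pomset transition is compatible with the way $\tau$'s are inserted and absorbed: one must ensure that absorbing a silent step sitting causally between $\alpha$ and $\beta$ (as in laws such as $\alpha.\tau.P\approx_p^f\alpha.P$ or $\alpha.(\tau.(P+Q)+P)\approx_p^f\alpha.(P+Q)$) does not disturb the isomorphism type of the remaining pomset $\hat{X}$. Since $\widehat{\tau^*}=\epsilon$ and the causal order on $\hat{X}$ is exactly the restriction to non-silent events of the order on $X$, this holds and the verification is routine; I would therefore state the reduction, carry out one representative causal case, and omit the remaining thirteen-fold repetition, as the paper does for the analogous propositions.
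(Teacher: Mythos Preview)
Your proposal is correct and follows essentially the same approach as the paper: reduce to the step case of Proposition~\ref{TAUWSB} by observing that a weak FR pomset transition factors causally as $\xRightarrow{p}=\xRightarrow{\alpha}\xRightarrow{\beta}$ (and reversely $\xTworightarrow{p[\mathcal{K}]}=\xTworightarrow{\beta[n]}\xTworightarrow{\alpha[m]}$), so only the causal-chain case is new and is handled by iterating the single-event argument. Your additional remark about $\widehat{\tau^*}=\epsilon$ ensuring the pomset isomorphism type is preserved is a welcome clarification that the paper itself omits.
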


\begin{proof}
From the definition of weakly FR pomset bisimulation $\approx_{p}^{fr}$ (see Definition \ref{FRWPSB}), we know that weakly FR pomset bisimulation $\approx_{p}^{fr}$ is defined by weakly FR pomset transitions, which are labeled by pomsets with $\tau$. In a weakly FR pomset transition, the events in the pomset are either within causality relations (defined by $.$) or in concurrency (implicitly defined by $.$ and $+$, and explicitly defined by $\parallel$), of course, they are pairwise consistent (without conflicts). In Proposition \ref{TAUWSB}, we have already proven the case that all events are pairwise concurrent, so, we only need to prove the case of events in causality. Without loss of generality, we take a pomset of $p=\{\alpha,\beta:\alpha.\beta\}$. Then the weakly forward pomset transition labeled by the above $p$ is just composed of one single event transition labeled by $\alpha$ succeeded by another single event transition labeled by $\beta$, that is, $\xRightarrow{p}=\xRightarrow{\alpha}\xRightarrow{\beta}$ and $\xTworightarrow{p[\mathcal{K}]}=\xTworightarrow{\beta[n]}\xTworightarrow{\alpha[m]}$.

Similarly to the proof of $\tau$ laws for weakly FR step bisimulation $\approx_{s}^{fr}$ (Proposition \ref{TAUWSB}), we can prove that $\tau$ laws hold for weakly FR pomset bisimulation $\approx_{p}^{fr}$, we omit them.
\end{proof}

\begin{proposition}[$\tau$ laws for weakly FR hp-bisimulation]\label{TAUWHPB}
The $\tau$ laws for weakly FR hp-bisimulation is as follows.
\begin{enumerate}
  \item $P\approx_{hp}^f \tau.P$;
  \item $P\approx_{hp}^r P.\tau$;
  \item $\alpha.\tau.P\approx_{hp}^f \alpha.P$;
  \item $P.\tau.\alpha[m]\approx_{hp}^r P.\alpha[m]$;
  \item $(\alpha_1\parallel\cdots\parallel\alpha_n).\tau.P\approx_{hp}^f (\alpha_1\parallel\cdots\parallel\alpha_n).P$;
  \item $P.\tau.(\alpha_1[m]\parallel\cdots\parallel\alpha_n[m])\approx_{hp}^r P.(\alpha_1[m]\parallel\cdots\parallel\alpha_n[m])$;
  \item $P+\tau.P\approx_{hp}^f \tau.P$;
  \item $P+P.\tau\approx_{hp}^r P.\tau$;
  \item $\alpha.(\tau.(P+Q)+P)\approx_s^f\alpha.(P+Q)$;
  \item $((P+Q).\tau+P).\alpha[m]\approx_s^r(P+Q).\alpha[m]$;
  \item $(\alpha_1\parallel\cdots\parallel\alpha_n).(\tau.(P+Q)+P)\approx_s^f(\alpha_1\parallel\cdots\parallel\alpha_n).(P+Q)$;
  \item $((P+Q).\tau+P).(\alpha_1[m]\parallel\cdots\parallel\alpha_n[m])\approx_s^r(P+Q).(\alpha_1[m]\parallel\cdots\parallel\alpha_n[m])$;
  \item $P\approx_{hp}^{fr} \tau\parallel P$.
\end{enumerate}
\end{proposition}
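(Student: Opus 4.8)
The plan is to follow exactly the template already used for Proposition~\ref{TAUWPB} and for the strongly FR hp-bisimulation results (Propositions~\ref{SLSHPB} and~\ref{NELSHPB}): reduce the hp-case to the step/pomset case established in Proposition~\ref{TAUWSB}, and then add only the bookkeeping that hp-bisimulation requires, namely the isomorphism component of the weakly posetal product.

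First I would recall from Definition~\ref{FRWHHPB} that a weak FR hp-bisimulation is a weakly posetal relation $R\subseteq\mathcal{C}(\mathcal{E}_1)\overline{\times}\mathcal{C}(\mathcal{E}_2)$ whose triples have the form $(C_1,f,C_2)$ with $f:\hat{C_1}\to\hat{C_2}$ an isomorphism; the point to stress is that $f$ is defined on the configurations with the silent events removed, so by the convention $f(\tau^*)=f(\tau^*)$ the $\tau$-events contribute nothing to $f$. Then, for each of the thirteen laws, I would reuse verbatim the weak single-event transition derivations already displayed in the proof of Proposition~\ref{TAUWSB}; this is legitimate because, exactly as argued for Proposition~\ref{TAUWPB}, an hp-transition proceeds one event at a time and a pomset carrying causality $p=\{\alpha,\beta:\alpha.\beta\}$ decomposes as $\xRightarrow{p}=\xRightarrow{\alpha}\xRightarrow{\beta}$, with the reverse direction $\xTworightarrow{p[\mathcal{K}]}=\xTworightarrow{\beta[n]}\xTworightarrow{\alpha[m]}$. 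The only thing to add is that whenever the left-hand process performs $C_1\xRightarrow{\alpha}C_1'$ the right-hand process matches with $C_2\xRightarrow{\alpha}C_2'$ and we extend $f$ to $f[\alpha\mapsto\alpha]$ (and symmetrically to $f'[\alpha[m]\mapsto\alpha[m]]$ for the reverse clauses), after which one checks that $(C_1',f[\alpha\mapsto\alpha],C_2')$ again lies in the candidate relation; starting from $(\emptyset,\emptyset,\emptyset)\in\approx_{hp}^{fr}$, this delivers $\approx_{hp}^{fr}$ for each law.

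The reason this stays routine is that every law in the list merely inserts or deletes $\tau$-prefixes (or a parallel $\tau$), so along matched runs the visible-event parts of the two configurations remain literally identical; hence the identity map on $\hat{C_1}=\hat{C_2}$ always serves as the required isomorphism and its extensions $f[\alpha\mapsto\alpha]$ are trivially isomorphisms. The main---and essentially only---obstacle to watch is the weak-transition bookkeeping in laws~(9)--(12), where a $\tau$ is absorbed into a choice: there one must ensure that the $\xrightarrow{\tau^*}$ padding of a weak transition on one side can be mirrored, possibly by the empty $\tau^*$, on the other side while leaving $f$ unchanged, which is precisely what the definitions $\xRightarrow{e}\triangleq\xrightarrow{\tau^*}\xrightarrow{e}\xrightarrow{\tau^*}$ and $f(\tau^*)=f(\tau^*)$ are arranged to allow. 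I would therefore write out the argument explicitly only for a representative law such as~(9) together with its reverse mate~(10), and for~(13), and then note that the remaining cases go through identically, in parallel with the treatment in Proposition~\ref{TAUWPB}.
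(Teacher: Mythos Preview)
Your proposal is correct and follows essentially the same approach as the paper: recall that weak FR hp-bisimulation lives on the weakly posetal product $(C_1,f,C_2)$ with $f:\hat{C_1}\to\hat{C_2}$ an isomorphism, start from $(\emptyset,\emptyset,\emptyset)$, extend $f$ by $f[\alpha\mapsto\alpha]$ (respectively $f[\alpha[m]\mapsto\alpha[m]]$) along matched transitions, and otherwise defer to the proof of Proposition~\ref{TAUWPB}. The paper's proof is in fact even terser---it omits your explicit remarks about the identity isomorphism and the $\tau^*$-padding in laws~(9)--(12)---but the template is identical.
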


\begin{proof}
From the definition of weakly FR hp-bisimulation $\approx_{hp}^{fr}$ (see Definition \ref{FRWHHPB}), we know that weakly FR hp-bisimulation $\approx_{hp}^{fr}$ is defined on the weakly posetal product $(C_1,f,C_2),f:\hat{C_1}\rightarrow \hat{C_2}\textrm{ isomorphism}$. Two processes $P$ related to $C_1$ and $Q$ related to $C_2$, and $f:\hat{C_1}\rightarrow \hat{C_2}\textrm{ isomorphism}$. Initially, $(C_1,f,C_2)=(\emptyset,\emptyset,\emptyset)$, and $(\emptyset,\emptyset,\emptyset)\in\approx_{hp}^{fr}$. When $P\xRightarrow{\alpha}P'$ ($C_1\xRightarrow{\alpha}C_1'$), there will be $Q\xRightarrow{\alpha}Q'$ ($C_2\xRightarrow{\alpha}C_2'$), and we define $f'=f[\alpha\mapsto \alpha]$. And when $P\xTworightarrow{\alpha[m]}P'$ ($C_1\xTworightarrow{\alpha[m]}C_1'$), there will be $Q\xTworightarrow{\alpha[m]}Q'$ ($C_2\xTworightarrow{\alpha[m]}C_2'$), and we define $f'=f[\alpha[m]\mapsto \alpha[m]]$. Then, if $(C_1,f,C_2)\in\approx_{hp}^{fr}$, then $(C_1',f',C_2')\in\approx_{hp}^{fr}$.

Similarly to the proof of $\tau$ laws for weakly FR pomset bisimulation (Proposition \ref{TAUWPB}), we can prove that $\tau$ laws hold for weakly FR hp-bisimulation, we just need additionally to check the above conditions on weakly FR hp-bisimulation, we omit them.
\end{proof}

\begin{proposition}[$\tau$ laws for weakly FR hhp-bisimulation]\label{TAUWHHPB}
The $\tau$ laws for weakly FR hhp-bisimulation is as follows.
\begin{enumerate}
  \item $P\approx_{hhp}^f \tau.P$;
  \item $P\approx_{hhp}^r P.\tau$;
  \item $\alpha.\tau.P\approx_{hhp}^f \alpha.P$;
  \item $P.\tau.\alpha[m]\approx_{hhp}^r P.\alpha[m]$;
  \item $(\alpha_1\parallel\cdots\parallel\alpha_n).\tau.P\approx_{hhp}^f (\alpha_1\parallel\cdots\parallel\alpha_n).P$;
  \item $P.\tau.(\alpha_1[m]\parallel\cdots\parallel\alpha_n[m])\approx_{hhp}^r P.(\alpha_1[m]\parallel\cdots\parallel\alpha_n[m])$;
  \item $P+\tau.P\approx_{hhp}^f \tau.P$;
  \item $P+P.\tau\approx_{hhp}^r P.\tau$;
  \item $\alpha.(\tau.(P+Q)+P)\approx_s^f\alpha.(P+Q)$;
  \item $((P+Q).\tau+P).\alpha[m]\approx_s^r(P+Q).\alpha[m]$;
  \item $(\alpha_1\parallel\cdots\parallel\alpha_n).(\tau.(P+Q)+P)\approx_s^f(\alpha_1\parallel\cdots\parallel\alpha_n).(P+Q)$;
  \item $((P+Q).\tau+P).(\alpha_1[m]\parallel\cdots\parallel\alpha_n[m])\approx_s^r(P+Q).(\alpha_1[m]\parallel\cdots\parallel\alpha_n[m])$;
  \item $P\approx_{hhp}^{fr} \tau\parallel P$.
\end{enumerate}
\end{proposition}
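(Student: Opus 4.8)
The plan is to follow the same reduction strategy used for the preceding three $\tau$-law propositions. First I would recall, from Definition~\ref{FRWHHPB}, that a weakly FR hhp-bisimulation is by definition a downward closed weakly FR hp-bisimulation, so that $\approx_{hhp}^{fr}\subseteq\approx_{hp}^{fr}$; consequently all the transition-matching work has already been carried out in Proposition~\ref{TAUWHPB} (which in turn rests on Proposition~\ref{TAUWSB}), and the only genuinely new obligation is to exhibit, for each of the thirteen laws, a witnessing weakly posetal relation that is downward closed.

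Next, for each law I would take the witnessing relation already constructed for the weakly FR hp-case and argue that it can be chosen downward closed. The key observation is that, since $\tau$ can be neither restricted nor relabelled, the rules in Table~\ref{TRForTAU} only insert or delete silent events, so the weakly posetal triples $(C_1,f,C_2)$ appearing on the two sides of each law differ only by silent events and by the administrative choice-resolution steps forced by $+$ occurring under a prefix. Passing to a pointwise-smaller triple $(C_1',f',C_2')$ then amounts to undoing forward moves (or redoing reverse moves) on both components simultaneously, and the relations built in Proposition~\ref{TAUWHPB} are already closed under exactly these moves. I would write this out once in detail, for instance for $P\approx_{hhp}^f\tau.P$ and its reverse companion $P\approx_{hhp}^r P.\tau$, and then state that the remaining cases follow verbatim, with the congruence of Prefix, Summation, Composition, Restriction and Relabelling for weakly FR hhp-bisimulation (recorded at the start of Section~\ref{wftcb}) invoked wherever a proper subprocess has to be replaced by a $\approx_{hhp}^{fr}$-equivalent one.

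The main obstacle I anticipate is precisely this downward-closure check, and within it the laws where a summation sits under a (possibly multi-event) prefix, namely items (9)--(12): $\alpha.(\tau.(P+Q)+P)\approx_{hhp}^f\alpha.(P+Q)$ and its reverse and parallel-prefix variants. There, after the single initial forward (resp.\ final reverse) move the two processes become $\tau.(P+Q)+P$ and $P+Q$ (resp.\ their reverses), and one must verify that every downward-closed sub-triple of the natural relation between these is still related; this uses that $P+Q+P\approx_{hhp}^{fr}P+Q$ together with the fact that the spurious $P$-summand generates no extra configuration reachable by the weak transitions at issue. All the remaining bookkeeping is identical in form to the computations in Propositions~\ref{TAUWSB}--\ref{TAUWHPB}, so I would omit it.
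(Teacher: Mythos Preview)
Your proposal is correct and follows essentially the same approach as the paper: recall from Definition~\ref{FRWHHPB} that weakly FR hhp-bisimulation is downward closed weakly FR hp-bisimulation, then reduce to Proposition~\ref{TAUWHPB} with the additional downward-closure obligation. The paper's own proof is in fact even terser than yours---it simply invokes the definition, states that the argument is similar to the hp-case, and omits all details---so your elaboration of the downward-closure check (in particular for items (9)--(12)) goes beyond what the paper provides.
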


\begin{proof}
From the definition of weakly FR hhp-bisimulation (see Definition \ref{FRWHHPB}), we know that weakly FR hhp-bisimulation is downward closed for weakly FR hp-bisimulation.

Similarly to the proof of $\tau$ laws for weakly FR hp-bisimulation (see Proposition \ref{TAUWHPB}), we can prove that the $\tau$ laws hold for weakly FR hhp-bisimulation, we omit them.
\end{proof}

\section{Conclusions}{\label{con}}

We design a reversible version of truly concurrent process algebra CTC \cite{CTC}. It has good properties modulo several kinds of strongly FR truly concurrent bisimulations and weakly FR truly concurrent bisimulations. These properties include monoid laws, static laws, new expansion law for strongly FR truly concurrent bisimulations, $\tau$ laws for weakly FR truly concurrent bisimulations, and congruences for strongly and weakly FR truly concurrent bisimulations. It can be used in verification of computer systems with a truly concurrent and reversible flavor.

\newpage

%

\label{lastpage}

\end{document}